\theoremstyle{plain}
\newtheorem{theorem}{Theorem}
\newtheorem{proposition}[theorem]{Proposition}
\newtheorem{corollary}[theorem]{Corollary}
\newtheorem{lemma}[theorem]{Lemma}
\theoremstyle{definition}
\newtheorem{definition}{Definition}
\theoremstyle{remark}
\newtheorem{remark}{Remark}
\newtheorem{example}{Example}
\newcommand{\eqdef}{\stackrel{\text{def}}{=}}
\newcommand{\N}{\mathbb{N}}
\newcommand{\Z}{\mathbb{Z}}
\newcommand{\Q}{\mathbb{Q}}
\newcommand{\R}{\mathbb{R}}
\newcommand{\F}{\mathbb{F}}
\newcommand{\mC}{\mathcal{C}}
\newcommand{\lii}{[\![}
\newcommand{\blii}{\left[\!\!\left[}
\newcommand{\rii}{]\!]}
\newcommand{\brii}{\right]\!\!\right]}
\begin{document}

\title{High dimensional affine codes whose square has a designed minimum distance\thanks{Partially supported by the Spanish Ministry of Economy/FEDER: grants MTM2015-65764-C3-1-P, MTM2015-65764-C3-2-P, MTM2015-69138-REDT, MTM2016-78881-P, MTM2016-80659-P, and RYC-2016-20208 (AEI/FSE/UE), and Junta de CyL (Spain): grant VA166G18.}}
%%\subtitle{Do you have a subtitle?\\ If so, write it here}
%
%\titlerunning{High dimensional affine codes whose square has a designed minimum distance}        % if too long for running head
%
\author[1]{Ignacio Garc\'ia-Marco}
\author[1]{Irene M\'arquez-Corbella}
\author[2]{Diego Ruano}

\affil[1]{
Departamento de Matem\'aticas, Estad\'istica e I.O., Universidad de La Laguna, 38200 La Laguna, Tenerife, Spain. Email: \href{mailto:iggarcia@ull.es}{iggarcia@ull.es} and 
 \href{mailto:imarquec@ull.es}{imarquec@ull.es} 
}

%\affil[2]{
%Departamento de Matem\'aticas, Estad\'istica e I.O., Universidad de La Laguna, 38200 La Laguna, Tenerife, Spain
%}

\affil[2]{IMUVA-Mathematics Research Institute, Universidad de Valladolid, 47011 Valladolid, Spain. Email: \href{mailto:diego.ruano@uva.es}{diego.ruano@uva.es}}
%
%%\authorrunning{Short form of author list} % if too long for running head
%
%\institute{Ignacio Garc\'ia-Marco  \at
%              Departamento de Matem\'aticas, Estad\'istica e I.O., Universidad de La Laguna, 38200 La Laguna, Tenerife, Spain.
%              \email{iggarcia@ull.es} % \\
%                     %  \\
%           \and
%           Irene M\'arquez-Corbella \at
%              Departamento de Matem\'aticas, Estad\'istica e I.O., Universidad de La Laguna, 38200 La Laguna, Tenerife, Spain.
%              \email{imarquec@ull.es}           %  \\
%           \and
%           Diego Ruano \at
%        IMUVA-Mathematics Research Institute, Universidad de Valladolid, 47011 Valladolid, Spain.
%     \email{diego.ruano@uva.es }           %  \\
%}

\maketitle

\begin{abstract}
Given a linear code $\mathcal{C}$, its square code $\mathcal{C}^{(2)}$ is the span of all component-wise products of two elements of $\mathcal{C}$. Motivated by applications in multi-party computation, our purpose with this work is to answer the following question: which families of affine variety codes have simultaneously high dimension $k(\mathcal{C})$ and high minimum distance of $\mathcal{C}^{(2)}$, $d(\mathcal{C}^{(2)})$? More precisely, given a designed minimum distance $d$ we compute an affine variety code $\mathcal{C}$ such that $d(\mathcal{C}^{(2)})\geq d$ and that the dimension of $\mathcal{C}$ is high. The best construction that we propose comes from hyperbolic codes when $d\ge q$ and from weighted Reed-Muller codes otherwise.

%\keywords{Affine variety codes \and Multi-party computation \and Square codes \and Schur product of codes \and Minkowski sum \and convex set}
% \PACS{PACS code1 \and PACS code2 \and more}
%\subclass{94B05 \and 94B75}
\end{abstract}

\keywords{Affine variety codes \and Multi-party computation \and Square codes \and Schur product of codes \and Minkowski sum \and convex set}
\PACS{94B05 \and 94B75}

\section{Introduction}

Multi-party computation studies the case where a group of persons, each holding an input for a function, wants to compute the output of it, without having each individual reveal his or her input to the other parties. Multi-party computation is possible from secret sharing schemes \cite{ivan}, and hence from coding theory. From now on, given a linear code $\mC$, the dimension of $\mC$ will be denoted by $k(\mC)$ and its minimum distance by $d(\mC)$. Moreover, if $\mC$ is a linear code over $\mathbb F_q$ of length $n$, dimension $k$ and minimum distance $d$, we call $[n,k,d]_q$ the parameters of $\mC$.

One of the best known protocols is MiniMac \cite{minimac}, which evaluates boolean circuits, and its successor TinyTable \cite{minimac2}. These methods use a linear code $\mathcal{C}$, which should prevent cheating. The probability that a cheating player is caught depends on the minimum distance of $\mathcal{C}*\mathcal{C}=\mathcal{C}^{(2)}$, the square code of linear code \cite{Hugues:2015}, meaning that a high distance on the square will give a higher security. Simultaneously, it would be beneficial to have a code $\mathcal{C}$ with high rate to reduce the communications cost. Therefore, it is desirable to optimize both parameters: $d(\mC^{(2)})$ and $k(\mC)$.

Although, in this article we are more interested in the application of the schur product to the area of secure multiparty computation, this operation has other applications. For example, component-wise products of linear codes have been used to decode linear codes \cite{pellikaan:1992,pellikaan:1996} where it is shown that a linear code of length $n$ with a $t$-error correcting pair has a decoding algorithm which corrects up to $t$ errors with complexity $\mathcal O \left(n^3\right)$. Moreover, the schur product is also used for cryptanalytic applications against the McEliece cryptosystem \cite{CGVO:2013,COT:2014,CMP:2017,marquez:2014,Wieschebrink:2010}, which rely on two assumptions: the generic decoding is hard on average and it is hard to distinguish the public key (a generator matrix of a code $\mC$ with a certain structure) from a random matrix. For a summary of these applications and some others, see \cite[\S 4]{Hugues:2015}.

These applications show the importance of finding linear codes, where both the code itself and the square have good parameters. Choosing a random linear code, with dimension linear in the length, will, with high probability, give a reasonable minimum distance, however, this does not hold for the square code \cite{cascudo1}. Hence, constructing good square codes is a difficult problem. Nevertheless, good square codes exist, since there exists an asymptotic family of codes with the previous property \cite{lona2}. The best binary construction available in the literature is the one in \cite{cascudo2} obtained from cyclic codes, but their constellation is quite limited. A larger constellation can be found at  \cite{jaron}.

Another family of codes that have been proposed for obtaining codes with a good square are \emph{Reed-Muller codes} \cite{Hugues:2015}. Reed-Muller codes can be understood as affine variety codes when one considers the ideal $I=(0)$, i.e. when one evaluates multivariate polynomials ($m$ variables) at all the points of $\mathbb{F}_q^m$. We will restrict our attention to this case of affine variety codes in this article. One has the footprint bound \cite{footprint} for estimating the minimum distance of this family of codes. The family of \emph{hyperbolic codes} \cite{GH01} was introduced to maximize the dimension of an affine variety code given a designed minimum distance from the footprint bound. In particular, a hyperbolic code has a dimension greater than or equal to a Reed-Muller code with the same minimum distance. Hence, it is natural to consider hyperbolic codes for obtaining codes where both the dimension of the code and the minimum distance of the square code are good.

Given $d \in \mathbb{Z}^+$, in this work we propose a method to obtain an affine variety code $\mC$ satisfying that $d(\mC^{(2)}) \geq d$ and such that $k(\mC)$ is considerably high. Our method receives as input a value $d \in \mathbb{Z}^+$ and starts by considering an affine code $\mC_B$ associated with a set $B\subseteq \mathbb N^m$ such that $d(\mC_B) \geq d$, say for example, a hyperbolic code with minimum distance at least $d$. Then, by means of convexity arguments, we build a set $A \subset \mathbb{N}^m$ such that the Minkowski sum $A+A$ is contained in $B$. The latter condition implies that $d(\mC_A^{(2)}) \geq d(\mC_B) \geq d$.
Remarkably, the best candidate for the set $A\subseteq \mathbb N^m$ is not always the one related with a hyperbolic code. Indeed, when the value of the designed minimum distance $d$ is small enough, $d<q$, we prove that there exist certain \emph{weighted Reed-Muller codes} that outperform hyperbolic codes.

Additionally, if the minimum distance of the dual of $\mathcal{C}$ and $d(\mathcal{C}^{(2)})$ are greater than or equal to $t+2$, then $\mathcal{C}$ can be used to construct a $t$-strongly multiplicative secret sharing scheme (SSS). Such a SSS is enough to construct an information theoretic secure secret sharing scheme if at most $t$ players are corrupted \cite{strong1,strong2,strong3}.  This application shows the importance of finding linear codes where $d(\mathcal{C}^\bot)$ is also high relative to the length of the code, where $\mC^{\bot}$ is the dual code of $\mC$. Although in this work we have not focused in maximizing $d(\mathcal{C}^\bot)$ (this is also the case of other articles in the literature as \cite{cascudo2,jaron}), we note that for the affine variety codes considered in this article, the dual of $\mathcal{C}$ is again an affine variety code that can be easily constructed by \cite[Proposition 1]{fercarlos}. Moreover, its minimum distance can also be estimated using the footprint bound.

\subsection*{Outline of the article}

Section \S\ref{Section2} presents the notation used in the article and review some of the standard facts on affine variety codes, in particular, it provides a detailed exposition of the footprint bound, a lower bound for their minimum distance. We also describe some well known examples of affine variety codes which will be essential throughout the article, as Reed-Muller codes, weighted Reed-Muller codes and hyperbolic codes. We end this section with an original result that indicates, in the case of two variables, when the hyperbolic code has strictly higher dimension than a Reed-Muller code with the same minimum distance. Moreover, one can find in the appendix some results (some of them well known) that show when the footprint bound is sharp. We emphasize that Lemma \ref{FB-1} and Lemma \ref{FB-2} have been used in the proof of some results in the article.

Next, in Section \S\ref{Section3} we look more closely at the operation of Schur product of affine variety codes and its relation with the Minkowski sum. Moreover, we present the key result of the article that allows us to establish a strategy to construct affine variety codes whose square code has a designed minimum distance $d$, this strategy is outlined in Algorithm \ref{Algor-1}. That is, given $d\in \mathbb N$ we construct an affine variety code $\mC$ such that $d(\mC^{(2)})\geq d$.

In section \S\ref{Section4} we will be more ambitious, this section contains the main results of the article. If our goal till this section was to obtain an affine code $\mC$ whose square code has designed minimum distance $d$ i.e. $d(\mC^{(2)})\geq d$, throughout Section \S\ref{Section4} our additional goal is providing a code $\mC$ that has also high minimum distance. It seems natural to expect that a code coming from a hyperbolic code  will be the best candidate for our new goal. We have called this type of codes \emph{half hyperbolic codes} and they have been studied in detail in Section \S\ref{Section4.1}.
Surprisingly, \emph{half hyperbolic codes} are not always the best option. Indeed, we prove in Section \S\ref{Section4.2} that, when the value of the designed minimum distance $d$ is small enough, there exist certain weighted Reed-Muller codes that outperform half hyperbolic codes. That is, when $d$ is small enough, then there are weighted Reed-Muller codes $\mathcal D$ whose square has the same designed minimum distance $d$ than the corresponding  half hyperbolic code $\mC$ (i.e. $d(\mC^{(2)})$, $d(\mathcal D^{(2)})\geq d$) and such that $k(\mathcal D) > k(\mC)$.

\section{Affine variety codes}
\label{Section2}
Let us start this section with a brief summary on affine varieties, polynomials and ideals to set up notation and terminology. For a fuller treatment we refer the reader to \cite{Cox-05,Cox-07}.

Let $k[X_1, \ldots, X_m]$ be a ring of polynomials over a field $k$ and consider a monomial ordering $\succ$ on $k[X_1,\ldots,X_m]$. For a polynomial $f\in k[X_1, \ldots, X_m]$ we denote by $\mathrm{in}(f)$ its {\it leading term} with respect to $\succ$, that is, the largest monomial that occurs in $f$. For any ideal $I\subseteq k[X_1, \ldots, X_m]$ we denote by $\mathrm{in}(I)$ its {\it initial ideal}, which is $\mathrm{in}(I) = \left\langle \mathrm{in}(f) \mid f\in I\right \rangle.$
The \emph{radical ideal} of $I$, denoted $\sqrt{I}$, is the ideal  $\sqrt{I} = \left\{f \mid f^m \in I \hbox{ for some integer }m\geq 1\right\}.$ We say that $I$ is a {\it radical ideal} if $I = \sqrt{I}$.

Let us recall some basics on the correspondence between ideals and varieties. Given an affine variety $V \subseteq k^m$ we can define the ideal of all polynomials vanishing on $V$, i.e.
$$\mathcal I(V) = \left\{ f \in k[X_1, \ldots, X_m] \mid  f(x) = 0 \hbox{ for all } x\in V \right\}.$$
Conversely, given an ideal $I\subseteq k[X_1, \ldots, X_m]$ we can define the affine variety
$$\mathcal V(I) = \left\{ x\in k^m \mid f(x) = 0 \hbox{ for all } f \in I\right\}.$$

Hilbert's Nullstellensatz (see, e.g., \cite[Theorem 6]{Cox-07}) states that if $k$ is algebraically closed and $I$ is an ideal in $k[X_1, \ldots, X_m]$, then $\mathcal I (\mathcal V(I)) = \sqrt{I}$. In particular, this implies that if we restrict to radical ideals, then the above maps are inverses of each other and we have a one-to-one correspondence between affine varieties and radical ideals. 

Let $K$ be the algebraic closure of $k$ and let $I$ be a zero-dimensional ideal, we define the quotient ring $R = K[X_1, \ldots, X_m]/I$. Then \cite[Theorem 2.10]{Cox-05} shows that the dimension of $R$ as a $K$-vector space gives a bound on the number of points in $\mathcal V(I)$. 
That is, 
$$\dim_{K}(R)\geq \# \mathcal V(I) \hbox{, with equality if and only if }I \hbox{ is a radical ideal;}$$
where $\# A$ denotes the cardinality of the set $A$.

Notice that $I$ is a zero-dimensional ideal if and only if $\mathcal V(I)$ is a finite set, that is $\mathcal V(I) = \left\{ P_1, \ldots, P_n\right\}$. The key idea to prove this result is to show that the evaluation map $\varphi$ defined as
\begin{equation} \label{eq:epimorphism}\begin{array}{cccc} 
\varphi: & K[X_1,\ldots,X_m] & \longrightarrow & K^n \\
& f & \mapsto & \left(f(P_1), \ldots, f(P_n)\right) 
\end{array} \end{equation}
is an epimorphism of $K$-vector spaces and ${\rm Ker}(\varphi) = \mathcal I(\mathcal V (I)) = \sqrt{I}$.

Although we have introduced all the results for an arbitrary field, from now on we will work with the finite field with $q$ elements, denoted as $\mathbb F_q$.

Let $I\subseteq \mathbb F_q[X_1, \ldots, X_n]$ be an ideal, we define the ideal $I_q$ related to $I$ as
$$I_q = I + \left\langle X_1^q-X_1, \ldots, X_m^q-X_m\right\rangle \subseteq \mathbb F_q[X_1, \ldots, X_m].$$
It is easy to check that $I_q$ is radical as consequence of Seidenberg's Lemma (because $I_q$ contains a univariate square free polynomial in each of the $m$-variables). Moreover, the points of the affine variety defined by $I_q$ (over the algebraic closure of $\mathbb F_q$) are the $\mathbb F_q$-rational points of the affine variety defined by $I$. That is, 
$$\mathcal V_{\overline{\mathbb F_q}}(I_q) = \mathcal V_{\mathbb F_q}(I_q) = \mathcal V_{\mathbb F_q}(I) = \left\{ P_1, \ldots, P_n\right\}.$$
where $\overline{\mathbb F_q}$ denotes the algebraic closure of $\mathbb F_q$.

Now we consider the quotient ring 
$R_I = \mathbb F_q[X_1, \ldots, X_m]/I_q$ and denote $\mathcal P = V_{\mathbb F_q}(I)= \left\{ P_1, \ldots, P_n\right\}$. By (\ref{eq:epimorphism}), the following evaluation map at the points of $\mathcal P$ is an isomorphism of $\mathbb F_q$-vector spaces:
\begin{equation} \label{eq:isomorphism} \begin{array}{cccc} \mathrm{ev_{\mathcal P}:} & R_I & \longrightarrow & \mathbb F_q^n\\
& f+I_q & \longmapsto & (f(P_1), \ldots, f(P_n)).\end{array} \end{equation}

\begin{definition}
Let $I_q$ and $R_q$ be defined as before and let $L$ be an $\mathbb F_q$-vector subspace of $R_q$ we define the affine variety code $C(I,L)$ as the image of $L$ under the evaluation map $\mathrm{ev}_{\mathcal P}$. That is:
$$C(I,L) = \mathrm{ev}_{\mathcal P}(L) = \left\{ \mathrm{ev}_{\mathcal P}(f+I_q) \mid f + I_q \in L\right\}.$$
\end{definition}

It is clear that $C(I,L)$ has $G=\left(f_i(P_j) \mid i=1, \ldots, k \hbox{ , } j=1, \ldots, n\right)$ as generator matrix where $\{f_1, \ldots, f_k\}$ form a basis of $L$.

\begin{example}
Let $I=\left\langle X^{q-1}-1\right\rangle \subseteq \mathbb F_q[X]$. Then, $I_q=I$ and $\mathcal V_{\mathbb F_q}(I) = \mathbb F_q^*$. Consider $L = \left\langle 1, X, \ldots, X^{k-1}\right\rangle$, then $C(I,L)$ is the \emph{Reed-Solomon code} of dimension $k$ over $\mathbb F_q$, denoted as $\mathrm{RS}_q(k)$. Moreover, if we set $I=(0)$, then $\mathcal V_{\mathbb F_q}(I) = \mathbb F_q$ and $C(I,L)$ is the \emph{extended Reed-Solomon code} of dimension $k$.
\end{example}

\begin{example}
Let $I=(0)\subseteq \mathbb F_q[X_1, \ldots, X_m]$. Then $I_q  = \left\langle X_1^q-X_1, \ldots, X_m^q-X_m\right\rangle$ and  $\mathcal V_{\mathbb F_q}(I) = \mathbb F_q^m$. If we take 
$L=\left\{ f \in \mathbb F_q[X_1, \ldots, X_m] \mid \mathrm{deg}(f) < s\right\}$, then $C(I,L)$ is the \emph{$q$-ary Reed-Muller code of degree $s$ in $m$ variables}, denoted as $\mathrm{RM}_q(s,m)$.
\end{example}

The reader may have already realized that some of the well-known classes of evaluation codes can be viewed as affine variety codes. Moreover in \cite[Proposition 1.4]{F98} it is  proved that every $\mathbb F_q$-linear code $\mathcal C$ may be represented as an affine variety code over $\mathbb F_{q^s}$ where we have to choose $s$ so that $q^s$ is greater than the length of $\mathcal C$. 

Let $\mathcal C=C(I,L)$ be an affine variety code. Then, it is clear that the length of $\mathcal C$ is the cardinality of $\mathcal V_{\mathbb F_q}(I) = \mathcal P = \{P_1, \ldots, P_n\}$ and the dimension of $\mathcal C$ is the dimension of the subspace $L$ - since the evaluation map $\mathrm{ev}_{\mathcal P}$ is an isomorphism. In the rest of the section we will study the minimum distance of affine variety codes $\mathcal C=C(I,L)$ in the particular case that $I=(0)$.

Let $A \subseteq \mathbb N^m$ be a non-empty (finite) subset of $\mathbb N^m$. We denote by $\mathbb F_q[A]\subseteq \mathbb F_q[X_1, \ldots, X_m]$ the $\mathbb F_q$-vector space with basis:
$$\left\{ X_1^{i_1} \cdots X_m^{i_m} \mid (i_1, \ldots, i_m)\in A\right\}.$$
We will denote by $\mathcal C_A$ the affine variety code $C(I,L)$ with $I=(0)$ and $L=\mathbb F_q[A]$, in other words $\mathcal C_A$ consists of the evaluation of polynomials $f\in \mathbb F_q[A]$ in the $q^m$ points of $\mathbb F_q^m$. 
\begin{remark}
Let $A\subseteq \mathbb N^m$ and consider the code $\mathcal C_A$ as the affine variety code $C(I,L)$ with $I=(0)$ and $L=\mathbb F_q[A]$. Then the length of $\mathcal C_A$ is $q^m$ and its dimension coincides with the cardinality of the set $A$.
\end{remark}

For $a, b \in \R$ and $a \leq b$, we denote by $\lii a, b \rii$ the integer interval $[a,b] \cap \Z$.

\begin{remark} Given $A\subseteq \mathbb N^m$ and using the identity $z^q=z$ for every $z\in \mathbb F_q$, one can find a unique set $B \subseteq \lii0,q-1\rii^{m}$ such that $\mathbb F_q[B] + I_q = \mathbb F_q[A] + I_q$ and, thus, $A$ and $B$ define the same code $\mC_A = \mC_B$ (see Figure \ref{fig:Aq}).
This set will be denoted by $B=A_q$. Throughout the article we use both sets indistinctly.
\end{remark}

\begin{figure}[h!]
\centering
\includegraphics[scale=.8]{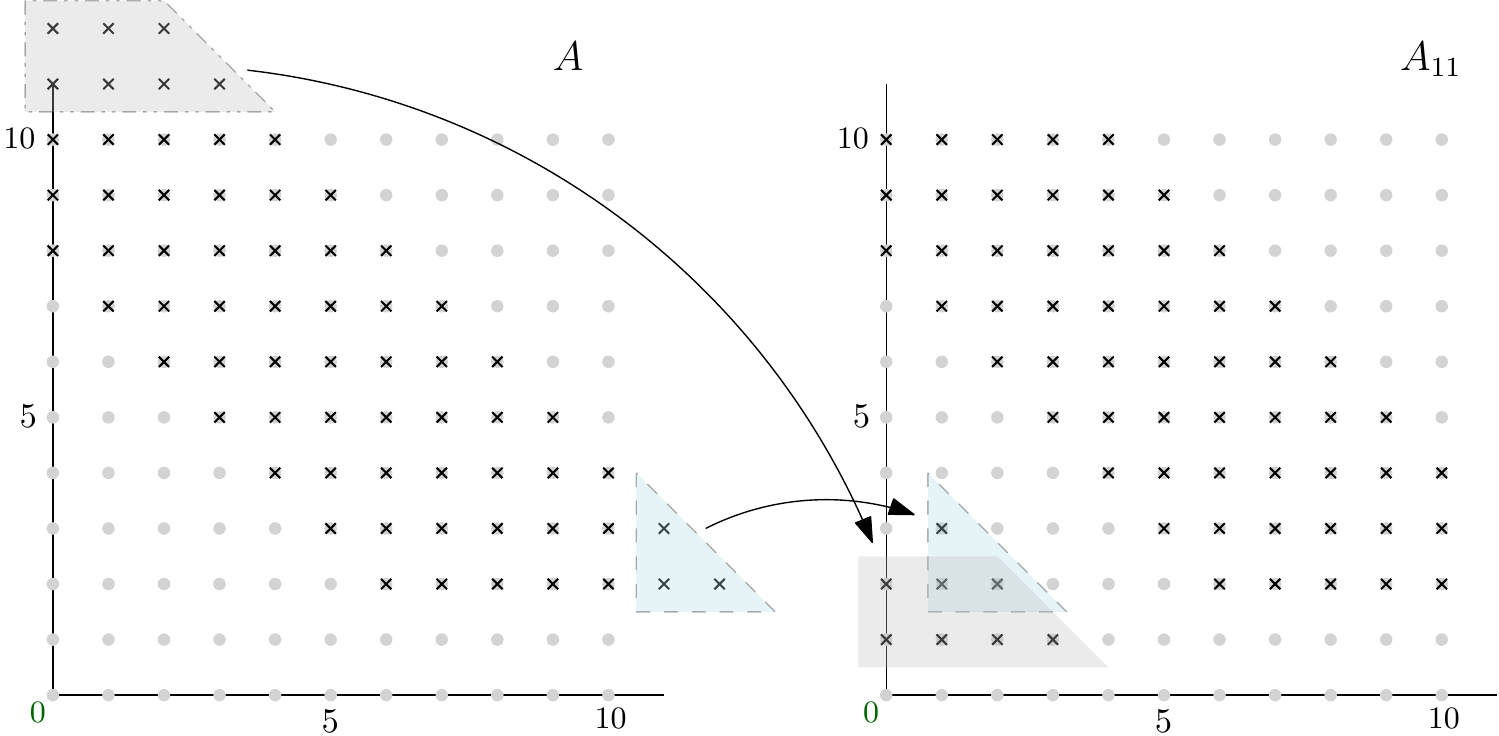}
\caption{The sets $A \subseteq \N^2$ and $A_{11} \subseteq \lii0,10\rii^2$ define the same code over $\F_{11}$.}
\label{fig:Aq}
\end{figure}
 
Let $f$ be a polynomial in $\mathbb F_q[X_1, \ldots, X_m]$, we define the ideal 
$$I_{q,f} = \left\langle X_1^q-X_1, \ldots, X_m^q-X_m , f\right\rangle$$ and the quotient ring $R_f = \mathbb F_q[X_1, \ldots, X_m] / I_{q,f}$.

\begin{proposition}
\label{preFB-1}
Let $f$ be a polynomial in $\mathbb F_q[X_1, \ldots, X_m]$, then the dimension of the $\mathbb F_q$-vector space $R_f$ is the number of roots of $f$ in $\mathbb F_q^m$. That is,
$\dim_{\mathbb F_q}(R_f) = \# \mathcal V_{\mathbb F_q}(f)$.
\end{proposition}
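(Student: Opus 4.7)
The plan is to recognize this proposition as an immediate consequence of the machinery developed in the section, specializing the general results about zero-dimensional radical ideals and the evaluation isomorphism \eqref{eq:isomorphism} to the ideal $I=(f)$.

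First I would set $I=\langle f\rangle\subseteq \mathbb F_q[X_1,\ldots,X_m]$ so that, in the notation introduced earlier, $I_q = I + \langle X_1^q-X_1,\ldots,X_m^q-X_m\rangle = I_{q,f}$ and hence $R_f = R_I$. Since $I_{q,f}$ contains the squarefree univariate polynomials $X_i^q - X_i$ for each $i=1,\ldots,m$, Seidenberg's Lemma (as invoked in the discussion preceding the proposition) guarantees that $I_{q,f}$ is a radical zero-dimensional ideal.

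Next I would identify the variety. The $\mathbb F_q$-rational points of $\mathcal V(f)$ are precisely those $P\in\mathbb F_q^m$ where $f$ vanishes, and adjoining the polynomials $X_i^q-X_i$ does not cut out any new points because every element of $\mathbb F_q^m$ already satisfies them. Hence
\[
\mathcal V_{\overline{\mathbb F_q}}(I_{q,f}) \;=\; \mathcal V_{\mathbb F_q}(I_{q,f}) \;=\; \mathcal V_{\mathbb F_q}(f),
\]
so if we let $\mathcal P = \mathcal V_{\mathbb F_q}(f) = \{P_1,\ldots,P_n\}$, then $n = \#\mathcal V_{\mathbb F_q}(f)$.

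Finally I would apply the evaluation map \eqref{eq:isomorphism} at $\mathcal P$: because $I_{q,f}$ is radical and zero-dimensional, the map $\mathrm{ev}_{\mathcal P}:R_f\to\mathbb F_q^n$ sending $g+I_{q,f}$ to $(g(P_1),\ldots,g(P_n))$ is an isomorphism of $\mathbb F_q$-vector spaces. Taking dimensions yields
\[
\dim_{\mathbb F_q}(R_f) \;=\; n \;=\; \#\mathcal V_{\mathbb F_q}(f),
\]
which is the claim. There is no real obstacle here beyond carefully quoting the earlier general facts; the only thing to check is that the hypotheses for the isomorphism \eqref{eq:isomorphism} apply to the particular ideal $I=\langle f\rangle$, which is immediate since $I_{q,f}$ is radical and its variety has been identified with $\mathcal V_{\mathbb F_q}(f)$.
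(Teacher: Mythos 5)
Your proof is correct and follows essentially the same route as the paper, which simply applies the evaluation isomorphism \eqref{eq:isomorphism} with $I=(f)$ to get $\dim_{\mathbb F_q}(R_f) = \#\mathcal V(I_{q,f}) = \#\mathcal V_{\mathbb F_q}(f)$. You merely spell out the hypotheses (radicality of $I_{q,f}$ via Seidenberg and the identification of the variety with $\mathcal V_{\mathbb F_q}(f)$) that the paper's one-line proof leaves implicit from the preceding setup.
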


\begin{proof}
Applying (\ref{eq:isomorphism}) with $I = (f)$, then ${\rm \dim}_{\mathbb F_q}(R_f) = \# \mathcal V(I_{q,f}) = \# \mathcal V_{\mathbb F_q}(f)$. \end{proof}

The following well-known result (see, e.g., \cite{footprint}) gives a bound for the minimum distance of the particular case of affine variety codes of type $\mathcal C_A$. We include a short proof of this result for the sake of clarity.

\begin{theorem}[Footprint bound]
\label{FB}
Let $A\subseteq \lii0,q-1\rii^m$. Then, the minimum distance of $\mathcal C_A$  satisfies that
$$ d(\mathcal C_A) \geq \min_{(i_1, \ldots, i_m) \in A} \left\{ (q-i_1) \cdots (q-i_m) \right\}.$$
\end{theorem}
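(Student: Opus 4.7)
The plan is to use the standard footprint/initial-ideal argument. Fix a graded monomial order $\succ$ on $\F_q[X_1,\ldots,X_m]$. Take any nonzero codeword $c \in \mC_A$, so $c = \mathrm{ev}_{\mathcal P}(f)$ for some $f \in \F_q[A]$ with $f \notin I_q$. Because $A \subseteq \lii 0,q-1\rii^m$, every monomial of $f$ is of the form $X^\alpha$ with $\alpha \in A$ and has exponents strictly less than $q$, so reduction modulo $I_q$ does not alter $f$. In particular the leading monomial $\mathrm{in}(f) = X^{\alpha}$ satisfies $\alpha = (i_1,\ldots,i_m) \in A$.

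The weight of $c$ equals $q^m - \#\mathcal V_{\F_q}(f)$, so by Proposition \ref{preFB-1} it suffices to show
\[
\dim_{\F_q} R_f \;\leq\; q^m - (q-i_1)\cdots(q-i_m).
\]
For this I would invoke Macaulay's theorem: the set of monomials lying \emph{outside} the initial ideal $\mathrm{in}(I_{q,f})$ is an $\F_q$-basis of $R_f$. Since $I_{q,f}$ contains $X_1^q-X_1,\ldots,X_m^q-X_m$ and $f$, the ideal $\mathrm{in}(I_{q,f})$ contains $X_1^q,\ldots,X_m^q$ and $X^\alpha$. Hence every monomial outside $\mathrm{in}(I_{q,f})$ lies in the set
\[
\bigl\{X_1^{j_1}\cdots X_m^{j_m} : 0 \leq j_k \leq q-1 \text{ for all }k,\ (j_1,\ldots,j_m) \not\geq (i_1,\ldots,i_m) \text{ componentwise}\bigr\}.
\]

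A direct count gives exactly $q^m - (q-i_1)\cdots(q-i_m)$ such monomials: of the $q^m$ tuples in $\lii 0,q-1\rii^m$, we discard those with $j_k \geq i_k$ for every $k$, of which there are $\prod_k (q-i_k)$. Combining, $\dim_{\F_q} R_f \leq q^m - (q-i_1)\cdots(q-i_m)$, and therefore
\[
\mathrm{wt}(c) \;\geq\; (q-i_1)\cdots(q-i_m) \;\geq\; \min_{(j_1,\ldots,j_m)\in A}\bigl\{(q-j_1)\cdots(q-j_m)\bigr\}.
\]
Taking the minimum over nonzero codewords yields the claim.

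The only subtle point, and the main thing to get right, is the application of Macaulay's basis theorem to $I_{q,f}$ without computing a Gröbner basis: one just needs the containment $\{X_1^q,\ldots,X_m^q,X^\alpha\} \subseteq \mathrm{in}(I_{q,f})$, which suffices because basis monomials are those outside \emph{any} monomial subideal of $\mathrm{in}(I_{q,f})$. Everything else is elementary.
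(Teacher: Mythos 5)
Your proof is correct and follows essentially the same route as the paper: both reduce to bounding $\dim_{\F_q} R_f$ via Proposition \ref{preFB-1} and the observation that $X_1^q,\ldots,X_m^q$ and $\mathrm{in}(f)=X^\alpha$ with $\alpha\in A$ lie in $\mathrm{in}(I_{q,f})$, then count the monomials outside this monomial ideal; your explicit appeal to Macaulay's basis theorem is exactly the ``standard Gr\"obner basis arguments'' invoked in the paper's proof. The only cosmetic difference is that you argue codeword by codeword while the paper phrases the same bound as a chain of (in)equalities involving a maximum over $f\in\F_q[A]$.
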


\begin{proof}
Since the codewords of $\mathcal C_A$ consists of the evaluation of polynomials $f\in \mathbb F_q[A]$ at the $n = q^m$ points of $\mathbb F_q^m$ and using the definition of minimum distance we have that
$$
d(\mathcal C_A)   =   n - \max_{f\in \mathbb F_q[A]} \# \left\{ \mathbb F_q\hbox{-roots of } f \right\} 
 = n -  \max_{f\in \mathbb F_q[A]}\# \mathcal V(I_{q,f}).$$
 Now using Proposition \ref{preFB-1} and standard Gr\"obner basis arguments  if we take $\succ$ any monomial order we have that 
\begin{eqnarray*}
d(\mathcal C_A) &=&  n - \max_{f\in \mathbb F_q[A]} \left\{  \dim_{\mathbb F_q}(R_f) \right\}
= n - \max_{f\in \mathbb F_q[A]} \left\{  \dim_{\mathbb F_q} \left(\mathbb F_q[X_1, \ldots, X_m] / \mathrm{in}(I_{q,f}) \right)\right\}\\
& \geq & n - \max_{f\in \mathbb F_q[A]} \left\{  \dim_{\mathbb F_q} \left(\mathbb F_q[X_1, \ldots, X_m] / \left\langle X_1^q, \ldots, X_n^q, \mathrm{in}(f)\right\rangle\right)\right\}\\
& = & n - \max_{(i_1, \ldots, i_m)\in A} \left\{  \dim_{\mathbb F_q} \left(\mathbb F_q[X_1, \ldots, X_m]/\left\langle X_1^q, \ldots, X_n^q,  X_1^{i_1}\cdots X_m^{i_m}\right\rangle\right)\right\}\\
& = & \min_{(i_1, \ldots, i_m) \in A} \left\{ (q-i_1) \cdots (q-i_m) \right\}.
\end{eqnarray*}
\end{proof}

\begin{definition}
Let $A\subseteq \lii0,q-1\rii^m$. We define the \emph{footprint-bound} of the affine code $\mathcal C_A$ as the integer
$$\mathrm{FB}(\mathcal C_A) = \min_{(i_1, \ldots, i_m) \in A} \left\{ (q-i_1) \cdots (q-i_m) \right\}.$$
By Theorem \ref{FB}, we have that the minimum distance of the code $\mathcal C_A$ satisfies that $$d(\mathcal C_A) \geq \mathrm{FB}(\mathcal C_A).$$
\end{definition}

In the following lines we study some well-known families of affine codes, namely (weighted) Reed Muller and hyperbolic codes (see, e.g., \cite{Sorensen:1992}, \cite{FR95},  \cite{GH01}). 
All these codes are konwn to satisfy that their minimum distance coincides with the value of the footprint-bound. One could provide an alternative proof of this fact by a direct application of Lemma \ref{FB-1} in the Appendix.

\begin{definition} \textbf{(Reed-Muller codes)}
Let $s\in \mathbb N$ and
$$A = \left\{ (i_1, \ldots, i_m)\in \lii0,q-1\rii^m \mid i_1 + \ldots + i_m \leq s\right\}.$$ Then, $\mathcal C_A$ is the called the $q$-ary \emph{Reed-Muller} code of degree $s$ in $m$ variables and we denote it by $\mathrm{RM}_q(s,m)$.
\end{definition}

The following result is known and the proof can be found in \cite[Theorem 2] {G08}. 
\begin{proposition}\label{pr:distRM} Given $s\in \mathbb N, s \leq (q-1)m$.  If we write  $s=a(q-1)+b$ with $0\leq b \leq q-1$, then the minimum distance of the Reed-Muller code $\mathcal C=\mathrm{RM}_q(s,m)$ is 
$$d(\mathcal C) = (q-b) q^{m-1-a}.$$
\end{proposition}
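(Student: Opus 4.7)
The plan is to derive the claimed value as the common answer of a lower bound from the footprint bound (Theorem \ref{FB}) and an upper bound from an explicit codeword of that weight.

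First I would apply the footprint bound to $A = \{(i_1, \ldots, i_m) \in \lii 0, q-1 \rii^m : i_1 + \cdots + i_m \leq s\}$, reducing the problem to the combinatorial minimization of $\prod_j (q - i_j)$ over $A$. Setting $y_j = q - i_j \in \{1, \ldots, q\}$ turns this into minimizing $\prod y_j$ subject to $\sum y_j \geq mq - s$ and $1 \leq y_j \leq q$. I expect the minimum to be achieved with most coordinates pushed to the extremes $1$ or $q$. An exchange argument does the job: if $1 < y_j \leq y_k < q$ at some coordinates, then replacing $(y_j, y_k)$ by $(y_j - 1, y_k + 1)$ preserves the sum and strictly decreases the product, since $(y_j - 1)(y_k + 1) = y_j y_k + (y_j - y_k - 1) < y_j y_k$. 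Iterating, at the minimum at most one coordinate lies strictly between $1$ and $q$. Writing $s = a(q-1) + b$ with $0 \leq b \leq q-1$, I would verify that the extremal configuration has $a$ coordinates equal to $q-1$, one equal to $b$, and the remaining $m - a - 1$ equal to $0$, giving the lower bound $d(\mathcal{C}) \geq (q - b)\, q^{m-1-a}$.

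For the matching upper bound I would exhibit a codeword of weight exactly $(q - b)\, q^{m-1-a}$. Choose $b$ distinct elements $\gamma_1, \ldots, \gamma_b \in \mathbb{F}_q$ and set
$$f(X_1, \ldots, X_m) \;=\; \prod_{j=1}^{a} \bigl(X_j^{q-1} - 1\bigr) \, \prod_{k=1}^{b} (X_{a+1} - \gamma_k).$$
Since $X_j^{q-1} - 1 = \prod_{\alpha \in \mathbb{F}_q^*} (X_j - \alpha)$, the total degree of $f$ is $a(q-1) + b = s$ and each variable has exponent at most $q-1$, so $f \in \mathbb{F}_q[A]$. A point $(x_1, \ldots, x_m) \in \mathbb{F}_q^m$ is a non-zero of $f$ precisely when $x_j = 0$ for every $j \leq a$ and $x_{a+1} \notin \{\gamma_1, \ldots, \gamma_b\}$, with $x_{a+2}, \ldots, x_m$ free. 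Counting these points yields weight $(q - b)\, q^{m-1-a}$, matching the footprint bound.

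The only genuinely nontrivial step is the combinatorial minimization; the rest is routine verification. Corner cases that deserve a quick check are $a = m$ (which forces $b = 0$, so the formula reads $d(\mathcal{C}) = 1$, corresponding to the empty product over $X_{a+1}$ being $1$), and $a = 0$ (where no $X_j^{q-1} - 1$ factor appears and $f$ reduces to a univariate polynomial in $X_1$ of degree $b = s < q$). In both degenerate regimes the construction of $f$ and the counting argument still go through verbatim, completing the proof.
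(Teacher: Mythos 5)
Your proof is correct. The paper itself does not prove this proposition---it simply cites \cite[Theorem 2]{G08}---although it remarks just before the statement that the sharpness of the footprint bound for these families can be obtained from Lemma \ref{FB-1}. Your argument is a self-contained realization of exactly that route: the lower bound is Theorem \ref{FB} applied to $A=\{(i_1,\ldots,i_m)\in\lii0,q-1\rii^m \mid i_1+\cdots+i_m\leq s\}$, and your codeword $f=\prod_{j\leq a}(X_j^{q-1}-1)\prod_{k\leq b}(X_{a+1}-\gamma_k)$ is the same product-of-linear-factors construction used in the proof of Lemma \ref{FB-1}, since $X_j^{q-1}-1=\prod_{\alpha\in\mathbb{F}_q^*}(X_j-\alpha)$. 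What your write-up supplies, and what the paper delegates entirely to the citation, is the closed-form evaluation of $\min_{(i_1,\ldots,i_m)\in A}\prod_j(q-i_j)$; your exchange argument is the right tool, but the step you label ``I would verify'' deserves its two extra lines: after concluding that at most one $y_j=q-i_j$ lies strictly in $(1,q)$, you may assume $\sum y_j=mq-s$ holds with equality at a minimizer (otherwise decrease some $y_j>1$), and then this fixed sum pins down the configuration to $m-a-1$ coordinates equal to $q$, $a$ coordinates equal to $1$ and one equal to $q-b$, the only ambiguity arising in the boundary cases $b=0$ and $b=q-1$, where the competing admissible configurations give the same product $(q-b)q^{m-1-a}$. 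Your handling of the degenerate cases $a=m$ (which forces $b=0$) and $a=0$ is also fine, so the proof is complete and, modulo that small verification, matches the strategy the paper only hints at via its appendix.
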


%\begin{proof}
%By Lemma \ref{FB-1}, we have that the footprint-bound coincides with the minimum distance for Reed-Muller codes. Then, for \[ A=\left\{ (i_1, \ldots, i_m)\in \lii0,q-1\rii^m \mid i_1 + \ldots + i_m \leq s\right\}\]  we have that  $\mathcal C_A = \mathrm{RM}_q(s,m)$ and 
%$$d(\mathcal C_A) = \mathrm{FB}(\mathcal C_A)  = \min_{(i_1, \ldots, i_m)\in A} \left\{ 
%(q-i_1) \cdots (q-i_m).
%\right\}$$
%Take notice that the above optimization problem is equivalent to minimizing the volume of a hypercube of dimensions $(q - i_1) \times \cdots \times (q - i_m)$ 
%with $1 \leq i_1,\ldots,i_m \leq q-1$ and $i_1 + \cdots + i_m \leq s$. 
%It is easy to verify that this minimum will be attained when $i_1 + \cdots + i_m = s$. Moreover, we also observe that when $i_1,\ldots,i_m$ minimize the volume, then at most one of $i_1,\ldots,i_m$ can be different from $1$ and $q-1$; otherwise the simple inequality $(q - j)(q - k) > (q - j - 1)(q - k + 1)$ for $j < k$ would give a lower volume. Thus, 
% if $s=a(q-1)+b$ with $0\leq b \leq q-1$, the minimum is attained at the point $(i_1, \ldots, i_m)\in A$ with 
%$$\begin{array}{ccccc}
%i_1 = \ldots = i_a = q-1 & \hbox{ , } &
%i_{a+1} = b & \hbox{ and } &
%i_{a+2} = \ldots = i_m =0.
%\end{array}$$
%\end{proof}

\begin{definition} \textbf{(Weighted Reed-Muller codes)}
Consider $s, s_1, \ldots, s_m > 0$ and let $A = \left\{ (i_1, \ldots, i_m)\in \lii 0,q-1\rii^m \mid s_1i_1 + \ldots + s_mi_m \leq s\right\}$. Then, $\mathcal C_A$ is called the $q$-ary \emph{weighted Reed-Muller} code of degree $s$ in $m$ variables with $\mathcal S = \left( s_1, \ldots, s_m \right)$ and we denote it by $\mathrm{WRM}_q(s,m, \mathcal S)$. If $s_1 = \ldots = s_m = 1$, then $\mathrm{WRM}_q(s,m, \mathcal S)$ is the corresponding $q$-ary Reed-Muller code $\mathrm{RM}_q(\lfloor s \rfloor,m)$.
\end{definition}

\begin{definition} \textbf{(Hyperbolic codes)}
Let $d\in \mathbb N$ and $$A=\left\{ (i_1, \ldots, i_m)\in \lii0,q-1\rii^m \mid (q-i_1) \cdots (q-i_m) \geq d\right\}.$$ Then, $\mathcal C_A$ is called the $q$-ary \emph{hyperbolic} code of order $d$ and we denote it by $\mathrm{Hyp}_q(d,m)$.
\end{definition}

\begin{figure}[h!]
    \centering
    \begin{subfigure}[b]{0.4\textwidth}
        \includegraphics[scale=0.8]{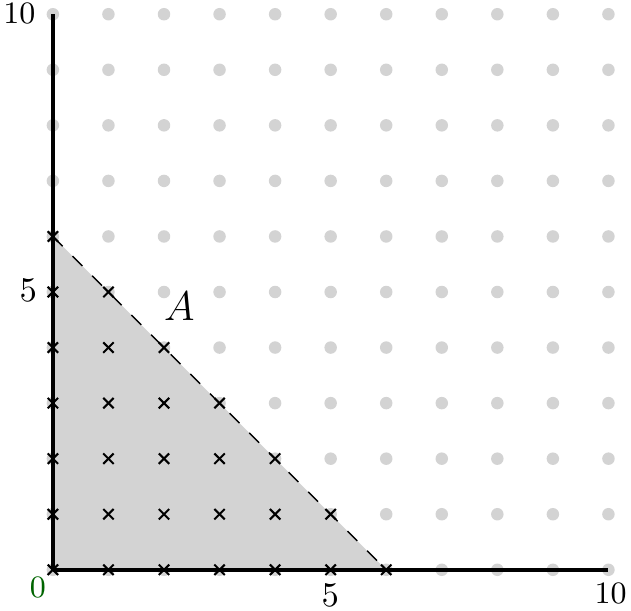}
        \caption{Example of a Reed-Muller code.}
        %\label{fig:gull}
    \end{subfigure}
    %\qquad
   % \hspace{2cm}
    %\hfill
    ~ %add desired spacing between images, e. g. ~, \quad, \qquad, \hfill etc. 
      %(or a blank line to force the subfigure onto a new line)
    \begin{subfigure}[b]{0.55\textwidth}
        \includegraphics[scale=0.8]{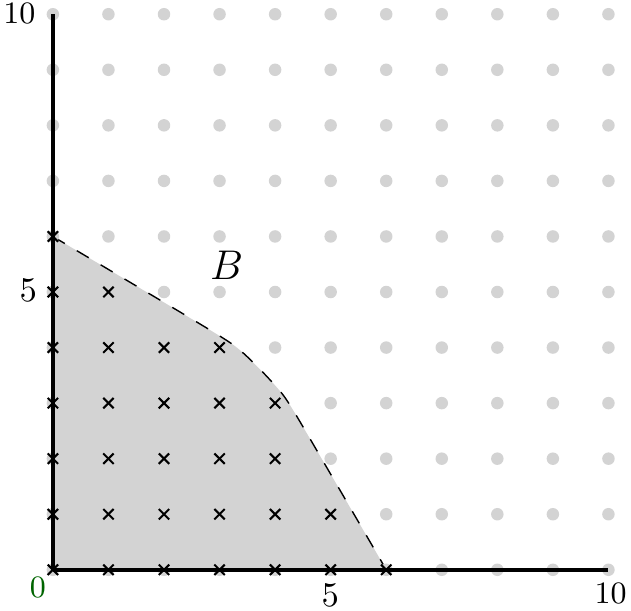}
        \caption{Example of a hyperbolic code.}
        %\label{fig:tiger}
    \end{subfigure}\\
    
        \begin{subfigure}[b]{0.55\textwidth}
        \includegraphics[scale=0.8]{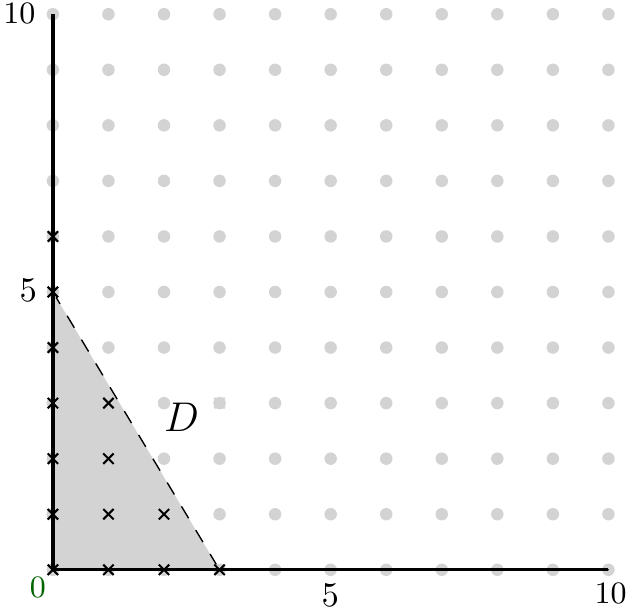}
        \caption{Example of a weighted Reed-Muller code.}
        %\label{fig:tiger}
    \end{subfigure}
    \caption{Examples of Reed-Muller, hyperbolic and weighted Reed-Muller  codes.}
    \label{fig:examplesofcodes}
\end{figure}

\begin{example} Consider the following codes over $\mathbb F_{11}$
(see Figure \ref{fig:examplesofcodes}):
\begin{itemize}
\item the set
$A=\left\{ (i,j) \in
\lii0,10\rii^2 \mid i+j\leq 6\right\}$, corresponds to the Reed-Muller
code $\mC_A = \mathrm{RM}_{11}(6,2)$ with parameters $[11^2, 28,
55]_{11}$,  
\item the set $B=\left\{ (i,j) \in \lii0,10\rii^2 \mid
(11-i)(11-j)\leq 55\right\}$, corresponds to the hyperbollic code
$\mC_B = \mathrm{Hyp}_{11}(55,2)$ with parameters $[11^2, 30,
55]_{11}$,
\item and the set $D=\left\{ (i,j) \in \lii0,10\rii^2 \mid 5i +
3j \leq 15\right\}$, corresponds to the  weighted Reed-Muller code
$\mC_D = \mathrm{WRM}_{11}(15,2,\{5,3\})$ with parameters $[11^2, 13,
66]_{11}$.
\end{itemize}
 \end{example}

The hyperbolic code $\mathrm{Hyp}_q(d,m)$ has been designed to be the code with the largest possible dimension among those affine codes $\mC_A$ such that $\mathrm{FB}(\mC_A) \geq d$. 
In the following result, we indicate in the case of two variables, when the hyperbolic code of order $d$ has greater dimension with respect to a Reed-Muller code with the same minimum distance $d$.

\begin{proposition}
Consider $\mathcal D=\mathrm{RM}_q(t,2)$ and $\mathcal E=\mathrm{Hyp}_q(d,2)$. If $d(\mathcal D) = d(\mathcal E)$, then $k(\mathcal D) \leq k(\mathcal E)$. Moroever, $k(\mathcal D) < k(\mathcal E)$ if and only if 
$$\frac{t+5}{2}\leq q \leq \frac{(t+1)^2}{4}.$$
\end{proposition}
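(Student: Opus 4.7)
The plan is to rewrite both codes as affine variety codes $\mathcal C_A$ so that their dimensions become the cardinalities of explicit subsets of $\lii 0,q-1\rii^2$, and then read off strict vs.\ non-strict inclusion of those subsets.

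Write $\mathcal D = \mathcal C_{A_{RM}}$ with $A_{RM}=\{(i,j)\in\lii 0,q-1\rii^2 : i+j\leq t\}$ and $\mathcal E = \mathcal C_{A_{Hyp}}$ with $A_{Hyp}=\{(i,j)\in\lii 0,q-1\rii^2 : (q-i)(q-j)\geq d\}$, where $d:=d(\mathcal D)=d(\mathcal E)$. By Proposition~\ref{pr:distRM}, $d(\mathcal D)=\min_{(i,j)\in A_{RM}}(q-i)(q-j)$ (the minimum is attained at $(t,0)$ in Case~1 and at $(b,q-1)$ in Case~2 below). Hence every $(i,j)\in A_{RM}$ satisfies $(q-i)(q-j)\geq d$, so $A_{RM}\subseteq A_{Hyp}$ and $k(\mathcal D)\leq k(\mathcal E)$. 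The inclusion is strict iff there exists some $(i_0,j_0)\in\lii 0,q-1\rii^2$ with $i_0+j_0\geq t+1$ and $(q-i_0)(q-j_0)\geq d$.

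To detect such a point I fix $s=i+j$ and compute
\[
M(s):=\max\bigl\{(q-i)(q-j):(i,j)\in\lii 0,q-1\rii^2,\,i+j=s\bigr\}=\lfloor (2q-s)^2/4\rfloor,
\]
attained at $(\lfloor s/2\rfloor,\lceil s/2\rceil)$ for $s\leq 2(q-1)$ (this is the integer AM--GM inequality applied to the pair $(q-i,q-j)$ of fixed sum $2q-s$). Since $M(s)$ is non-increasing in $s$, the strict inclusion is equivalent to $M(t+1)\geq d$, i.e.\ to
\[
(2q-t-1)^2\geq 4d,
\]
where I use that both sides are integers to absorb the floor.

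I then substitute the two expressions for $d$ given by Proposition~\ref{pr:distRM}. In \emph{Case~1} ($0\leq t\leq q-1$, $d=q(q-t)$) the identity $(2q-t-1)^2-4q(q-t)=(t+1)^2-4q$ converts the condition to $q\leq (t+1)^2/4$. In \emph{Case~2} ($q-1\leq t\leq 2(q-1)$, $d=2q-1-t$) the factorisation $(2q-t-1)^2-4(2q-1-t)=(2q-t-1)(2q-t-5)$, combined with $2q-t-1\geq 2$ throughout Case~2, reduces the condition to $2q-t-5\geq 0$, i.e.\ $q\geq (t+5)/2$.

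Finally I glue the two regimes. When $q\geq t+1$ (Case~1) the binding restriction is $q\leq (t+1)^2/4$, and the lower bound $q\geq (t+5)/2$ is automatic since $t+1\geq (t+5)/2$ precisely when $t\geq 3$. When $q\leq t+1$ (Case~2) the binding restriction is $q\geq (t+5)/2$, and the upper bound $q\leq (t+1)^2/4$ is automatic because $t+1\leq (t+1)^2/4$ for $t\geq 3$. The union across $q=t+1$ is exactly the interval $(t+5)/2\leq q\leq (t+1)^2/4$, which is non-empty iff $t\geq 3$; for $t\leq 2$ both case conditions fail, matching the emptiness of the interval. The only step that deserves some care is the integer maximum formula $M(s)=\lfloor(2q-s)^2/4\rfloor$ and the equivalence $\lfloor x^2/4\rfloor\geq n\iff x^2\geq 4n$ for integers $x,n$; once these are in hand the rest is direct algebra and the two cases dovetail cleanly.
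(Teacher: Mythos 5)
Your proof is correct and takes essentially the same route as the paper's: both reduce the strictness question to whether the best point outside the Reed--Muller set, namely the balanced point on the line $i+j=t+1$, satisfies $(q-i)(q-j)\geq d$, and then split into the regimes $t\leq q-1$ and $t\geq q$; your floor-function formulation $\lfloor (2q-t-1)^2/4\rfloor\geq d \iff (2q-t-1)^2\geq 4d$ simply merges the paper's parity subcases, and your explicit gluing of the two regimes into the single interval is a welcome addition that the paper leaves implicit. The only blemishes are cosmetic: in Case 2 one can only guarantee $2q-t-1\geq 1$ (not $\geq 2$), which still suffices since the factorization argument only needs positivity, and the degenerate value $t=2q-2$ (where no lattice point lies on $i+j=t+1$) deserves a one-line remark, though there both sides of your equivalence fail, so the conclusion is unaffected.
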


\begin{proof}
Since $d(\mathcal D) = d(\mathcal E)$, we have that $ {\rm FB}(\mathcal E)= d(\mathcal E) = d(\mathcal D) =  {\rm FB}(\mathcal D)$. 
Set $M=(m_{i,j})_{0 \leq i, j \leq q-1}$ the matrix with $m_{i,j} = (q-i) (q-j)$.
 We have that $\mathcal D = \mathcal C_A$ and $\mathcal E =  \mathcal C_B$ with $A = \{(i,j) \in \lii 0, q-1 \rii \, \vert \, i + j \leq t\}$ and
$B = \{(i,j) \in \lii 0, q-1 \rii \, \vert \, m_{i,j} \geq d\}$.  Moreover, \[ {\rm min}\{m_{i,j} \, \vert \, (i,j) \in A\} = d(\mathcal D) = d(\mathcal E) = {\rm min}\{m_{i,j} \, \vert \, (i,j) \in B\}. \]
Hence, $A \subseteq B$ and $k(\mathcal D) \leq k(\mathcal E)$; indeed, this proves that hyperbolic codes have the maximum dimension among all the codes with the same footprint-bound value.

By Proposition \ref{pr:distRM} we also have that \[ d(\mathcal D) =  \left\{ \begin{array}{lll} m_{0,t} & \text{ if \ } t \leq q-1 \text{,\ and} \\  
 m_{q-1,t-q+1} & \text{  if \ } q \leq t \leq 2q-2; \end{array} \right.\]   and it is easy to verify that
 \[ {\rm max}\{m_{i,j} \, \vert \, (i,j) \notin A\} = \left\{ \begin{array}{lll} m_{\frac{t+1}{2},\frac{t+1}{2}} & \text{if \ } t \text{ \ is\ odd,\ and }\\ 
 m_{\frac{t}{2},\frac{t+2}{2}}& \text{if \ } t \text{ \  is \ even }\end{array} \right.\]

We separate the proof depending on the value and the parity of $t$.
\begin{enumerate} 
\item $t \leq q-1$ and

\begin{enumerate}

\item $t$ is odd. Then $k(\mathcal D) < k(\mathcal E)$ if and only if $(\frac{t+1}{2},\frac{t+1}{2}) \in B$ or, equivalently, if 
$m_{t,0} \leq  m_{\frac{t+1}{2},\frac{t+1}{2}}$. Moreover, this happens if and only if
$q \leq \left( \frac{t+1}{2}\right)^2.$ 

\item  $t$ is even. Then $k(\mathcal D) < k(\mathcal E)$ if and only if  $(\frac{t}{2},\frac{t+2}{2}) \in B$ or, equivalently, if 
$m_{t,0} \leq m_{\frac{t}{2},\frac{t+2}{2}}$.
Moreover, this happens if and only if $q \leq \frac{t(t+2)}{4}.$ Since $t$ is even, this is equivalent to $q \leq \left(\frac{t+1}{2}\right)^2.$
\end{enumerate}
\item $t\geq q$ and
\begin{enumerate}
\item $t$ is odd. Then $k(\mathcal D) < k(\mathcal E)$ if and only if $(\frac{t+1}{2},\frac{t+1}{2}) \in B$ or, equivalently, if 
$m_{q-1,t-q+1} \leq  m_{\frac{t+1}{2},\frac{t+1}{2}}$. Moreover, this happens if and only if
$2q  - t - 1 \leq \left( q - \frac{t+1}{2}\right)^2.$ This defines a quadratic inequality $p(q) > 0$ involving in the variable $q$. Notice that
$p(q) \geq 0$ if and only if $q \leq (t+1)/2$ or $q \geq (t+5)/2$. The first option is not viable since $t \geq 2q-1$. We conclude, thus, that  $q  \geq  (t+5)/2.$

\item  $t$ is even. Then $k(\mathcal D) < k(\mathcal E)$ if and only if  $(\frac{t}{2},\frac{t+2}{2}) \in B$ or, equivalently, if 
$m_{q-1,t-q+1} \leq m_{\frac{t}{2},\frac{t+2}{2}}$.
 Moreover, this happens if and only if
$2q  - t - 1 \leq  \left( q - \frac{t}{2}\right)\left( q - \frac{t+2}{2}\right).$ Proceeding as in the previous case we get that this is equivalent to $q  \geq  \frac{t+3 + \sqrt{5}}{2}$ and since $t$ is even, this is the same as  $q \geq  \frac{t+5}{2}.$

\end{enumerate}
\end{enumerate}
\end{proof}

\section{Schur product of codes}
\label{Section3}

The notion of Schur product of codes was first introduced in coding theory for decoding \cite{pellikaan:1992} \cite{pellikaan:1996}.
But this operation turns out to have many other applications in cryptanalysis, multiparty computation, secret sharing or construction of lattices. Many of these applications are summarized in \cite[\S 4]{Hugues:2015}.

\begin{definition}
The Schur product is the componentwise product on $\F_q^n$. That is, given two elements $\mathbf a, \mathbf b \in \F_q^n$:
$$\mathbf a * \mathbf b \eqdef \left(a_1b_1, \ldots, a_nb_n \right)$$
For two codes $\mathcal{C}_1,\mathcal{C}_2\subseteq \F_q^n$, their Schur product is the code $\mathcal{C}_1*\mathcal{C}_2$ defined as
$$\mathcal{C}_1*\mathcal{C}_2 \eqdef \mathrm{Span}_{\F_q}\left\{\mathbf{c}_1 * \mathbf{c}_2 \mid \mathbf{c}_1 \in \mathcal{C}_1 \hbox{ and } \mathbf{c}_2 \in \mathcal{C}_2 \right\}$$

For $\mathcal{C}_1=\mathcal{C}_2=\mathcal{C}$, then $\mathcal{C}*\mathcal{C}$ is denoted as $\mathcal{C}^{\, (2)}$.
\end{definition}

\subsection{Product of codes and the Minkowski sum}
Given two sets $A, B \subseteq \N^m$, we denote by $A + B$ its Minkowski sum, that is, $A + B  =  \{a + b \, \vert \, a \in A, b \in B\}$ .
The following property is easy to check.

\begin{proposition}\label{pr:minkcuadrado} $\mC_A^{\, (2)} = \mC_{A + A}$.
\end{proposition}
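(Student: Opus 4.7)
The plan is to prove both inclusions separately by exploiting the fact that the evaluation map is a ring homomorphism when the target $\F_q^n$ is equipped with the Schur product. The key algebraic identity is that for any $f,g \in \F_q[X_1,\ldots,X_m]$ and any point $P \in \F_q^m$, one has $(fg)(P) = f(P)\cdot g(P)$, and hence $\mathrm{ev}_{\mathcal P}(f) * \mathrm{ev}_{\mathcal P}(g) = \mathrm{ev}_{\mathcal P}(fg)$ componentwise.

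For the inclusion $\mC_A^{(2)} \subseteq \mC_{A+A}$, I would observe that $\mC_A^{(2)}$ is by definition spanned by all Schur products $\mathbf{c}_1 * \mathbf{c}_2$ with $\mathbf{c}_1,\mathbf{c}_2 \in \mC_A$. By bilinearity of $*$, it suffices to check this on a generating set, namely the evaluations of the monomials indexed by $A$. For $a, b \in A$ one has
\[ \mathrm{ev}_{\mathcal P}(\mathbf{X}^a) * \mathrm{ev}_{\mathcal P}(\mathbf{X}^b) = \mathrm{ev}_{\mathcal P}(\mathbf{X}^{a+b}), \]
and since $a+b \in A+A$, the right-hand side lies in $\mC_{A+A}$.

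For the reverse inclusion $\mC_{A+A} \subseteq \mC_A^{(2)}$, note that $\mC_{A+A}$ is spanned by the vectors $\mathrm{ev}_{\mathcal P}(\mathbf{X}^c)$ for $c \in A+A$. Every such exponent $c$ decomposes as $c = a+b$ with $a, b \in A$, hence $\mathrm{ev}_{\mathcal P}(\mathbf{X}^c) = \mathrm{ev}_{\mathcal P}(\mathbf{X}^a) * \mathrm{ev}_{\mathcal P}(\mathbf{X}^b)$ is a Schur product of two elements of $\mC_A$, and so lies in $\mC_A^{(2)}$.

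There is essentially no obstacle; the only subtlety worth flagging is that the exponents in $A+A$ may exceed $q-1$ even when $A \subseteq \lii 0, q-1\rii^m$. This does not affect the argument because the evaluation map factors through the quotient by $\langle X_1^q - X_1, \ldots, X_m^q - X_m\rangle$, and by the remark on the set $A_q$ we may identify $\mC_{A+A}$ with $\mC_{(A+A)_q}$ without changing the code. Thus both inclusions hold on the nose, giving $\mC_A^{(2)} = \mC_{A+A}$.
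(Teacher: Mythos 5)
Your proposal is correct and follows essentially the same route as the paper: both directions rest on the multiplicativity of the evaluation map, $\mathrm{ev}_{\mathcal P}(f)*\mathrm{ev}_{\mathcal P}(g)=\mathrm{ev}_{\mathcal P}(fg)$, together with decomposing each exponent in $A+A$ as a sum of two exponents in $A$. Your reduction to monomial generators via bilinearity (and the remark that exponents beyond $q-1$ are harmless since one may pass to $(A+A)_q$) is a slightly tidier phrasing of the paper's argument, but not a different proof.
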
 
\begin{proof}
Let $\mathbf c\in \mC_A^{\, (2)}$, then $\mathbf c = \mathbf c_1 * \mathbf c_2$ with $\mathbf c_1, \mathbf c_2\in \mC_A$. Or equivalently, 
$$\mathbf c = \mathrm{ev}_{\mathcal P} (f) * \mathrm{ev}_{\mathcal P} (g) = \mathrm{ev}_{\mathcal P} (fg)\hbox{ with }f,g\in \mathbb F_q[A].$$
It is easy to check that if $f,g\in \mathbb F_q[A]$, then $fg\in \mathbb F_q[A+A]$. Thus, $\mathbf c \in \mC_{A+A}$.

Conversely, take notice that $\mathbb F_q[A+A]$ is the $\mathbb F_q$-vector space with basis 
$$
\left\{\mathbf X^{\mathbf i} \eqdef X_1^{i_1} \cdots X_m^{i_m} \mid \mathbf i = (i_1, \ldots, i_m)\in A+A \right\} = 
\left\{ \mathbf X^{\mathbf a} \cdot \mathbf X^{\mathbf b}
\mid \mathbf a, \mathbf b \in A\right\}$$
Therefore, for any $\mathbf c\in \mC_{A+A}$, then $\mathbf c = \mathrm{ev}_{\mathcal P}(f)$ with $f\in \mathbb F_q[A+A]$, that is 
$$\mathbf c = \mathrm{ev}_{\mathcal P}(f) = \mathrm{ev}_{\mathcal P}\left( \sum_{i=0}^s \lambda_i \mathbf X^{\mathbf a_i} \mathbf X^{\mathbf b_i} \right) = 
\sum_{i=0}^s \lambda_i \mathrm{ev}_{\mathcal P}(\mathbf X^{\mathbf a_i}) * \mathrm{ev}_{\mathcal P}(\mathbf X^{\mathbf b_i}) \in \mC_{A}^{\, (2)}.$$
\end{proof}

It is important to highlight that even if $A \subset \lii0,q-1\rii^m$, it might happen that  $A + A \not\subset \lii0,q-1\rii^m$; however 
$A'  := (A+A)_q \subset \lii0,q-1\rii^m$ satisfies that $\mC_{A'} = \mC_{A + A}$ (see Figure \ref{fig:amasa}).

\begin{figure}[h!]
\centering
\includegraphics[scale=.6]{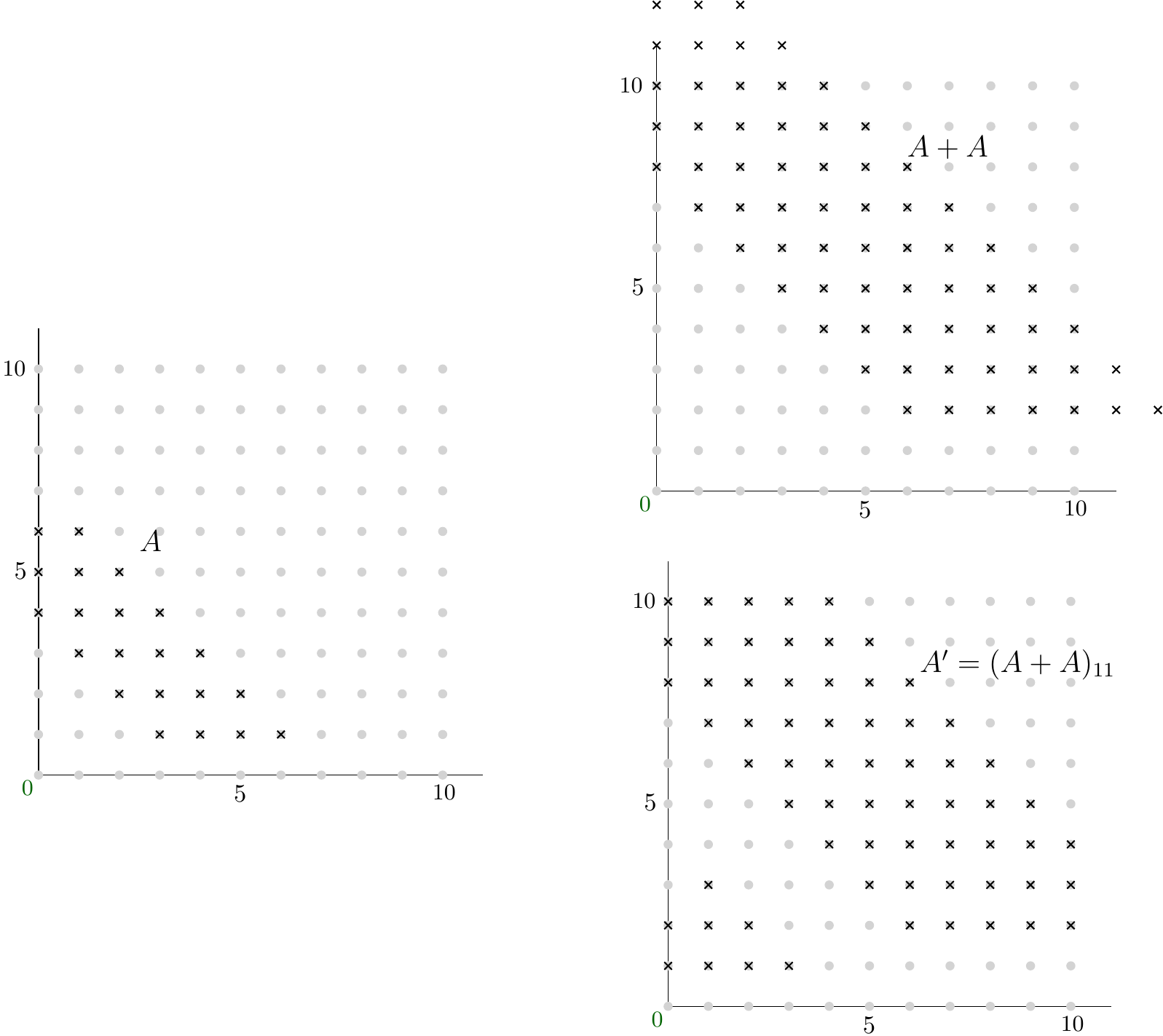}
\caption{By Proposition \ref{pr:minkcuadrado} we have that $\mC_A^{\, (2)} = \mC_{A + A} = \mC_{A'}$.}\label{fig:amasa}
\end{figure}

Proposition \ref{pr:minkcuadrado} suggests the following way of constructing affine codes whose square has a designed minimum distance: we consider 
$B \subset \lii0,q-1\rii^m$ such that $d(\mC_{B}) \geq d$ and, then, we choose $A$ such that $(A + A)_q \subset B$.  If $A$ is chosen in this way, then we will have that
$d(\mC_{A}^{\, (2)}) =  d(\mC_{A+A}) = d(\mC_{(A+A)_q}) \geq d(\mC_{B}) \geq d$.
The following 
lemma gives a necessary condition for such a set~$A$.

\begin{lemma} Let $A,B \subset \lii0,q-1\rii^m$ and for each $\epsilon = (\epsilon_1,\ldots,\epsilon_m) \in \{0,1\}^m$ we set  
\[ B_{\epsilon} := \{ \mathbf{b} + (q-1) \epsilon \, \vert \, \mathbf{b} = (b_1,\ldots,b_n) \in B {\text \  and \ } b_i > 0 {\text\ whenever \ } \epsilon_i = 1\}. \]  If $(A+A)_q \subseteq B$, then $2A = \{2 \mathbf{a} \, \vert \, \mathbf{a} \in A\}$ is a subset of $\cup_{\epsilon \in \{0,1\}^m} B_{\epsilon}.$ 
\end{lemma}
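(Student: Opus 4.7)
The plan is to chase through the definition of $(A+A)_q$ and show that for each $\mathbf{a} \in A$, the vector $2\mathbf{a}$ admits an explicit decomposition of the required form.

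First I would make precise the coordinatewise reduction implicit in the notation $(A+A)_q$. The relations $X_i^q = X_i$ mean that, modulo $I_q$, every monomial $X^{\mathbf{c}}$ is identified with $X^{r(\mathbf{c})}$, where $r \colon \N \to \lii 0, q-1 \rii$ acts coordinatewise and on each coordinate satisfies $r(0) = 0$ and, for $c \geq 1$, $r(c)$ is the unique element of $\lii 1, q-1 \rii$ congruent to $c$ modulo $q-1$. In particular, $(A+A)_q = \{r(\mathbf{c}) \mid \mathbf{c} \in A+A\}$.

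Now fix $\mathbf{a} = (a_1,\ldots,a_m) \in A$. Since $2\mathbf{a} \in A+A$, the reduced vector $\mathbf{b} := r(2\mathbf{a})$ lies in $(A+A)_q \subseteq B$. Because each $a_i \in \lii 0, q-1 \rii$, we have $2a_i \in \lii 0, 2q-2 \rii$, so only two cases occur coordinatewise: either $2a_i \leq q-1$, in which case $b_i = 2a_i$, or $2a_i \geq q$, in which case $b_i = 2a_i - (q-1) \in \lii 1, q-1 \rii$. Define $\epsilon \in \{0,1\}^m$ by $\epsilon_i = 1$ precisely in the second case. Then by construction $2\mathbf{a} = \mathbf{b} + (q-1)\epsilon$.

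The final check is that $\mathbf{b}$ qualifies as an element of $B$ giving rise to $B_\epsilon$, i.e.\ that $b_i > 0$ whenever $\epsilon_i = 1$. But $\epsilon_i = 1$ forces $b_i = 2a_i - (q-1) \geq q - (q-1) = 1 > 0$, which is exactly what is required. Hence $2\mathbf{a} \in B_\epsilon \subseteq \bigcup_{\epsilon \in \{0,1\}^m} B_\epsilon$, and since $\mathbf{a}$ was arbitrary this concludes the proof. The argument is essentially bookkeeping; no step should present a real obstacle once one is careful to verify that the reduction rule for $I_q$ preserves a strictly positive coordinate (never sending it to $0$), which is precisely the asymmetry baked into the definition of $B_\epsilon$.
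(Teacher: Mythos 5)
Your proposal is correct and takes essentially the same route as the paper: reduce $2\mathbf{a}$ coordinatewise via the relations $X_i^q = X_i$, set $\epsilon_i = 1$ exactly when $2a_i \geq q$ (so the reduced coordinate is $2a_i-(q-1)$), and conclude $2\mathbf{a} = \mathbf{b} + (q-1)\epsilon$ with $\mathbf{b} \in B$. The only difference is that you make explicit the verification that $b_i \geq 1$ when $\epsilon_i = 1$, which the paper leaves implicit in its definition of $(2\mathbf{a})_q$.
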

\begin{proof}
Assume that $(A+A)_q \subseteq B$. 

We observe that  whenever $\mathbf{a} = (a_1,\ldots,a_m) \in A,$ then $(2 \mathbf{a})_q \in (A+A)_q$, 
$$\hbox{where  }(2 \mathbf{a})_q  = (a_1',\ldots,a_m')\hbox{ with }
a_i' = \left\{ \begin{array}{ll} 2 a_i, & \hbox{ if } 2 a_i < q,\\ 
2 a_i - (q-1) & \hbox{ otherwise. }\end{array} \right.$$ 
Now, it suffices to take 
$$\epsilon = (\epsilon_1,\ldots,\epsilon_m)
\hbox{ with }\epsilon_i = \left\{ \begin{array}{ll} 0 & \hbox{ if }2a_i < q\\
1 & \hbox{ otherwise}
\end{array}\right.$$
to have that $2 \mathbf{a} \in B_{\epsilon}$. 
 \end{proof}

\begin{figure}[h!]
\centering
\includegraphics[scale=.6]{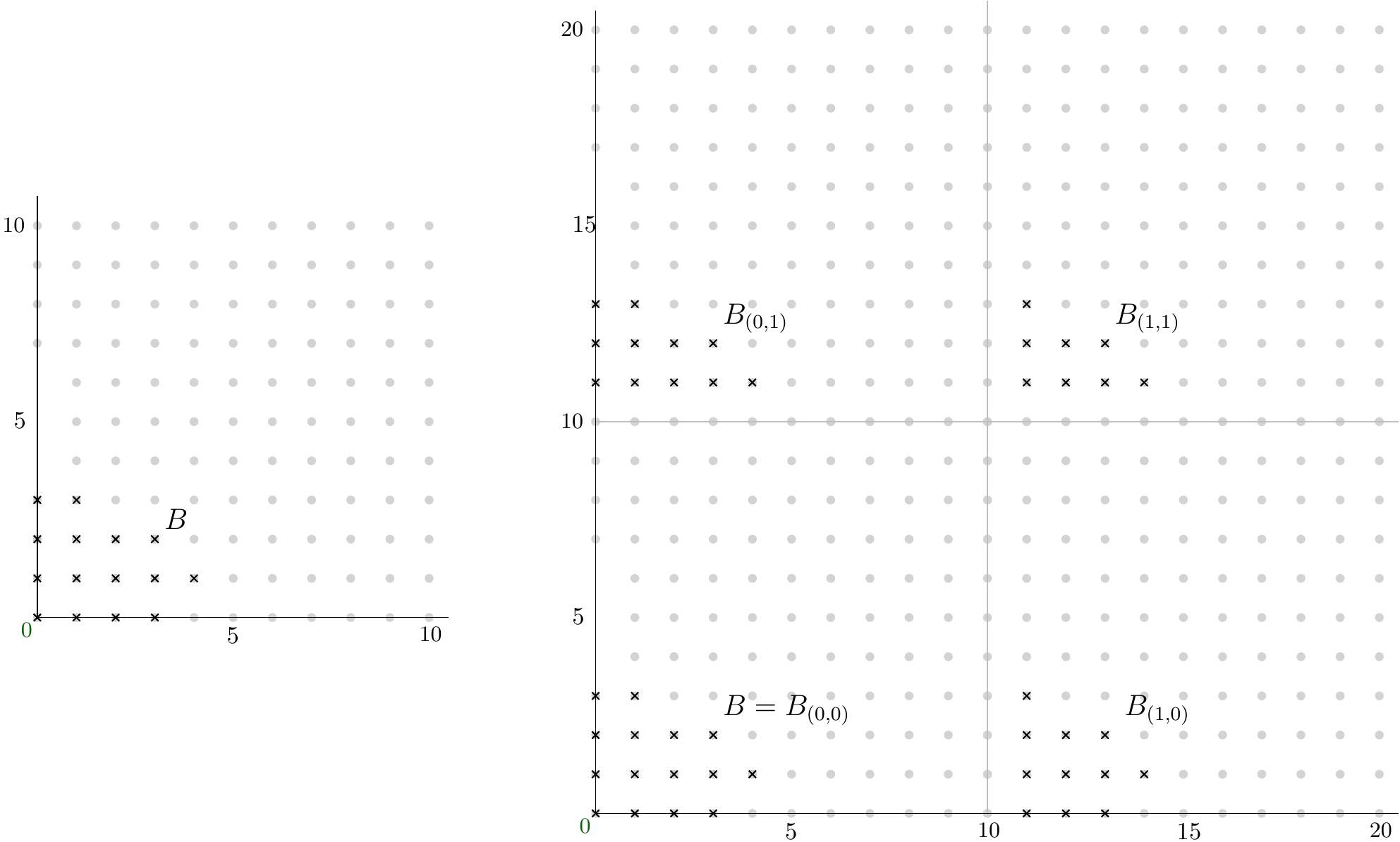}
\caption{Examples of sets $B_{\epsilon}$ for $B \subseteq \lii0,10\rii^2$.}\label{fig:bepsilon}
\end{figure}

The following proposition and the subsequent theorem are the key results to understand our strategy to give a code $\mC_A$ whose square has designed minimum distance. They are both based
on (simple) convexity arguments.  
Given a set $B\subseteq \lii0,q-1\rii^m$, suppose that we want to find a set $A\subseteq \lii0,q-1\rii^m$ such that $(A+A)_q\subseteq B$. If such condition happens then we will have that $(2A)_q\subseteq (A+A)_q \subseteq B$. However the following lemma allows us to construct a set $A$ with the property that by just checking that $(2A)_q\subseteq B$, it will imply that  $(A+A)_q\subseteq B$.

\begin{proposition}\label{lema:convexo}Let $D \subset \R^m$ be a convex set and consider $A := \{\mathbf{a} \in \Z^m \, \vert \, 2 \mathbf{a} \in D\}$. Then, $A + A \subseteq D$.
\end{proposition}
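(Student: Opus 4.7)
The proof should be essentially a one-line application of the definition of convexity, so my plan is simply to expand that observation carefully. I would take two arbitrary elements $\mathbf{a}, \mathbf{b} \in A$ and aim to show $\mathbf{a} + \mathbf{b} \in D$.

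By definition of $A$, both $2\mathbf{a}$ and $2\mathbf{b}$ lie in $D$. Since $D$ is convex, it contains the segment joining these two points, and in particular the midpoint
\[ \tfrac{1}{2}(2\mathbf{a}) + \tfrac{1}{2}(2\mathbf{b}) = \mathbf{a} + \mathbf{b}. \]
Hence $\mathbf{a} + \mathbf{b} \in D$. Since $\mathbf{a},\mathbf{b}$ were arbitrary elements of $A$, we conclude $A + A \subseteq D$.

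There is really no obstacle here; the only thing to be mindful of is that $A$ consists of integer points while $D$ is a convex subset of $\R^m$, but this mismatch is harmless because the definition of $A$ forces $2\mathbf{a}, 2\mathbf{b} \in D$ as real vectors, and the midpoint $\mathbf{a}+\mathbf{b}$ happens to be integral for free. So the whole proof is two or three lines.
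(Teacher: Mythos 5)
Your proof is correct and is essentially identical to the paper's: both arguments observe that $2\mathbf{a}, 2\mathbf{b} \in D$ by definition of $A$ and then use convexity to place the midpoint $\mathbf{a}+\mathbf{b}$ of the segment joining them inside $D$. Nothing further is needed.
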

\begin{proof}It suffices to check that $\mathbf{a} + \mathbf{a'} \in D$ whenever $\mathbf{a}, \mathbf{a'}  \in A$. By definition of $A$ we have that $2\mathbf{a}, 2\mathbf{a'}  \in D$ and, since $D$ is convex, the midpoint of the segment joining $2\mathbf{a}$ and $2\mathbf{a'}$, which is $\mathbf{a} + \mathbf{a'}$, also belongs to $D$.
\end{proof}

\begin{theorem}\label{pr:convexgeneral}  Let $d \in \N$ and let $\mC_B$ be a linear code with $B  \subseteq \lii0,q-1\rii^m$ such that $d \leq d(\mC_B)$. 
Consider $C \subset \R^m$ a convex set such that
\[ C \cap \{ \mathbf{c} \in \R^m\, \vert \, 2 \mathbf{c} \in \lii0,2q-2\rii^m {\text \ and \ } [2 \mathbf{c}]_q \notin B\} = \emptyset. \]

Taking $A := C \cap \lii0,q-1\rii^m$ we have that $d(\mC_A^{\, (2)}) \geq d$.
\end{theorem}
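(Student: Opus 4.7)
The plan is to reduce the statement to the single containment $(A+A)_q \subseteq B$: once this is in hand, the conclusion follows by monotonicity of minimum distance under inclusion of codes. Specifically, Proposition \ref{pr:minkcuadrado} gives $\mC_A^{\, (2)} = \mC_{A+A}$, and the identity $z^q = z$ on $\F_q$ yields $\mC_{A+A} = \mC_{(A+A)_q}$. Assuming $(A+A)_q \subseteq B$, the evaluation space $\F_q[(A+A)_q]$ sits inside $\F_q[B]$, so $\mC_A^{\, (2)}$ is a subcode of $\mC_B$ and therefore has minimum distance at least $d(\mC_B) \geq d$.

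The heart of the argument is thus showing $(A+A)_q \subseteq B$, and this is where the convexity of $C$ enters, in the same ``midpoint'' spirit as Proposition \ref{lema:convexo}. Fix any $\mathbf{a}, \mathbf{a}' \in A$ and consider the midpoint $\mathbf{c} := (\mathbf{a} + \mathbf{a}')/2 \in \R^m$. Since $\mathbf{a}, \mathbf{a}' \in A \subseteq C$ and $C$ is convex, $\mathbf{c} \in C$. Moreover, because $\mathbf{a},\mathbf{a}' \in \lii 0,q-1 \rii^m$, we have $2\mathbf{c} = \mathbf{a}+\mathbf{a}' \in \lii 0,2q-2 \rii^m$ automatically. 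The hypothesis on $C$ then excludes the possibility $[2\mathbf{c}]_q \notin B$, giving $[\mathbf{a}+\mathbf{a}']_q \in B$. As $\mathbf{a},\mathbf{a}'$ were arbitrary, this is exactly $(A+A)_q \subseteq B$.

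The only point that requires any care is the bookkeeping for the reduction $[\cdot]_q$: one has to recognise that $[\mathbf{a}+\mathbf{a}']_q$ is precisely the coordinate-wise operation that subtracts $q-1$ in those positions where $a_i + a_i' \geq q$ (as described in the lemma preceding the theorem), and this is exactly the description of $(A+A)_q$ used in the remark preceding Proposition~\ref{pr:minkcuadrado}. I do not anticipate any genuine obstacle: once the notation is unpacked, the proof is a few lines, and no further hypothesis on $A$ is needed beyond $A = C \cap \lii 0,q-1 \rii^m$.
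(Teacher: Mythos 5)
Your proposal is correct and follows essentially the same route as the paper: reduce to the containment $(A+A)_q \subseteq B$, obtain it via the midpoint argument from convexity of $C$ together with the hypothesis that $C$ avoids the set of points $\mathbf{c}$ with $[2\mathbf{c}]_q \notin B$, and then conclude $d(\mC_A^{\,(2)}) = d(\mC_{A+A}) \geq d(\mC_B) \geq d$ using Proposition \ref{pr:minkcuadrado}. The extra bookkeeping you supply for the reduction $[\cdot]_q$ and the subcode inclusion only makes explicit what the paper leaves implicit.
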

\begin{proof} To prove the statement we will just verify that $(A + A)_q \subseteq B$ and, hence, $d(\mC_{A}^{\, (2)}) = d({\mC_{A+A}}) \geq d(\mC_B) \geq d.$ 
Let us take $\mathbf{a}, \mathbf{a'} \in A$, we have that $A \subset C$ and $C$ is a convex set, so $(\mathbf{a} + \mathbf{a'})/2 \in C$. Thus, 
$[\mathbf{a} + \mathbf{a'}]_q \in B$.
\end{proof}

This result suggests a technique to obtain, for a given $d \in \N$, a set $A$ such that $d(\mC_{A}^{\, (2)}) \geq d$, see Algorithm \ref{Algor-1}. Indeed, it suffices to consider a linear code $\mC_B$ such that
$d \leq d(\mC_B)$, choose a convex set  $C \subset \R^m$ satisfying the hypotheses of the previous result and then, take  $A := C \cap \lii0,q-1\rii^m$. If one wants to have a large value of $k(\mC_A)$ one has to choose $C$ strategically so that it has the maximum number of integer points.

\begin{algorithm}[H] \label{Algor-1}
\SetAlgoLined
\KwData{A set $B\subseteq \lii0,q-1\rii^m$ such that $d(\mC_B)\geq~d$}
\KwResult{An affine variety code $\mC_A$ such that $d(\mC_A^{\, (2)})\geq d$}
\begin{algorithmic}
\STATE Choose a convex set $C\subseteq \lii0,q-1\rii^m$ satisfying that:$$C \cap \{ \mathbf{c} \in \Q^m\, \vert \, 2 \mathbf{c} \in \lii0,2q-2\rii^m {\text \ and \ } [2 \mathbf{c}]_q \notin B\} = \emptyset.$$
 Take $A=C\cap \mathbb N^m$
\caption{Procedure to find a set $A\subseteq \lii0,q-1\rii^m$ with $d(\mC_A^{\, (2)})\geq~d$.}
\end{algorithmic}
\end{algorithm}

\medskip

In particular, if we apply the previous result to $\mC_B$ a hyperbolic code of order $d$, we get the following.

\begin{proposition}\label{pr:convexhyp} Let $C \subset \R^m$ be a convex set such that $C \cap D_{\epsilon} = \emptyset$ for  all 
$\epsilon \in \{0,1\}^m$, being 

$$
D_{\epsilon}  =   \{ (b_1,\ldots,b_m) \ \vert \ 2 b_i \in  \left\{ \begin{array}{cll} \left\lii 0,q-1\right]  & {\text if} & \epsilon_i = 0 \\  \left\lii q,2q-2\right\rii & {\text if} & \epsilon_i = 1 \end{array} \right.  \ {\rm and}\ \prod_{i = 1}^m   (q+\epsilon_i(q-1)-2b_i) < d\} 
$$
 
Then, taking $A := C \cap \lii 0,q-1\rii^m$ we have that $d(\mC_A^{\,(2)}) \geq d$.

\end{proposition}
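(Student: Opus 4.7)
The plan is to deduce this as an immediate corollary of Theorem \ref{pr:convexgeneral} applied to the hyperbolic code $\mC_B = \mathrm{Hyp}_q(d,m)$, whose defining set is
$$B = \{(i_1,\ldots,i_m) \in \lii 0, q-1 \rii^m \mid (q-i_1)\cdots (q-i_m) \geq d\}$$
and which, by the footprint bound (Theorem \ref{FB}), satisfies $d(\mC_B) \geq \mathrm{FB}(\mC_B) \geq d$. To obtain $d(\mC_A^{(2)}) \geq d$ from Theorem \ref{pr:convexgeneral} it is therefore enough to prove that the ``bad set''
$$S := \{\mathbf{c} \in \R^m \mid 2\mathbf{c} \in \lii 0, 2q-2\rii^m \text{ and } [2\mathbf{c}]_q \notin B\}$$
coincides with $\bigcup_{\epsilon \in \{0,1\}^m} D_\epsilon$, since then the hypothesis $C \cap S = \emptyset$ of Theorem \ref{pr:convexgeneral} is exactly the assumption $C \cap D_\epsilon = \emptyset$ for every $\epsilon$.

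For this equality I would argue by direct inspection of the componentwise reduction. Given $\mathbf{c} \in S$, each integer $2c_i \in \lii 0, 2q-2\rii$ lies in exactly one of the two disjoint intervals $\lii 0, q-1\rii$ or $\lii q, 2q-2\rii$, so $\epsilon_i \in \{0,1\}$ is unambiguously defined by recording in which of them $2c_i$ falls. As recalled after Proposition \ref{pr:minkcuadrado}, the reduction then reads $[2c_i]_q = 2c_i - \epsilon_i(q-1)$, whence $q - [2c_i]_q = q + \epsilon_i(q-1) - 2c_i$. The failure $[2\mathbf{c}]_q \notin B$ thus amounts to $\prod_{i=1}^m (q + \epsilon_i(q-1) - 2c_i) < d$, which is precisely the defining condition of $D_\epsilon$, so $\mathbf{c} \in D_\epsilon$. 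Conversely, the same computation shows that every $\mathbf{c} \in D_\epsilon$ lies in $S$, yielding $S = \bigcup_\epsilon D_\epsilon$ and completing the argument.

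No genuine obstacle is expected: the proof is essentially a bookkeeping translation of the convex-set criterion of Theorem \ref{pr:convexgeneral} into the explicit combinatorics of hyperbolic codes. The only mildly delicate point is the case split $2c_i \leq q-1$ versus $2c_i \geq q$, which determines the correct shift in the reduced coordinate and hence the correct $\epsilon$; disjointness of the two integer intervals $\lii 0, q-1\rii$ and $\lii q, 2q-2\rii$ guarantees that $\epsilon$ is uniquely assigned to each $\mathbf{c} \in S$, so no redundancy or gap arises in the stratification $S = \bigcup_\epsilon D_\epsilon$.
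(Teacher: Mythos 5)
Your proposal is correct and follows essentially the same route as the paper: apply Theorem \ref{pr:convexgeneral} with $B$ the defining set of $\mathrm{Hyp}_q(d,m)$ and identify the excluded set $\{\mathbf{c} \mid 2\mathbf{c} \in \lii 0,2q-2\rii^m,\ [2\mathbf{c}]_q \notin B\}$ with $\cup_{\epsilon} D_{\epsilon}$. The only difference is that you spell out the componentwise reduction $[2c_i]_q = 2c_i - \epsilon_i(q-1)$ verifying this set equality, which the paper simply asserts.
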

\begin{proof} Take $B \subseteq \lii 0,q-1\rii^m$ such that
$\mC_B = \mathrm{Hyp}_q(d,m)$; then we have that $d(\mC_{B}) \geq d$. Taking into account the following equation
\[ \{ \mathbf{c} \in \R^m\, \vert \, 2 \mathbf{c} \in \lii 0,2q-2\rii^m {\text \ and \ } [2 \mathbf{c}]_q \notin B\} = \cup_{\epsilon \in \lii 0,1\rii^m} D_{\epsilon}.\]
and Theorem  \ref{pr:convexgeneral} the result holds.
\end{proof}

Let us illustrate this result with an example, see Figure \ref{fig:convexhyp} for a graphic representation.
\begin{example}\label{ex:convexhyp} Consider $q = 11,\, m = 2$ and $d = 6$. We are going to construct a code $\mC_A$ over $\F_{11}$ such that $d(\mC_A^{\,(2)}) \geq 6$, following Proposition \ref{pr:convexhyp}. Consider $\mC_B = \mathrm{Hyp}_{11}(6,2)$ and $D_{\epsilon}$ for all $\epsilon \in \{0,1\}^2$. We choose $C$ a convex set such that $C \cap D_{\epsilon} = \emptyset$ for all $\epsilon \in \{0,1\}^2$ and take $A = C \cap \lii 0,10\rii^2$ as in Figure \ref{fig:convexhyp}. Then, as we proved in Proposition \ref{pr:convexhyp}, we have that $(A + A)_{11} \subseteq B$ and, thus, $d(\mC_A^{\,(2)}) = d(\mC_{A+A}) = d(\mC_{(A+A)_{11}}) \geq d(\mC_B) = d(\mathrm{Hyp}_{11}(6,2))  = 6.$  
\begin{figure}[h!]
\centering
\includegraphics[scale=.8]{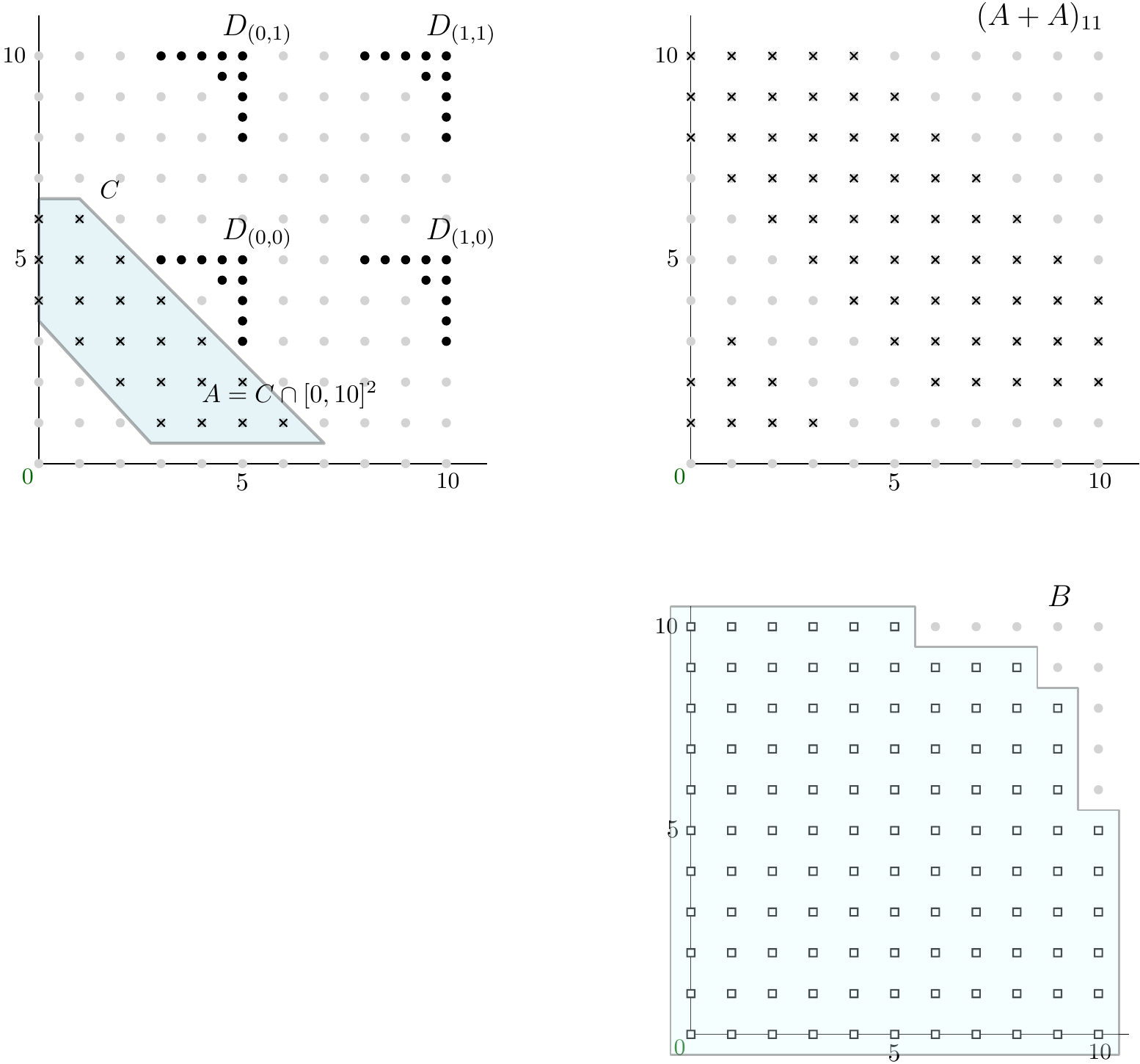}
\caption{Example of a code $\mC_A$ such that $(A+A)_{11} \subset B$ and, thus, $d(\mC_{A}^{\, (2)}) \geq d(\mC_B) = 6$  (see Example \ref{ex:convexhyp}).}  \label{fig:convexhyp}
\end{figure}
\end{example}

As one can see, following the construction of Proposition \ref{pr:convexhyp}, the number of integer points in the convex set $C$ tuns out to be the dimension of the code $A$ such 
that $d(\mC_{A}^{\, (2)}) \geq d$. So, in order to obtain a code $\mC_A$ with high dimension, one could look for convex sets with the most number of integer points possible. In the next section we are going to propose and compare several natural choices of the convex set $C$.

\section{Choosing a convex $C$ that gives affine codes with good parameters}
\label{Section4}

In the previous section we described in Algorithm \ref{Algor-1} a method that, given $d$, it returns a code $\mC_A$ such that $d (\mC_A^{\,(2)}) \geq d $. However, we would also like to find among all the codes $\mC_A$ that verify the previous property, the one that has the highest possible dimension. For this purpose, the convex set $C$ mentioned in Algorithm \ref{Algor-1} must have the maximum number of integer points.

Given a fixed value $d$, the hyperbolic code $\mC = \mathrm{Hyp}_q(d,m)$ of order $d$ is, by definition, the affine variety code with the highest  dimension among all the codes whose footprint-bound is $\geq d$. So it seems natural to run Algorithm \ref{Algor-1} being $B \subset \lii 0, q-1 \rii^m$ such that $\mC_B = \mathrm{Hyp}_q(d,m)$. Now, to choose $A$ such that $(A + A)_q \subset B$, it would be logical to expect that a certain code that behaves like a \emph{half hyperbolic}  code (see Definition \ref{def:halfhyp}) would be the best candidate  for our goal. Surprisingly, this is not always the case. As we will prove at the end of this section, when the value of $d$ is small enough there exist certain weighted Reed-Muller codes that outperform half hyperbolic codes.

\subsection{Half hyperbolic codes}
\label{Section4.1}

First let us  introduce half hyperbolic codes.

\begin{definition} \textbf{(Half hyperbolic codes)} \label{def:halfhyp}
Let $\mC_B = \mathrm{Hyp}_q(d,m)$ be an hyperbolic code with 
$B = \left\{ (i_1, \ldots, i_m)\in \lii 0,q-1\rii^m \mid (q-i_1) \cdots (q-i_m) \geq d\right\}$ and let 
$$A= \left\{ (i_1, \ldots, i_m)\in \blii 0,\frac{q-1}{2}\brii^m  \mid (q-2i_1) \cdots (q-2i_m)\geq d\right\}.$$
In other words, for $\mathbf{a} \in \blii 0,\frac{q-1}{2}\brii^m$, then $\mathbf{a} \in A$ if and only if $2\mathbf{a} \in B$.
Then, $\mC_A$ is the $q$-ary \emph{half hyperbolic} code of order $d$  and we denote it by 
$\mathrm{HalfHyp}_q(d,m)$.
\end{definition}

\begin{example}
\label{example:HHYP}
 \begin{figure}[h!]
    \centering
        \includegraphics[scale=0.8]{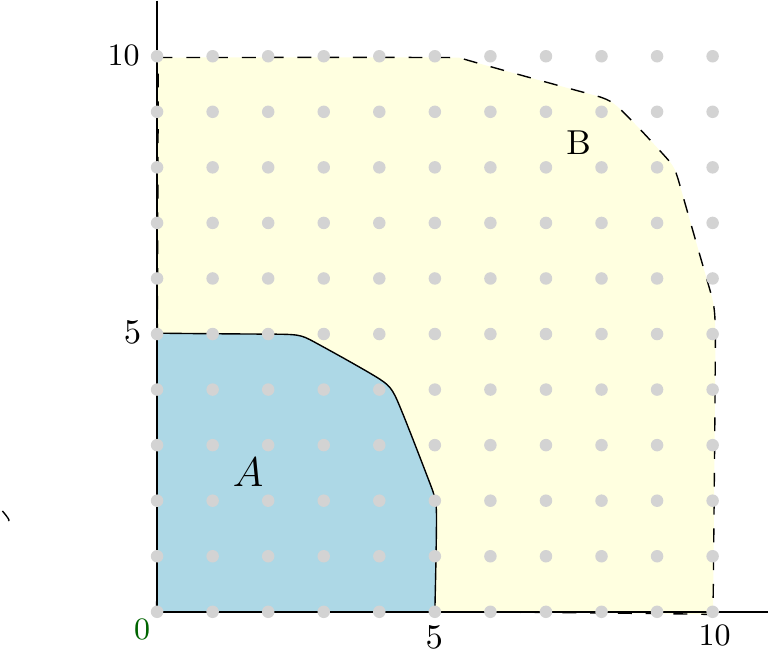}
    \caption{Figure illustrating Example \ref{example:HHYP}.}
    \label{fig:example-HHYP}
\end{figure}
Let $B=\left\{ (i,j) \in \lii 0,10\rii^2 \mid (11-i)(11-j) \geq 6\right\}$ then $\mC_B = \mathrm{Hyp}_{11}(6,2)$. The code $\mC_B$ has parameters $[11^2, 111,6]_{11}$.
Now we consider 
$$A=\left\{ (i,j) \in \lii 0,5\rii^2 \mid (11-2i)(11-2j) \geq 6\right\}$$
then $\mC_A$ is the half hyperbolic code of order $6$ and we denote it by $\mathrm{HalfHyp}_{11}(6,2)$. The code $\mC_A$ has parameters $[11^2, 25,49]_{11}$ and $d(\mC_A^{\, (2)})\geq 6$.
See Figure \ref{fig:example-HHYP} for a graphic representation of this example.
\end{example}

In the following result we use Proposition  \ref{pr:convexhyp} to prove  that the square of a half-hyperbolic code of order $d$ has minimum distance $\geq d$.

\begin{proposition}Let $d \in \Z^+$ such that $d < q^m$, then
$d(\mathrm{HalfHyp}_q(d,m)^{\, (2)}) \geq d$. 
\end{proposition}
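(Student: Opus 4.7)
The plan is to apply Proposition~\ref{pr:convexhyp} to a well-chosen convex set $C$ that recovers the half-hyperbolic index set. Concretely, I would take
\[
C := \Bigl\{ (b_1,\ldots,b_m) \in \R^m \,\Big|\, 0 \leq b_i \leq \tfrac{q-1}{2} \text{ for all } i,\ \prod_{i=1}^m (q-2b_i) \geq d \Bigr\}.
\]
The point is that on the slab $[0,(q-1)/2]^m$ every factor $q-2b_i$ is strictly positive, so the multiplicative constraint rewrites as $\sum_{i=1}^m \log(q-2b_i) \geq \log d$. Each summand is a concave function of $b_i$ (composition of $\log$ with an affine map into the positive reals), hence the sum is concave, and its superlevel set is convex. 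Intersecting with the convex box yields that $C$ is convex.

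Next I would identify $A := C \cap \lii 0,q-1\rii^m$ with the half-hyperbolic index set of Definition~\ref{def:halfhyp}. Unwinding, an integer point lies in $C$ if and only if its coordinates satisfy $0 \leq i_j \leq (q-1)/2$ and $\prod(q-2i_j) \geq d$, which is precisely the definition of $A$ for $\mathrm{HalfHyp}_q(d,m)$. The hypothesis $d < q^m$ is invoked here simply to ensure that $(0,\ldots,0) \in A$, so the code is nontrivial and $d(\mC_A^{(2)})$ is well-defined.

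Finally I would verify the emptiness condition $C \cap D_\epsilon = \emptyset$ for every $\epsilon \in \{0,1\}^m$, splitting into two cases. If $\epsilon = \mathbf 0$, then membership in $D_\mathbf 0$ forces $\prod(q-2b_i) < d$, directly contradicting the defining inequality of $C$. If $\epsilon \neq \mathbf 0$, pick any $j$ with $\epsilon_j = 1$; membership in $D_\epsilon$ then requires $2b_j \geq q$, whereas $C$ forces $b_j \leq (q-1)/2 < q/2$, a contradiction. With both conditions in place, Proposition~\ref{pr:convexhyp} delivers $d(\mathrm{HalfHyp}_q(d,m)^{(2)}) = d(\mC_A^{(2)}) \geq d$.

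The only nontrivial step is the convexity of $C$; once the log-transform trick is recognized, the remaining verifications are essentially bookkeeping, so I do not anticipate a substantive obstacle beyond correctly packaging the hyperbolic-like region as a sublevel set of a concave function.
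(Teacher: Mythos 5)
Your proposal is correct and follows essentially the same route as the paper: it uses the very same convex set $C = \{ \mathbf{b} \in [0,\tfrac{q-1}{2}]^m \mid \prod_{i=1}^m (q-2b_i) \geq d\}$, checks $C \cap D_\epsilon = \emptyset$, and invokes Proposition~\ref{pr:convexhyp} after identifying $C \cap \lii 0,q-1\rii^m$ with the half-hyperbolic index set. The only difference is that you justify the convexity of $C$ explicitly via the concavity of $\sum_i \log(q-2b_i)$ on the box, a detail the paper asserts without proof.
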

\begin{proof}Taking $C = \{\mathbf{a} = (a_1,\ldots,a_m) \in \R^m \, \vert \ \ 0 \leq a_i \leq \frac{q-1}{2},\  \prod_{i = 1}^m (q-2a_i) \geq d\}$ we have that $C$ is a convex set. Moreover, by definition of this set $C \cap D_{\epsilon} = \emptyset$ for all $\epsilon \in \{0,1\}^m$ (where $D_{\epsilon}$ is defined as in Proposition \ref{pr:convexhyp}). Thus, taking $A = C \cap \lii 0,q-1\rii^m$, Proposition \ref{pr:convexhyp} guarantees that  
$d(\mC_A^{\,(2)}) \geq d$. To finish the proof it suffices to observe that $\mC_A$ coincides with $\mathrm{HalfHyp}_q(d,m)$.
\end{proof}

Providing a formula for the dimension of a half hyperbolic code is not an easy task. Nevertheless, we provide an expression for the dimension of a half hyperbolic code when $m = 2$:

\begin{lemma}Let $d \in \Z^+$ such that $d < q^2$, then
\[ k(\mathrm{HalfHyp}_q(d,2)) = \sum_{i = 0}^{\lfloor \frac{q^2-d}{2q} \rfloor} \left\lfloor \frac{d+(q+2)(2i-q)}{4i-2q} \right\rfloor \] 
\end{lemma}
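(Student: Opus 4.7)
The plan is to compute the cardinality of the defining set $A$ directly by slicing along the first coordinate. By Definition \ref{def:halfhyp}, $k(\mathrm{HalfHyp}_q(d,2)) = |A|$, where
\[ A = \left\{(i,j) \in \blii 0, \tfrac{q-1}{2} \brii^2 \,\Big\vert\, (q-2i)(q-2j) \geq d\right\}. \]
For each fixed $i$ with $0 \leq i \leq (q-1)/2$ we have $q - 2i > 0$, so the defining inequality is equivalent to $q - 2j \geq d/(q-2i)$, i.e.
\[ j \leq \frac{q^2 - 2qi - d}{2(q-2i)}. \]
I would first verify that this upper bound is automatically strictly less than $q/2$ (hence at most $\lfloor (q-1)/2 \rfloor$), so that the ambient constraint $j \leq (q-1)/2$ never truncates the count; this follows since the right-hand side equals $q/2 - d/(2(q-2i))$ and $d \geq 1$. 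Consequently, for each such $i$ the number of admissible $j \geq 0$ equals $\lfloor (q^2 - 2qi - d)/(2(q-2i)) \rfloor + 1$, provided this quantity is nonnegative.

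Next, I would delimit the range of $i$. The value $i=0$ must produce at least the term $j=0$, and in general there exists some admissible $j$ precisely when $q^2 - 2qi - d \geq 0$, equivalently $i \leq (q^2 - d)/(2q)$. One also checks that every such $i$ lies in $\lii 0,(q-1)/2\rii$, using $d \geq 1$. Thus
\[ |A| \;=\; \sum_{i=0}^{\lfloor (q^2-d)/(2q)\rfloor} \left( \left\lfloor \frac{q^2 - 2qi - d}{2(q-2i)} \right\rfloor + 1 \right). \]

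Finally, I would perform the short algebraic identification needed to match the stated formula. Using $4i - 2q = -2(q-2i)$ and expanding $(q+2)(q-2i) = q^2 - 2qi + 2(q-2i)$, one gets
\[ \frac{d + (q+2)(2i-q)}{4i - 2q} \;=\; \frac{(q+2)(q-2i) - d}{2(q-2i)} \;=\; \frac{q^2 - 2qi - d}{2(q-2i)} + 1, \]
so applying $\lfloor x+1\rfloor = \lfloor x\rfloor + 1$ turns each summand into the one appearing in the lemma, completing the proof.

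Since the argument is a straightforward count plus an elementary identity, there is no serious obstacle; the only thing that requires care is checking that the upper bound $j \leq (q-1)/2$ and the range $i \leq (q-1)/2$ are both automatically respected (so that the truncation never introduces a $\min$), which is precisely where the hypothesis $d \geq 1$ is used.
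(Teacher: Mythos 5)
Your proposal is correct and follows essentially the same route as the paper: fix $i$, rewrite $(q-2i)(q-2j)\geq d$ as $j \leq \frac{q^2-2qi-d}{2(q-2i)}$, count the admissible $j$ per slice, and observe the slice is empty once $i > (q^2-d)/(2q)$. The only differences are cosmetic — you make explicit the check that the ambient bound $j\leq (q-1)/2$ never truncates and the final algebraic rewriting into the $(q+2)(2i-q)$ form, both of which the paper leaves implicit.
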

\begin{proof}Since  $\mathrm{HalfHyp}_q(2,d) = \mC_{A}$ with $A = \{(i,j) \in \N^2  \, \vert \, 0  \leq i,j \leq (q-1)/2$ and  $(q-2i)(q-2j) \geq d\}$, then  
\[ k(\mathrm{HalfHyp}_q(2,d)) = |A|. \]
Moreover, setting $A_i := \{j \, \vert \, (i,j) \in A\}$ for all $i \in \lii 0, (q-1)/2 \rii$, one has that 
$ |A| =   \sum_{i = 0}^{(q-1)/2} |A_i|$ and   
\begin{eqnarray*}
A_i & = &  \left\{j \, \vert \, 
0 \leq j \leq (q-1)/2\ {\rm and}\ (q-2i)(q-2j) \geq d\right\} \\ 
& = & \left\{j \, \vert \,0 \leq j \leq (q-1)/2\ {\rm and}\  q - 2j \geq d / (q-2i )\right\} \\ 
& = &  \left\{j \, \vert \, 0 \leq j \leq \frac{d+q(2i-q)}{4i-2q}\right\}.    
\end{eqnarray*}
Hence $A_i = \emptyset$ whenever $i > (q^2 - d)/2q$; and $|A_i| = \left\lfloor \frac{d+q(2i-q)}{4i-2q} \right\rfloor + 1$ otherwise.
\end{proof}

When $d \geq q$, the sets $D_{\epsilon}$ in Proposition \ref{pr:convexhyp} seem to `divide' $\lii 0,q-1 \rii^m$ into $2^m$ regions. For this reason, we propose $\mC = {\rm HalfHyp}_q(d,m)$, the half hyperbolic code of order $d$, as a code with high dimension $k(\mC)$ and satisfying that $d(\mC^{(2)}) \geq d$ (see Figure \ref{fig:HHypwinsforbigd}). As we will see in the following subsection, when $d < q$ one can find better options in the family of weighted Reed-Muller codes.

 \begin{figure}[h!]
    \centering
        \includegraphics[scale=0.8]{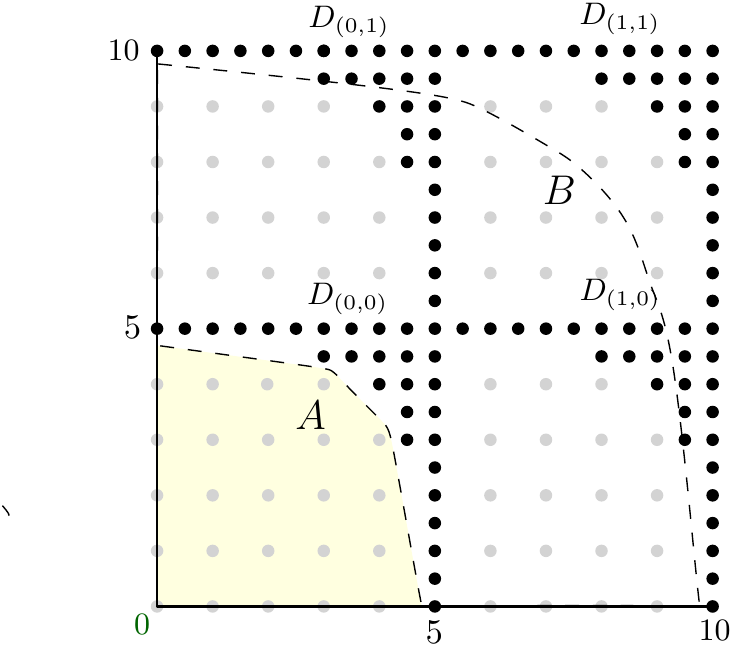}
    \caption{Example over $\mathbb F_{11}$ with $m = 2$ and $d = 12$. The sets $D_{\epsilon}$ described in Theorem \ref{pr:convexgeneral} seem to divide $\lii 0,10 \rii^2$ into
    4 regions. The code $\mC_A = {\rm HalfHyp}_{11}(11,2)$, which has parameters $[11^2, 24, 56]_{11}$  is a code with high dimension and $d(\mC_A^{(2)}) \geq 11$.}
    \label{fig:HHypwinsforbigd}
\end{figure}

\subsection{Weighted Reed-Muller codes}
\label{Section4.2}

It is not difficult to see that when $d \geq q$ and $\mC$ is a weighted Reed-Muller code with $d(\mC^{(2)}) \geq d$, then ${\rm FB}(\mC) \geq {\rm FB}({\rm HalfHyp}_q(m,d))$ and, hence, $k(\mC) \leq k({\rm HalfHyp}_q(m,d))$. As we will see at the end of this section, this is no longer true for all the values $d < q$, where some weighted Reed-Muller codes outperform half hyperbolic ones when $d$ is small enough (see Propositions \ref{prop:odd} and \ref{prop:even}). For simplicity this section concerns the case $m = 2$. Before proving Propositions \ref{prop:odd} and \ref{prop:even}, we characterize  which is the weighted Reed-Muller code with highest dimension among those verifying that the minimum distance of its square is at least $d$, provided $d < q$. It happens that the choice of this code depends on the parity of $d$ (see Theorem \ref{th:wrmdimpar} for $d$ odd, and Theorem \ref{th:wrmdpar} for $d$ even).

A first observation is that if $\mC$ is a weighted Reed-Muller code then $\mC^{\, (2)}$ is not necessarily a weighted Reed-Muller code as the following example shows.
\begin{example}
\label{Example:counterexample}
 \begin{figure}[h!]
    \centering
    \begin{subfigure}[b]{0.45\textwidth}
        \includegraphics[scale=0.8]{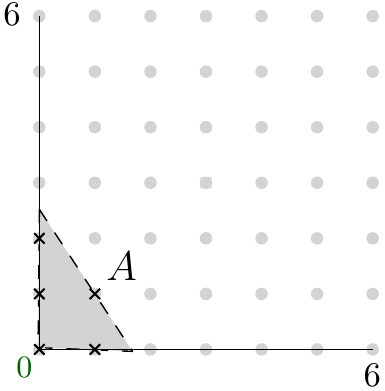}
        \caption{$A=\left\{ (i,j) \mid 3i+2j\leq 5\right\}$.}
        %\label{fig:gull}
    \end{subfigure}
    \begin{subfigure}[b]{0.45\textwidth}
        \includegraphics[scale=0.8]{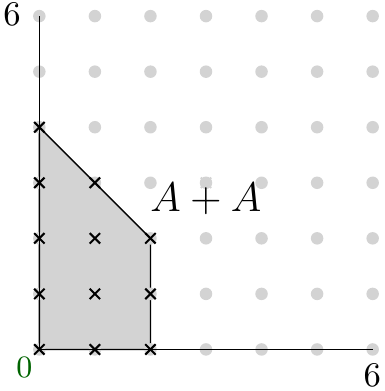}
        \caption{$A+A$ is marked with crosses.}
        \label{Fig:A+A}
    \end{subfigure}\\
    \caption{Figure illustrating Example \ref{Example:counterexample}.}
    \label{fig:lemma-WRM}
\end{figure}
Consider the weighted Reed-Muller code $\mC_A$ over $\mathbb F_7$ with 
$$A=\left\{ (i,j) \mid 3i+2j\leq 5\right\}.$$ Then the code $\mC_{A+A}$ is the affine variety code that consists of the evaluation of polynomials $f\in \mathbb F_q[A+A]$ in the points of $\mathbb F_7^2$, where $$A+A = \{(i,j) \, \vert \, 0 \leq i,j \leq 2\} \cup \{(0,3),(0,4),(1,3)\}$$ 
(see Figure \ref{Fig:A+A}). It is easy to check that there is no weighted Reed-Muller code $\mC_B$ such that $(2,2) \in B$, but $(0,5), (3,0) \notin B$. Thus, $\mC_{A+A}$ is not a weighted Reed-Muller code.
\end{example}

Despite the fact that the square of a weighted Reed-Muller is not necessarily a weighted Reed-Muller code, they verify the following property which will be important in the proofs
of the main results.
\begin{lemma}
\label{Lemma:WRM-sharp}
If $\mC$ is a weighted Reed-Muller code then $d(\mC^{\, (2)}) = {\rm FB}(\mC^{\, (2)})$.
\end{lemma}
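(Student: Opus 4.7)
The plan is to reduce the claim to a downset/monotonicity statement about the exponent set. Writing $\mC = \mC_A$ with $A = \{(i_1,\ldots,i_m) \in \lii 0, q-1\rii^m : s_1 i_1 + \cdots + s_m i_m \leq s\}$, Proposition \ref{pr:minkcuadrado} gives $\mC^{(2)} = \mC_{A+A} = \mC_B$ for $B := (A+A)_q \subseteq \lii 0, q-1\rii^m$. Theorem \ref{FB} already provides $d(\mC^{(2)}) \geq \mathrm{FB}(\mC^{(2)})$, so the equality will follow once we exhibit a codeword of $\mC^{(2)}$ of weight exactly $\mathrm{FB}(\mC^{(2)})$. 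A clean sufficient condition for this (essentially Lemma \ref{FB-1}) is that $B$ be a \emph{downset} in $\lii 0, q-1\rii^m$, i.e.\ closed under the componentwise order.

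I would verify the downset property in three steps. (i) $A$ itself is a downset, since its defining linear inequality is monotone in each coordinate. (ii) $A+A$ is a downset: given $c = a + b \in A+A$ and $d \in \N^m$ with $d \leq c$, set $a'_k := \min(a_k, d_k)$ and $b'_k := d_k - a'_k$; then $a' \leq a$ and $b' \leq b$ componentwise, so $a', b' \in A$ by (i) and $d = a' + b' \in A+A$. (iii) Finally, $B$ is the image of $A+A \subseteq \lii 0, 2q-2\rii^m$ under the coordinatewise reduction $\rho$ associated with $X_k^q = X_k$, which fixes $\lii 0, q-1\rii$ and sends $c \in \lii q, 2q-2\rii$ to $c-(q-1)$; in particular $\rho(c) \leq c$ in every coordinate. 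Thus any $i \in B$ satisfies $i \leq c$ for some $c \in A+A$, hence $i \in A+A$ by (ii); and then every $j \in \lii 0, q-1\rii^m$ with $j \leq i$ also lies in $A+A$ by (ii), so in $B$ since its coordinates are already reduced.

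With the downset property established, the standard sharpness argument applies. Pick $(i_1^*,\ldots,i_m^*) \in B$ realising $\mathrm{FB}(\mC^{(2)}) = \prod_j(q-i_j^*)$, and choose distinct elements $\alpha_{j,k} \in \F_q$ for $1 \leq j \leq m$ and $1 \leq k \leq i_j^*$ (possible as $i_j^* \leq q-1$). Set $f := \prod_{j=1}^m\prod_{k=1}^{i_j^*}(X_j - \alpha_{j,k})$. Every monomial appearing in $f$ has exponent vector $\leq (i_1^*,\ldots,i_m^*)$ componentwise, hence lies in $B$ by the downset property, so $f$ represents a class in $\F_q[B]$. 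A point of $\F_q^m$ is a non-zero of $f$ exactly when its $j$-th coordinate avoids $\{\alpha_{j,1},\ldots,\alpha_{j,i_j^*}\}$ for every $j$, giving $\prod_j(q-i_j^*)$ non-zeros. Thus $\mathrm{ev}_{\mathcal P}(f)$ is a codeword of $\mC^{(2)}$ of weight $\mathrm{FB}(\mC^{(2)})$.

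The step I expect to require the most care is part (iii) of the downset verification: naively, passing from $A+A$ to $(A+A)_q$ via a nonlinear reduction could destroy monotonicity. The saving observation is the purely numerical fact $\rho(c) \leq c$, which lets the downset of $A+A$ be inherited by $B$ without any further algebraic input.
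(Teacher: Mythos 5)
Your proof is correct and follows essentially the same route as the paper: use that the exponent set of a weighted Reed--Muller code is a downset, deduce that everything below the footprint-achieving point of the square lies in the exponent set of $\mC^{(2)}$, and exhibit a codeword of weight $\mathrm{FB}(\mC^{(2)})$ as a product of linear factors (this is exactly Lemma \ref{FB-1}, which the paper cites instead of re-deriving). The only difference is one of completeness: you spell out the decomposition step that the paper dismisses with ``one easily gets'' and you explicitly check that the reduction $(A+A)_q$ preserves the downset property, details the paper leaves implicit.
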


\begin{proof}
Let $\mC_A$ be a weighted Reed-Muller code with $A \subset \lii 0,q-1 \rii^m$ and suppose that 
$$\mathrm{FB}(\mC^{\, (2)}) = \prod_{i = 1}^m (q-\alpha_i),$$
for some $\mathbf{a} = (\alpha_1, \ldots, \alpha_m)\in A+A$. Then $\mathbf{a} = \mathbf{b} + \mathbf{c}$ for some $\mathbf{b}, \mathbf{c} \in A$.
Taking the componentwise partial order $\leq$ in $\lii 0,q-1 \rii^m$ and $\mathbf{b'} \leq \mathbf{b}$ and $\mathbf{c'} \leq\mathbf{c}$,
 one has that   $\mathbf{b'}, \mathbf{c'} \in A$  because $\mC_A$ is a weighted Reed-Muller code. Then one easily gets that $\mathbf{a'} \in A + A$ for all $\mathbf{a'} \leq \mathbf{a}$ and  applying Lemma \ref{FB-1} we complete the proof.
\end{proof}

When $d \geq q$, it is easy to verify that the weighted Reed-Muller code with maximum dimension and designed minimum distance is a Reed-Muller code. Now we are going to characterize which are the weighted Reed-Muller codes with maximum dimension and designed minimum distance when $d < q$. We will have that it is also a Reed-Muller code when $d$ is odd, but instead, it is a weighted Reed-Muller one when $d$ is even. 

 \begin{theorem}\label{th:wrmdimpar} Let $\F_q$ be a finite field and $d \in \Z^+$ be an odd integer with $d < q$ and let $s := q-\frac{d+1}{2}$. If $\mC$ is a weighted Reed-Muller code over $\F_q$ with $d(\mC^{\, (2)}) \geq d$, then
$k(\mC) \leq k(\mathrm{RM}_q(2, s))$.
% Moreover, $d(\mathrm{RM}_q(2, s)^{\, (2)}) = d$.
 \end{theorem}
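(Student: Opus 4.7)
My plan. By Lemma~\ref{Lemma:WRM-sharp}, the hypothesis $d(\mC^{(2)}) \geq d$ is equivalent to requiring $(q-i)(q-j) \geq d$ for every $(i,j) \in (A+A)_q$, where $A = \{(i,j) \in \lii 0,q-1\rii^2 : s_1 i + s_2 j \leq r\}$. Let $\alpha$ and $\beta$ denote the maximal first and second coordinates of points in $A$. I would first prove the intercept inequality $\alpha + \beta \leq 2s$ and then conclude by a lattice point count.

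\emph{Step 1 (intercept bound).} I would apply the hyperbolic condition at two families of critical points of $A+A$. The diagonal corner $(\alpha,\beta) = (\alpha,0)+(0,\beta) \in A+A$ forces $(q-\alpha)(q-\beta) \geq d$. When $\alpha \geq (q-1)/2$, the integer point $(q-1, j^{*})$ with $j^{*} = \lfloor \beta(2\alpha - q+1)/\alpha \rfloor$ lies in $A+A$ and enforces the strict inequality $\beta(2\alpha - q+1) < \alpha(q-d+1)$ (analogously for $(i^{*}, q-1)$ when $\beta \geq (q-1)/2$). Assuming WLOG $\alpha \geq \beta$, write $\alpha = s+p$, $\beta = s+m-p$ with $m, p \in \Z$, $p \geq \lceil m/2 \rceil$, and set $e = (d+1)/2$. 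A direct computation shows that at the balanced choice $p = \lceil m/2 \rceil$ the edge inequality collapses to $s < q - (d+m)/2$, which contradicts $s = q - (d+1)/2$ for any $m \geq 1$; for larger $p$, the edge inequality is compatible only with $p \geq e$, but the diagonal inequality rewrites as $p(p-1) \leq e^2 - 3e + 1$, which is violated as soon as $p \geq e$. Hence no $m \geq 1$ is feasible, so $\alpha + \beta \leq 2s$. The sub-case $\alpha,\beta < (q-1)/2$ yields $\alpha+\beta < q - 1 \leq 2s$ trivially since $d \leq q-1$.

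\emph{Step 2 (dimension count).} When $A$ is a right triangle with legs $\alpha, \beta \leq q-1$, the classical formula
\[
|A| = \tfrac{1}{2}\bigl((\alpha+1)(\beta+1) + \gcd(\alpha,\beta) + 1\bigr),
\]
combined with $(\alpha+1)(\beta+1) = \alpha\beta + (\alpha+\beta) + 1$, the AM--GM bound $\alpha\beta \leq s^2$ under $\alpha + \beta \leq 2s$, and $\gcd(\alpha,\beta) \leq \min(\alpha,\beta) \leq s$, yields $|A| \leq (s+1)(s+2)/2 = k(\mathrm{RM}_q(2,s))$, with equality attained at $\alpha = \beta = s$. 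If instead $\alpha = q-1$ (box truncation), the additional constraint extracted from $(2(q-1), 2\beta) \in A+A$ after the mod-$q$ reduction forces $\beta \leq (q-d)/2$, and thus $|A| \leq q(\beta+1) \leq q(q-d+2)/2$; the inequality $(s+1)(s+2) \geq q(q-d+2)$ reduces to $(d-1)(d-3) \geq 0$, which holds for every odd $d \geq 1$. The symmetric box-truncation in $\beta$ is handled identically.

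The chief obstacle I expect is Step~1: the algebraic manipulations require attention to the strictness of the edge inequality (which comes from the floor in $j^{*}$) and to the parities of $m$. One must also verify that the edge point $(q-1, j^{*})$ actually lies in $A+A$, but this follows from $\alpha \geq s+1 \geq \lceil (q-1)/2 \rceil$, which is automatic under $\alpha + \beta \geq 2s+1$ with $\alpha \geq \beta$ since $d \leq q-1$.
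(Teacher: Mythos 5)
Your overall strategy (reduce to an ``intercept bound'' $\alpha+\beta\leq 2s$ and then count lattice points in a triangle) differs from the paper's proof, which works directly with the defining inequality $w_1 i + j \leq \lambda$ of the weighted Reed--Muller set and maps $A\setminus B$ injectively into $B\setminus A$ by the point symmetry through $\left(\frac{q-1}{2},\frac{q-d+1}{2}\right)$, using that $(q-1,q-d+1)\notin A+A$ (via Lemma \ref{Lemma:WRM-sharp}). Unfortunately, as written your argument has a genuine gap: you implicitly replace the weighted Reed--Muller set $A$ by the right triangle spanned by its lattice intercepts $(\alpha,\beta)$, both when you place the edge point $\bigl(q-1,\lfloor \beta(2\alpha-q+1)/\alpha\rfloor\bigr)$ in $A+A$ (this uses the line $x/\alpha+y/\beta=1$ as the boundary of $A$) and in Step~2, where the formula $|A|=\frac{1}{2}\bigl((\alpha+1)(\beta+1)+\gcd(\alpha,\beta)+1\bigr)$ only counts the points of that triangle. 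But a weighted Reed--Muller set is determined by its defining line, not by its lattice intercepts, and it can strictly contain the triangle on those intercepts: for instance $\{(i,j): 2i+3j\leq 13\}$ has lattice intercepts $(6,4)$ yet contains $(5,1)$, which lies outside the triangle $2i+3j\leq 12$. Hence neither the ``edge inequality'' of Step~1 nor the upper bound on $|A|$ in Step~2 follows from the hypothesis for a general weighted Reed--Muller code; the maximum of $y_1+y_2$ over pairs with $x_1+x_2=q-1$ is governed by the actual weights $(w_1,w_2)$, not by $(\alpha,\beta)$. (The floor-sum issue you flag --- $\lfloor u\rfloor+\lfloor v\rfloor$ may fall short of $\lfloor u+v\rfloor$ --- is a further unresolved point even in the exact-triangle case.)

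The box-truncation branch of Step~2 is also incorrect. You deduce $\beta\leq (q-d)/2$ from ``$(2(q-1),2\beta)\in A+A$'', but when $\alpha=q-1$ the point $(q-1,\beta)$ need not belong to $A$, so that sum need not be in $A+A$; and the conclusion itself is false in general. Over $\F_{11}$ with $d=3$ (so $s=9$), the weighted Reed--Muller set $A=\{(i,j)\in\lii 0,10\rii^2 : 0.8\,i+j\leq 8\}$ satisfies $\mathrm{FB}(\mC_A^{\,(2)})\geq 3$, has $\alpha=10=q-1$ and $\beta=8>(q-d)/2=4$; the theorem still holds here ($|A|=51\leq 55$), but your chain of estimates does not, since $q(\beta+1)=99>k(\mathrm{RM}_{11}(2,9))=55$. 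To repair the proof you would have to carry the true defining inequality $w_1 i + w_2 j\leq\lambda$ through both steps, which is essentially what the paper's symmetry argument does, or find another way to control $|A|$ that does not pass through the intercepts alone.
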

\begin{proof} Let $\mC$ be a weighted Reed-Muller code over $\F_q$ with $d(\mC^{\, (2)}) \geq d$. We assume without loss of generality that $\mC = \mathrm{WRM}_q(\lambda,2,(w_1,1))$ for some $\lambda, w_1 > 0$. Taking 
$$A := \{(i,j) \in \lii 0,q-1 \rii \, \vert \, w_1 i + j \leq \lambda\}$$ we have that $\mC = \mC_A$. 

In this proof we denote $a:= (q-1)/2$ and $b := (q-d+1)/2$; and observe that $(2a,2b) \in \N^2$ and that $s = a + b - \frac{1}{2}.$

We divide the proof in two cases depending on the value of $\lambda$.

\textbf{Case I:} $\lambda \leq a + b + \frac{1}{2}$. If we consider  $B := \{(i,j) \in \N^2\, \vert \, i + j \leq s\}$, then $\mathrm{RM}_q(2,s) =  \mC_{B}$. To prove that $|A| = k(\mC) \leq k(\mathrm{RM}_q(2,s)) = |B|$ we are going to prove that either $A \subseteq B$, or  the symmetry through the point $(a,b)$:  
\[ \begin{array}{lccl} \varphi: & A - B &\ \longrightarrow\ & B - A \\ & (\alpha,\beta) & \mapsto & (2a-\alpha, 2b - \beta) \end{array}\]
is an injective map (see Figure \ref{fig:thwrmdimpar} for a graphic representation of this idea). 

 \begin{figure}[h!]
    \centering
        \includegraphics[scale=0.8]{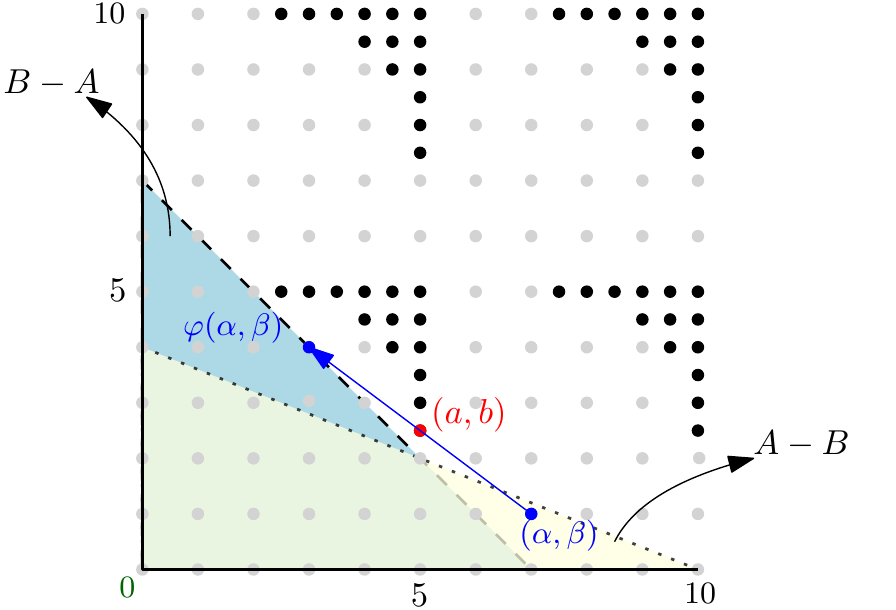}
    \caption{Figure illustrating the proof of Theorem \ref{th:wrmdimpar} for $d=7$, $q=11$, $(a,b)=(5,2.5)$, $A=\left\{ (i,j)\in \mathbb N^2 \mid 0.4 i + j \leq 4 \right\}$ and $B=\left\{ (i,j) \in \mathbb N^2 \mid i+j \leq 7\right\}$.}
    \label{fig:thwrmdimpar}
\end{figure}

Since the injectivity of $\varphi$ is easy to check, we are  proving that $\varphi$ is well defined in three steps:
\begin{itemize}
\item[(a)] if $(\alpha,\beta) \in A$, then $(2a-\alpha,2b-\beta) \notin A$,
\item[(b)] if $(\alpha,\beta) \in A - B$, then $(2a-\alpha,2b-\beta) \in \N^2$, and
\item[(c)] if $(\alpha,\beta) \in A - B$, then $(2a-\alpha,2b-\beta) \in B$.
\end{itemize}
If (a) does not hold, then both $(\alpha,\beta)$ and $(2a-\alpha,2b-\beta) \in A$. Hence, $(2a,2b) = (\alpha,\beta) + (2a-\alpha,2b-\beta) \in A + A$ and $\mC_A^{\, (2)} = \mC_{A+A}$. Since $\mC_A$ is a weighted Reed-Muller code, by Lemma \ref{Lemma:WRM-sharp} we have that $d \leq d(\mC^{\,(2)}) = {\rm FB}(\mC^{\,(2)}) \leq (q-2a)(q-2b) = d-1$, a contradiction.

We observe that $(2a-\alpha,2b-\beta) \in \Z^2$ and that $\alpha \leq q-1 = 2a$, so to prove (b) we just need to see that $2b - \beta \geq 0$. 
Assume that $2b < \beta$ and let us prove that 

\begin{itemize}
\item[(b.1)] $P_1 = (a,b + \frac{1}{2}),\ Q_1 = (a, b - \frac{1}{2}) \in A$ if $q$ is odd, or
\item[(b.2)] $P_2 = (a + \frac{1}{2}, b),\ Q_2 = (a - \frac{1}{2}, b)  \in A$ if $q$ is even. 
\end{itemize}

If $\alpha > a$, then $\alpha \geq a+\frac{1}{2}$ since $\beta \geq 2b + 1 > b + \frac{1}{2}$ we have that $P_1, Q_1 \in A$ in case (b.1) and $P_2, Q_2 \in A$ in case (b.2). If $\alpha \leq a$, from one side we have that $(\alpha, \beta) \notin B$, so 
\begin{equation}\label{eq:11}
\alpha + \beta \geq s + 1 = a+b+\frac{1}{2}
\end{equation} and, from the other side we have that $(\alpha,\beta) \in A$, which implies that 
\begin{equation}\label{eq:21}
w_1 \alpha + \beta \leq \lambda.
\end{equation} From (\ref{eq:11}) and (\ref{eq:21}) we get that
\[ (w_1 - 1) \alpha  + a + b + \frac{1}{2} \leq (w_1 - 1) \alpha + \alpha + \beta \leq w_1 \alpha + \beta \leq \lambda \leq a + b+ \frac{1}{2}\]
and, thus, $w_1 \leq 1$. Hence, using that $\alpha \leq a$, (\ref{eq:11}) and (\ref{eq:21}) we get that

\begin{eqnarray*}
w_1(a+\tfrac{1}{2}) + b & \leq  & w_1 a + b + \tfrac{1}{2} \\ 
&  = & a + b + \tfrac{1}{2} + (w_1-1)a \\ 
& \leq & a + b + \tfrac{1}{2} + (w_1-1)\alpha \\ 
& \leq & \alpha + \beta + (w_1-1)\alpha = \\ & = & w_1 \alpha + \beta \leq \lambda 
\end{eqnarray*}
and we conclude that $P_1,Q_1 \in A$ in case (b.1) and $P_2,Q_2 \in A$ in case  (b.2). Moreover, since $P_1 + Q_1 = P_2 + Q_2 = (2a,2b)$, in both cases we obtain that $(2a,2b) \in A+A$ and $\mC_A^{\, (2)} = \mC_{A+A}$. Since $\mC_A$ is a weighted Reed-Muller code, by Lemma \ref{Lemma:WRM-sharp} we have that $d \leq d(\mC^{\, (2)}) = {\rm FB}(\mC^{\,(2)}) \leq (q-2a)(q-2b) = d-1$, a contradiction.

Let us prove now (c). Whenever $(\alpha,\beta) \in A-B$, then $\alpha + \beta \geq s + 1$. Since $a+b = s+\frac{1}{2}$, we have that $2a-\alpha  + 2b-\beta \leq s$ and $(2a-\alpha, 2b-\beta) \in \N^2$ by (b), so $(2a-\alpha,2b-\beta) \in B$.

\textbf{Case II:} $\lambda > a + b + \frac{1}{2}$. We claim that $\frac{\lambda}{w_1} < a + b + \frac{1}{2}$. Otherwise, we have that  
$(a + \frac{1}{2},b), (a - \frac{1}{2},b) \in A$ if $q$ is even, or
$(a,b + \frac{1}{2}), (a,b - \frac{1}{2}) \in A$ if $q$ is odd. In both cases $(2a,2b) \in A + A$ and $\mC_A^{\, (2)} = \mC_{A+A}$. Since $\mC_A$ is a weighted Reed-Muller code, by Lemma \ref{Lemma:WRM-sharp} we have that $d \leq d(\mC^{\, (2)}) \leq (q-2a)(q-2b) = d-1$, a contradiction. 

Since $\frac{\lambda}{w_1} < a + b + \frac{1}{2}$, then $A = \{(i,j) \in \N^2 \, \vert \, 0 \leq i,j \leq q-1 {\text \ and \ } i + \frac{1}{w_1}{j} \leq \frac{\lambda}{w_1}\}$ and a symmetric argument to {\it Case I} applies here. 
\end{proof}

Since $(\mathrm{RM}_q(2, s)^{\, (2)}) = d$, this means that $\mathrm{RM}_q(2, s)$ has the highest dimension among all the weighted Reed-Muller codes $\mC$ such that $d(\mC^{\, (2)}) \geq d$.

 \begin{lemma} \label{lemma:WRM} Let $\F_q$ be a finite field and $d \in \Z^+$ be an even integer with $d < q$ and let $s := q-\frac{d}{2}$. Let \[ \begin{array}{lll} B_1  & := & \{(i,j) \in \N^2 \, \vert \, i+j < s\} \cup 
 \{(i,j) \in \N^2 \, \vert \, i + j = s {\text \ and\ } j < (q-d+1)/2\}, {\text \ and } \\ B_2 & := & \{(i,j) \in \N^2 \, \vert \, i+j < s\} \cup 
 \{(i,j) \in \N^2 \, \vert \, i + j = s {\text \ and\ } i < (q-d+1)/2\} \end{array}\] then, $\mC_{B_1}$ and $\mC_{B_2}$ are weighted Reed-Muller codes and $k(\mC_{B_1}) = k(\mC_{B_2})$.
 \end{lemma}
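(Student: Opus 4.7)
The plan splits into two parts.

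\emph{Equality of dimensions.} Since $\dim \mC_A = |A|$ for any $A \subseteq \lii 0, q-1\rii^m$, it suffices to show $|B_1| = |B_2|$. The coordinate-swap $\sigma(i,j) = (j,i)$ restricts to a bijection $B_1 \to B_2$: it fixes the triangle $\{i + j < s\}$ setwise, and sends the arc $\{i+j=s,\ j < (q-d+1)/2\}$ bijectively onto $\{i+j=s,\ i < (q-d+1)/2\}$. Hence $|B_1| = |B_2|$ and $k(\mC_{B_1}) = k(\mC_{B_2})$.

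\emph{Weighted Reed-Muller structure.} Set $j_0 := \lceil (q-d+1)/2 \rceil - 1$; this is the largest integer $j$ with $(s-j, j) \in B_1$, since $j<(q-d+1)/2$ is equivalent to $j \le j_0$. The core of the proof is to establish the identity
\[ B_1 \;=\; \{(i,j) \in \lii 0, q-1\rii^2 \mid s\, i + (s+1)\, j \leq s^2 + j_0\}, \]
which would directly exhibit $\mC_{B_1}$ as the weighted Reed-Muller code $\mathrm{WRM}_q(s^2 + j_0,\, 2,\, (s, s+1))$. The linear form rewrites as $s(i+j) + j$, and the verification is a four-case analysis on the position of $(i,j)$ relative to the line $i+j = s$. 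On the line itself the form equals $s^2 + j$, so inclusion is exactly $j \le j_0$ and exclusion is exactly $j \ge j_0 + 1$; in the triangle $i+j \leq s-1$ the form is at most $s(s-1) + (s-1) = s^2 - 1 \leq s^2 + j_0$, so all points are included; and in the upper region $i+j \geq s+1$ the form is at least $s(s+1) = s^2 + s > s^2 + j_0$, using the inequality $s > j_0$, so all points are excluded. For $\mC_{B_2}$ the analogous identity holds with weights exchanged, giving $\mC_{B_2} = \mathrm{WRM}_q(s^2 + j_0,\, 2,\, (s+1, s))$.

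\emph{Main obstacle.} The real content is identifying the right weights. Choosing $(s, s+1)$ makes the linear form factor as $s(i+j) + j$, decoupling the ``triangle'' variable $i+j$ from the ``arc'' variable $j$ and reducing the four-case verification to simple bookkeeping; any other ratio either fails to separate two consecutive lattice points on the critical edge $i+j = s$ or leaks into a forbidden region. The elementary bound $s > j_0$ needed in the last case follows from $j_0 \leq (q-d)/2 < q - d/2 = s$, regardless of the parity of $q$. No deeper difficulty is expected.
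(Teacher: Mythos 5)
Your proof is correct. For the weighted Reed--Muller structure you are doing, in explicit form, what the paper's proof only gestures at: the paper says it ``suffices to perturb slightly the line $x+y=s$'', while you actually exhibit the perturbation, namely the weights $(s,s+1)$ (resp. $(s+1,s)$) and threshold $s^2+j_0$ with $j_0=\lceil (q-d+1)/2\rceil-1$, and verify the resulting identity $B_1=\{(i,j)\in\lii 0,q-1\rii^2 \mid s\,i+(s+1)\,j\le s^2+j_0\}$ by the case analysis on $i+j$; your side conditions ($B_1\subseteq\lii 0,q-1\rii^2$ since $s\le q-1$, and $j_0<s$) all check out, so this part is a rigorous instantiation of the same idea. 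Where you genuinely diverge is the equality of dimensions: the paper computes $|B_1|$ and $|B_2|$ by an explicit closed formula (whose printed form, incidentally, does not match the count $k=39$ in the paper's own Example with $q=11$, $d=6$, so it appears to contain a typo), whereas you observe that the coordinate swap $(i,j)\mapsto(j,i)$ is a bijection $B_1\to B_2$, which gives $|B_1|=|B_2|$ with no computation at all and is immune to such slips. The trade-off is that the paper's route, when the formula is corrected, also yields the explicit value of $k(\mC_{B_1})$ (which is used later when comparing with half hyperbolic codes), while your symmetry argument proves only the equality; within the scope of this lemma, however, your argument is complete and arguably cleaner.
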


 \begin{proof}
 It suffices to perturb slightly the line $x + y = s$ to get that both $\mC_{B_1}$ and $\mC_{B_2}$ are weighted Reed-Muller codes. Indeed it is easy to check that 
 $$k(\mC_{B_1}) = |B_1| = \frac{(q-\frac{d}{2}+2)(q-\frac{d}{2}+1)}{2} + \frac{q-\frac{d}{2}}{2} -1 =  k(\mC_{B_2})
 $$
 See an illustration in Figure \ref{fig:lemma-WRM}.
 \end{proof}

 \begin{figure}[h!]
    \centering
    \begin{subfigure}[b]{0.45\textwidth}
        \includegraphics[scale=0.8]{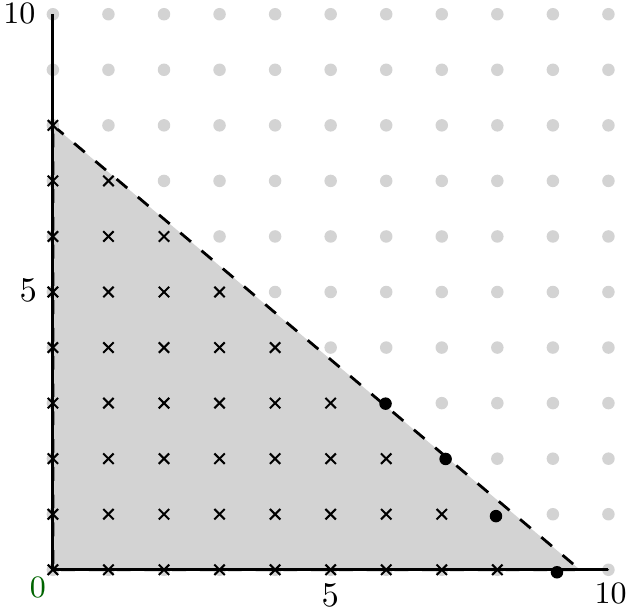}
        \caption{Example of the code $\mC_{B_1}$ of Lemma \ref{lemma:WRM}.}
        %\label{fig:gull}
    \end{subfigure}
    \begin{subfigure}[b]{0.45\textwidth}
        \includegraphics[scale=0.8]{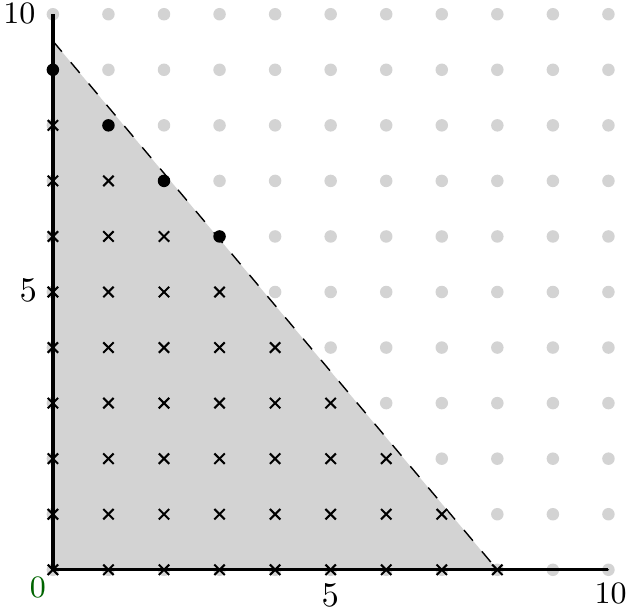}
        \caption{Example of the code $\mC_{B_2}$ of Lemma \ref{lemma:WRM}.}
    \end{subfigure}\\
    \caption{Example illustrating Lemma \ref{lemma:WRM} with $q=11$ and $d=4$.}
    \label{fig:lemma-WRM}
\end{figure}

We can now consider the case when the minimum distance is even.

 \begin{theorem}\label{th:wrmdpar} Let $\F_q$ be a finite field and $d \in \Z^+$ be an even integer with $d < q$. If $\mC$ is a weighted Reed-Muller code over $\F_q$ with $d(\mC^{\, (2)}) \geq d$, then
$k(\mC) \leq k(\mC_B),$ where $\mC_B$ is any of the weighted Reed-Muller codes described in Lemma \ref{lemma:WRM}.
 \end{theorem}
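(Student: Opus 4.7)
The plan is to adapt the reflection argument of Theorem~\ref{th:wrmdimpar}. Write $\mC = \mC_A$ with $A = \{(i,j) \in \lii 0, q-1 \rii^2 : w_1 i + j \leq \lambda\}$ for some $w_1, \lambda > 0$. Exchanging coordinates if necessary (which swaps $B_1 \leftrightarrow B_2$ and preserves dimensions by Lemma~\ref{lemma:WRM}), I may assume $w_1 \leq 1$, and the goal becomes $|A| \leq |B_1|$. Set $a := (q-1)/2$ and $b := (q-d+1)/2$; then $a + b = s$ and $(q-2a)(q-2b) = 1 \cdot (d-1) = d-1 < d$. By Lemma~\ref{Lemma:WRM-sharp} the hypothesis $d(\mC^{(2)}) \geq d$ is equivalent to $\mathrm{FB}(\mC_{A+A}) \geq d$, and therefore $(2a, 2b) = (q-1, q-d+1) \notin A + A$.

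I will show that the reflection $\varphi \colon (\alpha, \beta) \mapsto (2a - \alpha,\, 2b - \beta)$ through $(a,b)$ defines an injection $A \setminus B_1 \hookrightarrow B_1 \setminus A$, from which $|A| \leq |B_1|$ follows. Injectivity is immediate (involution), and $\varphi(A) \cap A = \emptyset$ because a common element would produce the forbidden sum $(2a, 2b) \in A + A$. For $(\alpha, \beta) \in A \setminus B_1$ the coordinate sum of $\varphi(\alpha, \beta)$ equals $2s - (\alpha + \beta) \leq s$, with equality only when $\alpha + \beta = s$; on the diagonal, $2b - \beta < b$ precisely when $\beta > b$, and the borderline case $\beta = b$ forces $(\alpha, \beta) = (a, b)$, which is either non-integer (when $q$ is even) or excluded from $A$ (when $q$ is odd, since then $(a,b) + (a,b) = (2a, 2b) \notin A+A$). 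Hence $\varphi(\alpha, \beta) \in B_1$ whenever it lies in $\N^2$.

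The crux is to verify $\beta \leq 2b$ for every $(\alpha, \beta) \in A \setminus B_1$. Suppose for contradiction that $\beta > 2b$. If $\alpha \geq a$ (for $q$ odd) or $\alpha \geq a + \tfrac{1}{2}$ (for $q$ even), then by the order-closedness of $A$ the integer points in a decomposition of $(2a,2b)$ (namely $(a,b)$ with itself, respectively $(a - \tfrac{1}{2}, b - \tfrac{1}{2})$ together with $(a + \tfrac{1}{2}, b + \tfrac{1}{2})$) all lie in $A$, contradicting $(2a, 2b) \notin A+A$. Otherwise $\alpha < a$, and combining $w_1 \alpha + \beta \leq \lambda$ with the inequality $\beta - b \geq a - \alpha$ coming from $\alpha + \beta \geq s = a + b$ gives, for $q$ odd,
\[
    w_1 a + b - (w_1 \alpha + \beta) \;=\; w_1 (a - \alpha) - (\beta - b) \;\leq\; (w_1 - 1)(a - \alpha) \;\leq\; 0,
\]
so $w_1 a + b \leq \lambda$, whence $(a, b) \in A$, again yielding $(2a, 2b) \in A + A$, a contradiction. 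For $q$ even the analogous chain with the additional term $(1 - w_1)/2 \geq 0$ on the left (which is absorbed by using $a - \alpha \geq \tfrac{1}{2}$, available because $\alpha$ is integer and $a$ is half-integer) shows that $(a - \tfrac{1}{2}, b + \tfrac{1}{2}) \in A$, and the weaker condition then implies $(a + \tfrac{1}{2}, b - \tfrac{1}{2}) \in A$; their sum is once more $(2a, 2b)$, contradiction.

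The main obstacle is the bookkeeping around the parity of $q$, which determines whether the forbidden decomposition of $(2a, 2b)$ uses the single integer point $(a,b)$ or a pair of its half-integer neighbours, and the delicate diagonal edge case on $B_1$ where the reflection lands on the (removed) point $(a,b)$ and must be ruled out via the forbiddenness of $(2a,2b)$. Once the injection $\varphi \colon A \setminus B_1 \hookrightarrow B_1 \setminus A$ is verified, the desired inequality $k(\mC) = |A| \leq |B_1| = k(\mC_B)$ is immediate.
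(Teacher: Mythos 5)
Your proof is correct and follows essentially the same route as the paper: the point reflection $(\alpha,\beta)\mapsto(2a-\alpha,2b-\beta)$ through $(a,b)=\left(\frac{q-1}{2},\frac{q-d+1}{2}\right)$ yielding an injection $A\setminus B_1\hookrightarrow B_1\setminus A$, with Lemma \ref{Lemma:WRM-sharp} forbidding $(2a,2b)\in A+A$ and the parity of $q$ deciding whether the forbidden decomposition uses the integer point $(a,b)$ or a pair of half-integer neighbours. The only differences are cosmetic: you replace the paper's case split $\lambda\le a+b$ versus $\lambda>a+b$ (with its symmetric argument for $B_2$) by an upfront coordinate swap reducing to $w_1\le 1$, and in one subcase you use the decomposition $\left(a-\frac{1}{2},b-\frac{1}{2}\right)+\left(a+\frac{1}{2},b+\frac{1}{2}\right)$ rather than the paper's $\left(a-\frac{1}{2},b+\frac{1}{2}\right)+\left(a+\frac{1}{2},b-\frac{1}{2}\right)$, both of which work.
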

\begin{proof}This proof will follow the same ideas in Theorem \ref{th:wrmdimpar}. 
Let $\mC$ be a weighted Reed-Muller code over $\F_q$ with $d(\mC^{\, (2)}) \geq d$. We assume without loss of generality that $\mC = \mathrm{WRM}_q(\lambda,2,w_1,1)$ for some $\lambda, w_1 > 0$. Taking 
$$A := \{(i,j) \in \lii 0, q-1 \rii \, \vert \,  w_1 i + j \leq \lambda\}$$ we have that $\mC = \mC_A$. 

In this proof we denote $a:= (q-1)/2$ and $b := (q-d+1)/2$; and observe that either $(a,b) \in \N^2$ or both $(a-\frac{1}{2}, b + \frac{1}{2}), (a+\frac{1}{2}, b - \frac{1}{2}) \in\N^2$.  We divide the proof in two cases depending on the value of $\lambda$.

\textbf{Case I:} $\lambda \leq a + b$. We take  $B = B_1$ as in Lemma \ref{lemma:WRM}. To prove that $|A| = k(\mC) \leq k(\mC_B) = |B|$ we are going to prove that either $A \subseteq B$, or  
\[ \begin{array}{lccl} \varphi: & A - B &\ \longrightarrow\ & B - A \\ & (\alpha,\beta) & \mapsto & (2a-\alpha, 2b - \beta) \end{array}\]
is an injective map (see Figure \ref{fig:thwrmdpar} for a graphic representation of this idea). 

 \begin{figure}[h!]
    \centering
        \includegraphics[scale=0.8]{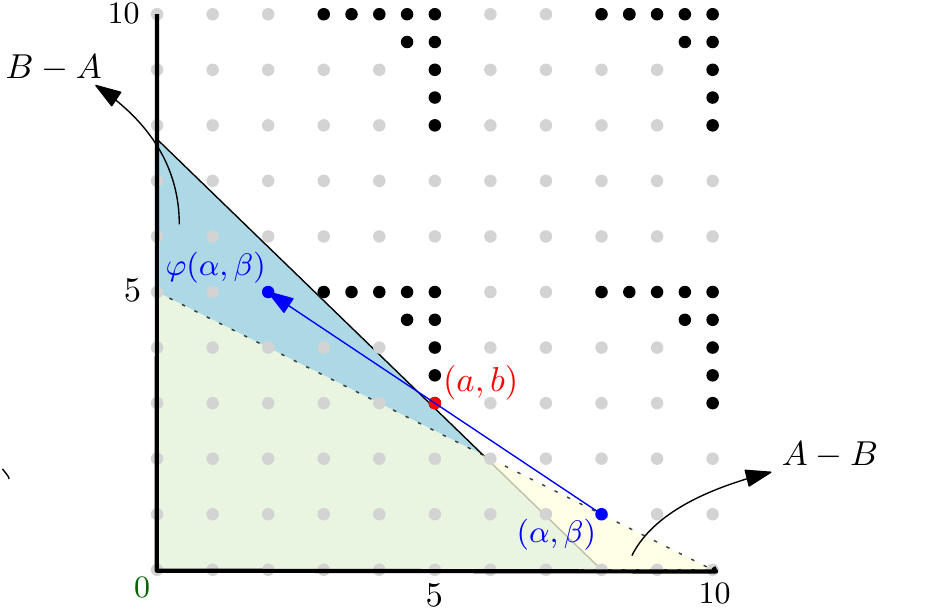}
    \caption{Figure illustrating the proof of Theorem \ref{th:wrmdpar} for $d=6$, $q=11$, $(a,b)=(5,3)$, $A=\left\{ (i,j)\in \mathbb N^2 \mid 0.4 i + j \leq 4 \right\}$ and $B=\left\{ (i,j) \in \mathbb N^2 \mid i+j < 8\right\} \cup \left\{ (i,j) \in \mathbb N^2 \mid i+j = 8 \hbox{ and } j<3\right\}$.}
    \label{fig:thwrmdpar}
\end{figure}

Since the injectivity of $\varphi$ is easy to check, we are showing that $\varphi$ is well defined in three steps:
\begin{itemize}
\item[(a)] if $(\alpha,\beta) \in A$, then $(2a-\alpha,2b-\beta) \notin A$,
\item[(b)] if $(\alpha,\beta) \in A - B$, then $(2a-\alpha,2b-\beta) \in \N^2$, and
\item[(c)] if $(\alpha,\beta) \in A - B$, then $(2a-\alpha,2b-\beta) \in B$.
\end{itemize}
If (a) does not hold, then both $(\alpha,\beta)$ and $(2a-\alpha,2b-\beta) \in A$. Hence, $(2a,2b) = (\alpha,\beta) + (2a-\alpha,2b-\beta) \in A + A$ and $\mC_A^{\, (2)} = \mC_{A+A}$. Since $\mC_A$ is a weighted Reed-Muller code, by Lemma \ref{Lemma:WRM-sharp} we have that 
$$d \leq d(\mC^{\, (2)}) = \mathrm{FB}(\mC^{\, (2)}) \leq (q-2a)(q-2b) = d-1,$$ a contradiction.

We observe that $(2a-\alpha,2b-\beta) \in \Z^2$ and that $\alpha \leq q-1 = 2a$, so to prove (b) we just need to see that $2b - \beta \geq 0$. 
Assume that $2b < \beta$ and let us prove that 

\begin{itemize}
\item[(b.1)] $P = (a,b) \in A$ if $q$ is odd, or
\item[(b.2)] $Q_1 = (a - \frac{1}{2},  b + \frac{1}{2}),\ Q_2 = (a + \frac{1}{2}, b- \frac{1}{2})  \in A$ if $q$ is even. 
\end{itemize}

If $\alpha > a$, then $\alpha \geq a+\frac{1}{2}$ since $\beta \geq 2b + 1 > b + \frac{1}{2}$ we have that $P \in A$ in case (b.1) and $Q_1, Q_2 \in A$ in case (b.2). If $\alpha \leq a$, from one side we have that $(\alpha, \beta) \notin B$, so 
\begin{equation}\label{eq:1}
\alpha + \beta \geq a+b
\end{equation} and, if we have equality, then $\beta \geq b$. From the other side we have that $(\alpha,\beta) \in A$, which implies that 
\begin{equation}\label{eq:2}
w_1 \alpha + \beta \leq \lambda.
\end{equation} From (\ref{eq:1}) and (\ref{eq:2}) we get that
\[ (w_1 - 1) \alpha  + a + b  \leq (w_1 - 1) \alpha + \alpha + \beta = w_1 \alpha + \beta \leq \lambda \leq a + b\]
and, thus, $w_1 \leq 1$. Hence, we separate three cases:

\textbf{Subcase I.I.} If $\alpha + \beta > a + b$. 

\begin{eqnarray*} 
w_1(a+\tfrac{1}{2}) + b - \tfrac{1}{2} & \leq  & w_1 a + b  \leq w_1(a-\tfrac{1}{2}) + b + \tfrac{1}{2}  \\ &  < & w_1 a + b + \tfrac{1}{2}  \\
& = &  a + b + \tfrac{1}{2} + (w_1-1)a  \\ & \leq & a + b + \tfrac{1}{2} + (w_1-1)\alpha  \\ & < & \alpha + \beta + (w_1-1)\alpha \\ & = & w_1 \alpha + \beta \leq \lambda. 
\end{eqnarray*}

So, $P \in A$ if $q$ is odd, or both $Q_1,Q_2 \in A$ if $q$ is even. 

\textbf{Subcase I.II.} If $\alpha + \beta = a + b$ and $q$ is odd. 
Since $\beta \geq b$ and $w_1 < 1$, we have that $w_1(\alpha - a) + \beta - b \geq w_1 (\alpha - a + \beta - b) = 0$. As a consequence,

\[ w_1 a + b \leq w_1 a + b + w_1(\alpha - a) + \beta - b  = w_1 \alpha +  \beta \leq \lambda.\]Therefore $P \in A$.

\textbf{Subcase I.III.} If $\alpha + \beta = a + b$ and $q$ is even.
Since $\beta \geq b$ and $b \notin \N$, then $\beta \geq b + \frac{1}{2}$; moreover, $w_1 < 1$, then we have that 
$w_1(\alpha - a + \frac{1}{2}) + \beta - b  - \frac{1}{2} \geq w_1 (\alpha - a + \frac{1}{2} + \beta - b - \frac{1}{2}) = 0$. As a consequence,

\begin{eqnarray*}
w_1(a+\tfrac{1}{2}) + b - \tfrac{1}{2} & \leq  & w_1(a-\tfrac{1}{2}) + b + \tfrac{1}{2}  \\ &  \leq &   w_1(a-\tfrac{1}{2}) + b + \tfrac{1}{2} +  w_1(\alpha - a + \tfrac{1}{2}) + \beta - b  - \tfrac{1}{2} \\ & = & w_1 \alpha + \beta \leq \lambda 
\end{eqnarray*}
and we conclude that $Q_1,Q_2 \in A$.

Moreover, since $P  + P = Q_1 + Q_2 = (2a,2b)$, in both cases we obtain that $(2a,2b) \in A+A$ and $\mC_A^{\, (2)} = \mC_{A+A}$. Since $\mC_A$ is a weighted Reed-Muller code, by Lemma \ref{Lemma:WRM-sharp} we have that $d \leq d(\mC^{\, (2)}) \leq (q-2a)(q-2b) = d-1$, a contradiction.

Let us prove now (c). Take $(\alpha,\beta) \in A-B$, then either 

\begin{itemize}
\item[(c.1)] $\alpha + \beta > a+b$, or
\item[(c.2)] $\alpha + \beta = a+b$ and $\beta \geq b$. 
\end{itemize}
In (c.1) we have that $2a-\alpha  + 2b-\beta < a+b$, so $(2a-\alpha,2b-\beta) \in B$.
In (c.2) we observe that $\beta \neq b$ because $(a,b) \notin A$. Then, we have that $2a-\alpha  + 2b-\beta = a+b$ and $2b - \beta < b$, so $(2a-\alpha,2b-\beta) \in B$.

\textbf{Case II:} $\lambda \geq a + b$. We claim that $\frac{\lambda}{w_1} < a + b$. Otherwise, we have that  
$P \in A$ if $q$ is odd, or
$Q_1,Q_2 \in A$ if $q$ is even. In both cases $(2a,2b) \in A + A$  and $\mC_A^{\, (2)} = \mC_{A+A}$. Since $\mC_A$ is a weighted Reed-Muller code, by Lemma \ref{Lemma:WRM-sharp} we have that $d \leq d(\mC^{\, (2)}) \leq (q-2a)(q-2b) = d-1$, a contradiction. Since $\frac{\lambda}{w_1} < a + b$, then $A = \{(i,j) \in \N^2 \, \vert \, 0 \leq i,j \leq q-1 {\text \ and \ } i + \frac{1}{w_1}{j} \leq \frac{\lambda}{w_1}\}$ and a symmetric argument to \textbf{Case I} using $B = B_2$ with $B_2$ as in Lemma \ref{lemma:WRM} applies here.
\end{proof}

Finally, we are proving that when $d$ is small enough (more precisely, when $d < (2-\sqrt{2})q$), then there are weighted Reed-Muller codes that have more dimension and whose square has the same designed minimum distance as the corresponding half hyperbolic code.

\begin{proposition}\label{prop:odd} If $d < (2-\sqrt{2})q$ and $d$ odd, then 
$$k(\mathrm{RM}_q\left(2,q-\frac{d-1}{2}\right)) > k(\mathrm{HalfHyp}_q(2,d)).$$
\end{proposition}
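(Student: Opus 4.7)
My plan is to compare the two dimensions directly via explicit formulas and a sufficiently tight upper bound. Set $s := q - (d-1)/2$, which is a non-negative integer because $d$ is odd. For $d \geq 3$ one has $s \leq q-1$, so the triangle $\{(i,j) \in \N^2 \mid i + j \leq s\}$ lies entirely inside $\lii 0, q-1 \rii^2$; consequently
\[ k(\mathrm{RM}_q(2,s)) = \binom{s+2}{2} = \frac{(2q-d+3)(2q-d+5)}{8}. \]
(The edge case $d = 1$ is immediate by a direct count or can be absorbed into the general argument.)

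For the half hyperbolic code $\mathrm{HalfHyp}_q(2,d) = \mC_A$ I will use the crude containment $A \subseteq \lii 0, \lfloor (q-1)/2 \rfloor \rii^2$, which yields
\[ k(\mathrm{HalfHyp}_q(2,d)) \leq \left(\left\lfloor \tfrac{q-1}{2}\right\rfloor + 1 \right)^2 \leq \frac{(q+1)^2}{4}. \]
This bound is sharp enough for small $d$ since the hyperbolic exclusion $(q-2i)(q-2j) < d$ removes only a negligible number of points from $\lii 0, (q-1)/2 \rii^2$, and it is uniform in the parity of $q$.

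Combining the two estimates reduces the claim to the polynomial inequality $(2q-d+3)(2q-d+5) > 2(q+1)^2$. Expanding both sides and simplifying leaves $d^2 - 4qd + 2q^2 + 12q - 8d + 13 > 0$. The quadratic-in-$d$ leading part factors as $(d - (2-\sqrt{2})q)(d - (2+\sqrt{2})q)$ and is strictly positive precisely when $d < (2-\sqrt{2})q$ or $d > (2+\sqrt{2})q$; our hypothesis $d < (2-\sqrt{2})q$ selects the correct branch. The remaining linear correction $12q - 8d + 13$ is positive since $d < q$, so the whole expression is positive.

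The only real obstacle is verifying that the crude upper bound $(q+1)^2/4$ is not throwing away too much: the threshold $(2-\sqrt{2})q$ is exactly where the leading orders $(q+1)^2/4$ and $(2q-d+3)(2q-d+5)/8$ balance, so the margin shrinks as $d$ approaches the threshold from below. This is precisely why one must retain the lower-order linear terms $12q - 8d + 13$, which are what guarantee the strict inequality right up to the boundary. No finer analysis of the hyperbolic region is needed, and the argument handles both parities of $q$ uniformly.
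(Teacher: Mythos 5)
Your proof is correct and follows essentially the same route as the paper: compare the exact count $\binom{s+2}{2}=\frac{(2q-d+3)(2q-d+5)}{8}$ for the Reed--Muller code with a bounding-box upper estimate for $k(\mathrm{HalfHyp}_q(2,d))$, and reduce to a quadratic inequality in $d$ whose favourable branch is selected by $d<(2-\sqrt{2})q$. The only deviations are cosmetic: you use the cruder bound $\left(\frac{q+1}{2}\right)^2$ where the paper retains an extra $-d+2$, and you split off the homogeneous part $d^2-4qd+2q^2=(d-(2-\sqrt{2})q)(d-(2+\sqrt{2})q)$ plus a positive linear remainder instead of computing the roots of the full quadratic, both of which still suffice (and you rightly flag the trivial $d=1$ degree overflow, which the paper leaves implicit).
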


\begin{proof}
Take notice that
\begin{eqnarray*}
k(\mathrm{RM}_q\left(2,q-\frac{d-1}{2} \right))  & = & \frac{(q-\frac{d-1}{2}+2) (q-\frac{d-1}{2}+1)}{2} \\
 & = & \frac{(2q-d+5)(2q-d+3)}{8} = A
\end{eqnarray*}

\begin{eqnarray*}
k(\mathrm{HalfHyp}_q(2,d))  & < & \left( \frac{q+1}{2}\right)^2 - 2\left( \frac{d-1}{2}\right)+1  = B
\end{eqnarray*}

Therefore if $A-B > 0$ then our claim holds.
Now $A-B > 0$ if $p(d)= d^2-4qd + (2q^2+12q+13) >0$. This defines a quadratic function whose vertex represent its minimum value. That is, $p(d)>0$ if $d>2q + \sqrt{2q^2-12q-13}$ or $d<2q - \sqrt{2q^2-12q-13}$. Take notice that if 
$$d<(2-\sqrt{2})q<2q - \sqrt{2q^2-12q-13}$$ then:
$k(\mathrm{RM}_q\left(2,q-\frac{d-1}{2}\right)) > k(\mathrm{HalfHyp}_q(2,d))$.
\end{proof}

\begin{proposition}\label{prop:even}
If $d<(2-\sqrt{2})q$ and $d$ even, then $k(\mC_B) > k(\mathrm{HalfHyp}_q(2,d))$ where $\mC_B$ is one of the weighted Reed-Muller codes defined in Lemma \ref{lemma:WRM}.
\end{proposition}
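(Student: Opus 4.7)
The plan is to imitate the proof of Proposition \ref{prop:odd}: give a closed-form expression for $k(\mC_B)$ via Lemma \ref{lemma:WRM}, establish an upper bound for $k(\mathrm{HalfHyp}_q(2,d))$, and reduce the desired inequality to a positivity statement about a quadratic in $d$.

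For the first step, writing $s=q-d/2$, the triangle $\{(i,j)\in\N^2:i+j\leq s-1\}$ contributes $s(s+1)/2$ points to $B_1$, and the partial diagonal $\{(i,j):i+j=s,\ j<(q-d+1)/2\}$ contributes $(q-d+1)/2$ further points (for $q$ odd), giving
\[
k(\mC_B)=\frac{(q-d/2)(q-d/2+1)}{2}+\frac{q-d+1}{2}.
\]

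For the second step, I would bound $k(\mathrm{HalfHyp}_q(2,d))$ from above by discarding points from the ambient lattice box. For $q$ odd, $A=\{(i,j)\in\lii 0,(q-1)/2\rii^2:(q-2i)(q-2j)\geq d\}$ sits inside a box of $((q+1)/2)^2$ lattice points. On the edge $i=(q-1)/2$ the factor $q-2i=1$ forces $q-2j<d$, so $j\in\{(q-d+1)/2,\ldots,(q-1)/2\}$, a range of $d/2$ integer values (since $d$ is even and $q$ odd); by symmetry on $j=(q-1)/2$ we obtain another $d/2$ excluded points, with the corner $((q-1)/2,(q-1)/2)$ counted twice. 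Hence
\[
k(\mathrm{HalfHyp}_q(2,d))\leq\bigl(\tfrac{q+1}{2}\bigr)^2-(d-1).
\]
Combining the two estimates and expanding yields
\[
k(\mC_B)-k(\mathrm{HalfHyp}_q(2,d))\geq\frac{1}{8}\bigl(d^2-4qd+2q^2+4q+2d-6\bigr).
\]
The leading quadratic $p(d)=d^2-4qd+2q^2$ has roots $d=(2\pm\sqrt{2})q$ and is positive on $d<(2-\sqrt{2})q$; the remaining $4q+2d-6$ is positive for $q\geq 2$, so the difference is strictly positive under the hypothesis.

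The main technical obstacle is the parity bookkeeping: the formulas above assume $q$ odd, and for $q$ even both $|B_1|$ and the exclusion count for the half hyperbolic code shift by additive $O(1)$ terms, so some mild case analysis is needed. In every parity case, however, the same leading $q^2$ behaviour controls the sign and reproduces the threshold $(2-\sqrt{2})q$ matching Proposition \ref{prop:odd}.
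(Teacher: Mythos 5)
Your proposal is correct and follows essentially the same route as the paper's proof: an exact count of $k(\mC_B)$ in the spirit of Lemma \ref{lemma:WRM}, an upper bound for $k(\mathrm{HalfHyp}_q(2,d))$ obtained by removing from the box $\lii 0,\frac{q-1}{2}\rii^2$ the boundary points violating $(q-2i)(q-2j)\geq d$, and a reduction to positivity of a quadratic in $d$ whose leading part $d^2-4qd+2q^2$ has roots $(2\pm\sqrt{2})q$. The deferred $q$ even bookkeeping is indeed routine (the ambient box then has only $(q/2)^2$ lattice points, so even the trivial bound $k(\mathrm{HalfHyp}_q(2,d))\leq q^2/4$ already yields the required positivity), so no genuine gap remains.
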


\begin{proof}
Take notice that
\begin{eqnarray*}
k(\mC_B)  & = & \frac{(q-\frac{d-1}{2}+2) (q-\frac{d-1}{2}+1)}{2}  + \frac{q-\frac{d}{2}}{2}-1\\
 & = & \frac{1}{8}d^2 - \frac{1}{2}dq + \frac{1}{2}q^2 - d + 2q = A
\end{eqnarray*}

\begin{eqnarray*}
k(\mathrm{HalfHyp}_q(2,d))  & < & \left( \frac{q+1}{2}\right)^2 - 2\left( \frac{d-1}{2}\right)+1  = B
\end{eqnarray*}

Therefore if $A-B > 0$ then our claim is true.
Now $A-B > 0$ if $p(d)= d^2 - 4dq + (2q^2 + 12q - 18) >0$. This defines a quadratic function whose vertex represent its minimum value. That is, $p(d)>0$ if $d>2q + \sqrt{2q^2-12q+18}$ or $d<2q - \sqrt{2q^2-12q+18}$. 
Take notice that if 
$$d<(2-\sqrt{2})q<2q - \sqrt{2q^2-12q+18},$$ then:
$k(\mC_B) > k(\mathrm{HalfHyp}(2,d))$.
\end{proof}

\begin{example}
\label{FinalExample}
We continue with Example \ref{example:HHYP}.
That is, consider $\mC_B = \mathrm{Hyp}_{11}(6,2)$ with 
$B=\left\{ (i,j)\in \lii 0,10\rii^2 \mid (11-i) (11-j) \geq 6\right\}$.
We recall that the half hyperbolic code of order $6$ has parameters $[11^2, 25,49]_{11}$ and $d(\mC_A^{\, (2)})\geq 6$.

Taking  $A_1$ as in Figure \ref{FinalExample:1}, then $\mC_{A_1}$ is a weighted Reed-Muller code with parameters $[11^2, 34, 9]_{11}$ such that $d(\mC_{A_1}^{\, (2)})\geq 6$. Take notice that this example already gives an affine variety code with higher dimension than the half hyperbolic code.

Moreover, by Theorem \ref{th:wrmdpar}, we know that if we take the set $A_2$ defined in Figure \ref{FinalExample:2} then the weighted Reed-Muller code $\mC_{A_2}$ has higher dimension than any other weighted Reed-Muller code $\mC$ such that $d(\mC^{\, (2)})\geq 6$. In particular, we know by Theorem \ref{th:wrmdpar} that $k(\mC_{A_2})\geq k(\mC_{A_1})$. Note that $\mC_{A_2}$ is defined as in Lemma \ref{lemma:WRM} and has parameters $[11^2, 39, 33]_{11}$.
  
\begin{figure}[h!]
    \centering
    \begin{subfigure}[b]{0.4\textwidth}
        \includegraphics[scale=0.8]{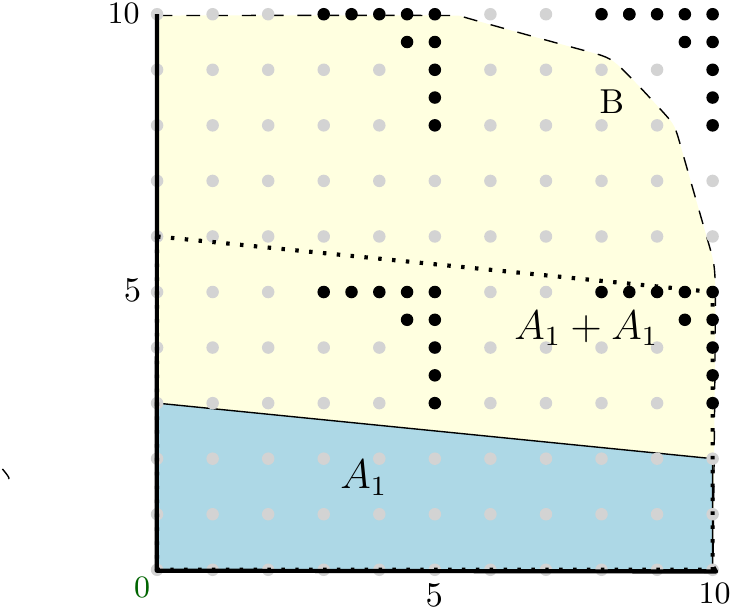}
        \caption{Set $A_1$ of Example \ref{FinalExample}.}
        \label{FinalExample:1}
    \end{subfigure}
   \hfill
    ~ %add desired spacing between images, e. g. ~, \quad, \qquad, \hfill etc. 
      %(or a blank line to force the subfigure onto a new line)
    \begin{subfigure}[b]{0.55\textwidth}
        \includegraphics[scale=0.8]{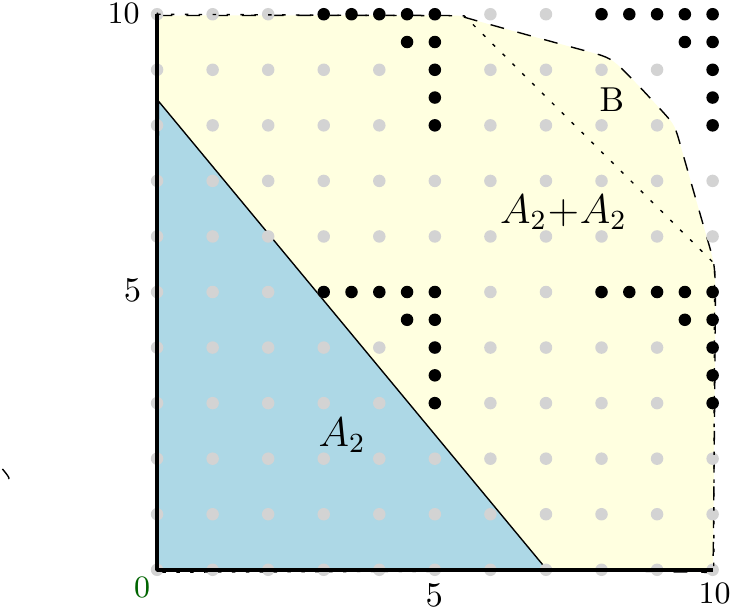}
        \caption{Set $A_2$ of Example \ref{FinalExample}.}
        \label{FinalExample:2}
    \end{subfigure}
    \caption{Figure illustrating Example \ref{FinalExample}.}
    \label{fig:examples}
\end{figure}
\end{example}

%\begin{acknowledgements}
%If you'd like to thank anyone, place your comments here
%and remove the percent signs.
%\end{acknowledgements}

% BibTeX users please use one of
%\bibliographystyle{spbasic}      % basic style, author-year citations
%\bibliographystyle{spmpsci}      % mathematics and physical sciences
%\bibliographystyle{spphys}       % APS-like style for physics
%\bibliography{}   % name your BibTeX data base

% Non-BibTeX users please use

\appendix
\section{For which affine codes $\mathcal C_A$ is it verified that $\mathrm{FB}(\mathcal C_A) = d(\mathcal C_A)$?}

Let $A\subseteq \lii 0,q-1\rii^m$ and consider the code $\mathcal C_A$ as the affine variety code $C(I,L)$ with $I=(0)$ and $L=\mathbb F_q[A]$. Then, we know that the length of $\mathcal C_A$ is $q^m$ and its dimension coincides with the cardinality of the set $A$. Moreover its minimum distance, denoted as $d(\mathcal C_A)$, satisfies that $d(\mathcal C_A)\geq \mathrm{FB}(\mathcal C_A)$. In this section we will study when these two values coincide. More concretely, we provide sufficient conditions to have the equality  $d(\mathcal C_A) = \mathrm{FB}(\mathcal C_A)$.

\begin{lemma}
\label{FB-1}
Suppose that $\mathrm{FB}(\mathcal C_A) = (q-\alpha_1)\cdots (q-\alpha_m)$. Then $d(\mathcal C_A) = \mathrm{FB}(\mathcal C_A)$ if all the elements $\beta = (\beta_1, \ldots, \beta_m)$ with $0\leq \beta_i \leq \alpha_i$ belong to the set $A$.
\end{lemma}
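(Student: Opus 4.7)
Theorem \ref{FB} already gives the inequality $d(\mathcal C_A) \geq \mathrm{FB}(\mathcal C_A)$, so the plan is to exhibit an explicit polynomial $f \in \mathbb F_q[A]$ whose evaluation $\mathrm{ev}_{\mathcal P}(f)$ is a (non-zero) codeword of weight exactly $\prod_{i=1}^m (q-\alpha_i) = \mathrm{FB}(\mathcal C_A)$. Combining both inequalities then forces equality $d(\mathcal C_A) = \mathrm{FB}(\mathcal C_A)$.

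The natural candidate is a product of univariate polynomials, one per variable. Since $A \subseteq \lii 0,q-1 \rii^m$, each $\alpha_i$ satisfies $0 \leq \alpha_i \leq q-1$, so for every $i$ I can pick $\alpha_i$ pairwise distinct elements $a_{i,1},\ldots,a_{i,\alpha_i} \in \mathbb F_q$ (with the convention that the product is empty, i.e.\ equal to $1$, if $\alpha_i = 0$). I then set
\[
 f(X_1,\ldots,X_m) \;=\; \prod_{i=1}^m \prod_{j=1}^{\alpha_i} \bigl(X_i - a_{i,j}\bigr).
\]
Expanding each inner product as a univariate polynomial in $X_i$ of degree $\alpha_i$ and multiplying out, every monomial appearing in $f$ has the form $X_1^{\beta_1}\cdots X_m^{\beta_m}$ with $0 \leq \beta_i \leq \alpha_i$ for every $i$. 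By the hypothesis of the lemma, every such exponent vector $(\beta_1,\ldots,\beta_m)$ lies in $A$, so $f \in \mathbb F_q[A]$.

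To finish, I would compute the weight of $\mathrm{ev}_{\mathcal P}(f)$ by counting non-zeros: for $P = (p_1,\ldots,p_m) \in \mathbb F_q^m$ one has $f(P) \neq 0$ if and only if $p_i \notin \{a_{i,1},\ldots,a_{i,\alpha_i}\}$ for each $i$, which gives exactly $\prod_{i=1}^m (q-\alpha_i)$ points. Hence the weight of $\mathrm{ev}_{\mathcal P}(f)$ equals $\mathrm{FB}(\mathcal C_A)$; in particular the codeword is non-zero (each factor $q-\alpha_i$ is positive), and therefore $d(\mathcal C_A) \leq \mathrm{FB}(\mathcal C_A)$.

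There is no real technical obstacle: the only point to check carefully is that $\alpha_i \leq q-1$, so that the required distinct scalars $a_{i,j}$ exist in $\mathbb F_q$, and that the monomial support of $f$ lies in the coordinate box $\lii 0,\alpha_1 \rii \times \cdots \times \lii 0,\alpha_m \rii$, which is contained in $A$ by hypothesis. Both are immediate from $A \subseteq \lii 0,q-1\rii^m$ and the statement of the lemma.
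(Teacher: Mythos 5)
Your proposal is correct and follows essentially the same route as the paper: both exhibit the polynomial $\prod_{i=1}^m\prod_{j=1}^{\alpha_i}(X_i-a_{i,j})$ with distinct roots in each variable, use the hypothesis to see its monomial support (the box $\lii 0,\alpha_1\rii\times\cdots\times\lii 0,\alpha_m\rii$) lies in $A$, and conclude the weight is exactly $\prod_{i=1}^m(q-\alpha_i)$. Your direct count of non-vanishing points is in fact slightly cleaner than the paper's root-counting via inclusion–exclusion, but it is the same argument.
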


\begin{proof}
First, to simplify the proof let us suppose that $m=2$. Let $\mathcal P = \left\{ P_1, \ldots, P_n\right\}$ be the ordered enumeration of the $q^2$ different points of $\mathbb F_q^2$. Suppose that $\mathrm{FB}(\mathcal C_A) = (q-\alpha_1)(q-\alpha_2)$. Now we can define the polynomial
$$f(x) = (X_1-P_1) \cdots (X_1-P_{\alpha_1}) \cdot (X_2-P_1) \cdots (X_2-P_{\alpha_2}).$$
Take notice that by hypothesis $f(X_1, X_2) \in \mathbb F_q[A]$ since all the elements $\beta = (\beta_1, \beta_2)$ with $0\leq \beta_1 \leq \alpha_1$ and $0\leq \beta_2 \leq \alpha_2$ belongs to the set $A$. 
Moreover, the $\mathbb F_q$-roots of $f$ are all the points of form:
$$\begin{array}{cccc}
(P_i, z_2) & \hbox{ and } & (z_1, P_j) & \hbox{ with } i \in \{1, \ldots, \alpha_1\} \hbox{ , } j \in \{1, \ldots, \alpha_2\} \hbox{ and } z_1, z_2\in \mathbb F_q.
\end{array}$$
That is, the number of $\mathbb F_q$-roots of $f(x)$ is $(\alpha_1 + \alpha_2) q - \alpha_1 \alpha_2$. Therefore, we have found a codeword $\mathbf c = \mathrm{ev}_{\mathcal P} (f)\in \mathcal C_A$ of weight $q^2 - (\alpha_1 + \alpha_2) q - \alpha_1 \alpha_2 = \mathrm{FB}(\mathcal C_A)$. Hence the minimum distance of $\mathcal C_A$ is $ \mathrm{FB}(\mathcal C_A)$.

The generalization to $m$ variables is straightforward. 
Let $\mathcal P = \left\{ P_1, \ldots, P_n\right\}$ be the ordered enumeration of the $q^m$ different points of $\mathbb F_q^m$.
Then, using all the hypothesis we can define the following polynomial in $\mathbb F_q[A]$:
$$f(X_1, \ldots, X_m) = \prod_{i=1}^m (X_i - P_1) \cdots (X_i - P_{\alpha_i}) \in \mathbb F_q[A].$$
Thus, we have found a codeword of $\mathcal C_A$ of weight $\mathrm{FB}(\mathcal C_A)$, hence $d(\mathcal C_A) = \mathrm{FB}(\mathcal C_A)$.
\end{proof}

The following result shows that if $l$ is a divisor of $q-1$ then, there exists a polynomial $f(x) = X^l-\alpha\in \mathbb F_q[X]$ with small support but a large number of $\mathbb F_q$-roots. This result will be useful for computing the minimum distance of codes of type $\mathcal C_A$ by just checking that a very small number of points belongs to the set $A$.

\begin{lemma}
\label{FB-2}
Let $\alpha$ be a primitive element of $\mathbb F_q^*$. Consider the polynomial $f(X) = X^l-\alpha^j \in \mathbb F_q[X]$. Then $X^l-\alpha^j$ has at least one roots in $\mathbb F_q$ if and only if $\gcd(l, q-1)$ divides $j$. 
In such case, the exactly number of $\mathbb F_q$-roots of $f(X)$ is $\gcd(l,q-1)$.
\end{lemma}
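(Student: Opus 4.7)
The plan is to reduce the problem of finding $\mathbb F_q$-roots of $X^l-\alpha^j$ to a standard linear congruence in $\Z/(q-1)\Z$, exploiting the cyclic structure of $\mathbb F_q^*$. First I would observe that $0$ is not a root of $f(X)=X^l-\alpha^j$ because $\alpha^j\neq 0$, so every root must lie in $\mathbb F_q^*$. Since $\alpha$ is a primitive element, $\mathbb F_q^*=\{\alpha^0,\alpha^1,\ldots,\alpha^{q-2}\}$, and therefore an element $\alpha^i$ is a root of $f(X)$ if and only if $\alpha^{li}=\alpha^j$, which by primitivity of $\alpha$ is equivalent to the linear congruence
\[
  l\,i \equiv j \pmod{q-1}.
\]

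Next I would invoke the classical criterion for solvability of a linear congruence: the congruence $l\,i\equiv j\pmod{q-1}$ admits a solution $i\in\Z$ if and only if $d:=\gcd(l,q-1)$ divides $j$. This immediately gives the first statement of the lemma, namely that $f(X)$ has at least one $\mathbb F_q$-root if and only if $d\mid j$.

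To count the roots when $d\mid j$, I would use the companion fact that, whenever it is solvable, $l\,i\equiv j\pmod{q-1}$ has exactly $d$ incongruent solutions modulo $q-1$ (explicitly, if $i_0$ is one solution then all solutions are $i_0+k\,(q-1)/d$ for $k=0,1,\ldots,d-1$). Since distinct residues modulo $q-1$ correspond to distinct elements $\alpha^i\in\mathbb F_q^*$, this yields exactly $d=\gcd(l,q-1)$ distinct roots of $f(X)$ in $\mathbb F_q$. I do not anticipate any serious obstacle here; the only point requiring a little care is explicitly justifying the bijection between residue classes modulo $q-1$ and elements of $\mathbb F_q^*$, which follows immediately from the fact that $\alpha$ has order $q-1$.
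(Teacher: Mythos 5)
Your proposal is correct and follows essentially the same route as the paper: both reduce the root-counting to the linear congruence $li\equiv j\pmod{q-1}$ (the paper phrases it as the Diophantine equation $x(q-1)+il=j$), apply the standard solvability criterion $\gcd(l,q-1)\mid j$, and count the $\gcd(l,q-1)$ incongruent solutions. Your added remarks that $0$ is not a root and that distinct residues modulo $q-1$ give distinct elements of $\mathbb F_q^*$ are minor clarifications the paper leaves implicit.
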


\begin{proof}
Suppose that $\alpha^i$ is an $\mathbb F_q$-root of $f(X)$, then $f(\alpha^i) = 0$, that is $\alpha^{il} = \alpha^j$ which implies that the order of $\alpha$, which is $q-1$, divides $il-1$. In other words, there exists an integer $x$ such that $x(q-1) + il = j$. Take notice that such $x$ exists if and only if $\gcd (l,q-1)$ divides $j$.

In such case, if $(x,y)$ is a solution of the equation $x(q-1) + yl = j$. Then, all solutions of this equations has the form:
$$\left( x-\lambda \frac{l}{\gcd (l,q-1)}, y + \lambda \frac{q-1}{\gcd(l,q-1)} \right) \hbox{ with } \lambda \in \mathbb Z$$
Therefore, if $f(X)$ has at least one root in $\mathbb F_q$, then it will have exactly $\gcd(l,q-1)$ $\mathbb F_q$-roots.
\end{proof}

\begin{corollary}
Suppose that $\mathrm{FB}(\mathcal C_A) = (q-l)q^{m-1}$ with $l$ a divisor of $q-1$. Then, $d(\mathcal C_A) = \mathrm{FB}(\mathcal C_A)$ if $\{1, X_i^l\}\subseteq \mathbb F_q[A]$ for some $i\in \{1, \ldots, m\}$.
\end{corollary}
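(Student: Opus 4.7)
The plan is to produce a single explicit codeword of $\mathcal C_A$ whose weight equals the footprint bound $(q-l)q^{m-1}$; combined with Theorem~\ref{FB}, this forces the equality $d(\mathcal C_A) = \mathrm{FB}(\mathcal C_A)$.

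First, I would use the hypothesis $\{1, X_i^l\} \subseteq \mathbb F_q[A]$ to ensure that the polynomial
\[ f(X_1,\ldots,X_m) \;=\; X_i^{\,l} - 1 \]
lies in $\mathbb F_q[A]$, so that $\mathrm{ev}_{\mathcal P}(f)$ is a codeword of $\mathcal C_A$. The whole point of choosing the constant term to be $1$ (rather than an arbitrary $\alpha^j$) is to fit Lemma~\ref{FB-2}: since $l$ divides $q-1$ we have $\gcd(l,q-1)=l$, which trivially divides $j=0$, and so the univariate polynomial $X^l-1$ has exactly $l$ roots in $\mathbb F_q$.

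Second, I would count the $\mathbb F_q$-roots of $f$ as an element of $\mathbb F_q[X_1,\ldots,X_m]$. Since $f$ depends only on the variable $X_i$, a point $(z_1,\ldots,z_m) \in \mathbb F_q^m$ is a zero of $f$ if and only if $z_i$ is one of the $l$ roots of $X^l-1$ in $\mathbb F_q$, while $z_j$ for $j \neq i$ is arbitrary. Hence
\[ \#\mathcal V_{\mathbb F_q}(f) \;=\; l \cdot q^{m-1}, \]
so the corresponding codeword $\mathrm{ev}_{\mathcal P}(f)$ has weight $q^m - l\, q^{m-1} = (q-l)q^{m-1} = \mathrm{FB}(\mathcal C_A)$. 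In particular this codeword is nonzero (its weight is positive because $l < q$, which follows from $(q-l)q^{m-1}>0$).

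Third, I would close the argument with the footprint bound: by Theorem~\ref{FB} every nonzero codeword of $\mathcal C_A$ has weight at least $\mathrm{FB}(\mathcal C_A)$, hence $d(\mathcal C_A) \geq \mathrm{FB}(\mathcal C_A)$; combined with the codeword just exhibited, the two quantities must coincide. There is no real obstacle to this proof, since all the heavy lifting was done in Lemma~\ref{FB-2} (counting roots of binomials $X^l-\alpha^j$) and in Theorem~\ref{FB}; the only thing to check carefully is that $X_i^l-1 \in \mathbb F_q[A]$, which is immediate from the hypothesis that both $1$ and $X_i^l$ belong to $\mathbb F_q[A]$.
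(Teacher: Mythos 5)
Your proof is correct and follows essentially the same route as the paper: exhibit the codeword $\mathrm{ev}_{\mathcal P}(X_i^l-\beta)$, count its zeros via Lemma~\ref{FB-2}, and combine with the footprint bound. The only difference is cosmetic — the paper takes an unspecified suitable $\beta\in\mathbb F_q$ while you fix $\beta=1$, which is a slightly more careful choice since it makes the hypothesis of Lemma~\ref{FB-2} ($\gcd(l,q-1)\mid j$) visibly satisfied and avoids the degenerate case $\beta=0$.
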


\begin{proof}
By hypothesis we can define the following polynomial in $\mathbb F_q[A]$:
$$f(X) = X_i^l - \beta \hbox{ for certain }\beta \in \mathbb F_q.$$
Then, by Lemma \ref{FB-2}, $f(X)$ has $lq^{m-1}$ $\mathbb F_q$-roots. That is, we have found a codeword of $\mathcal C_A$ of weight $\mathrm{FB}(\mathcal C_A)$, hence $d(\mathcal C_A) = \mathrm{FB}(\mathcal C_A)$.
\end{proof}

\begin{lemma}
Suppose that $\mathrm{FB}(\mathcal C_A) = (q-kl)q^{m-1}$ with $l$ a divisor of $q-1$. Then, $d(\mathcal C_A) = \mathrm{FB}(\mathcal C_A)$ if $\{1, X_i^l, X_i^{2l}, \ldots, X_i^{kl}\}\subseteq \mathbb F_q[A]$ for some $i\in \{1, \ldots, m\}$.
\end{lemma}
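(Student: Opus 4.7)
The plan is to exhibit an explicit polynomial $f \in \mathbb F_q[A]$ whose evaluation has weight exactly equal to $\mathrm{FB}(\mathcal C_A)$. Since Theorem \ref{FB} gives $d(\mathcal C_A) \geq \mathrm{FB}(\mathcal C_A)$, producing one codeword of weight $(q-kl)q^{m-1}$ will force equality. Without loss of generality, fix the index $i$ for which $\{1, X_i^l, X_i^{2l}, \ldots, X_i^{kl}\} \subseteq \mathbb F_q[A]$.

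First, I would observe that since $\mathrm{FB}(\mathcal C_A) > 0$ we must have $kl \leq q-1$, and combined with $l \mid q-1$ this means there are at least $k$ distinct nonzero $l$-th powers in $\mathbb F_q^*$. Concretely, letting $\alpha$ be a primitive element of $\mathbb F_q^*$, one can take $\beta_1, \ldots, \beta_k$ to be $k$ distinct elements of $\{\alpha^l, \alpha^{2l}, \ldots, \alpha^{((q-1)/l)\, l}\}$, which are precisely the elements $\beta \in \mathbb F_q^*$ such that $X_i^l - \beta$ has roots in $\mathbb F_q$ by Lemma \ref{FB-2}.

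Next, I would define
\[ f(X_1,\ldots,X_m) = \prod_{j=1}^{k} \bigl(X_i^l - \beta_j\bigr), \]
which is a polynomial of degree $k$ in $X_i^l$, hence a $\mathbb F_q$-linear combination of the monomials $1, X_i^l, X_i^{2l}, \ldots, X_i^{kl}$. By hypothesis, all these monomials lie in $\mathbb F_q[A]$, so $f \in \mathbb F_q[A]$. To count the $\mathbb F_q$-roots of $f$ in $\mathbb F_q^m$, note that a point $(P_1,\ldots,P_m) \in \mathbb F_q^m$ is a root of $f$ if and only if $P_i^l = \beta_j$ for some $j \in \{1,\ldots,k\}$. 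By Lemma \ref{FB-2}, each equation $X_i^l = \beta_j$ has exactly $\gcd(l,q-1)=l$ solutions in $\mathbb F_q$; moreover, the solution sets for different $\beta_j$ are disjoint (since a fixed $P_i$ determines $P_i^l$ uniquely). The remaining $m-1$ coordinates may be chosen freely in $\mathbb F_q$, so the total number of roots is $kl \cdot q^{m-1}$.

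Finally, I would conclude by computing the weight of $\mathrm{ev}_{\mathcal P}(f)$, which is $q^m - klq^{m-1} = (q-kl)q^{m-1} = \mathrm{FB}(\mathcal C_A)$. Since this quantity is positive, the codeword is nonzero, so $d(\mathcal C_A) \leq \mathrm{FB}(\mathcal C_A)$, and combined with the footprint lower bound we obtain $d(\mathcal C_A) = \mathrm{FB}(\mathcal C_A)$. The only subtle point is verifying that sufficiently many $l$-th powers exist to form the product, but this follows immediately from the inequality $kl \leq q-1$ forced by positivity of the footprint; the rest of the argument is a direct application of Lemma \ref{FB-2} and mirrors the structure of the proof of the preceding corollary.
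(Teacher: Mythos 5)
Your proposal is correct and takes essentially the same route as the paper: exhibit an explicit codeword of weight $(q-kl)q^{m-1}$ as a product of $k$ binomials of the form $X_i^l-\beta_j$ with the $\beta_j$ nonzero $l$-th powers, count its roots via Lemma \ref{FB-2}, and conclude by the footprint bound. In fact your choice of $k$ \emph{distinct} $l$-th powers $\beta_1,\ldots,\beta_k$ (the paper takes $\beta,\beta^2,\ldots,\beta^k$ for a fixed $l$-th power $\beta$) is the careful version of the paper's displayed polynomial $(X_i^{l}-\beta)(X_i^{2l}-\beta^2)\cdots(X_i^{kl}-\beta^k)$, whose factors as literally written would neither lie in the span of $\{1,X_i^l,\ldots,X_i^{kl}\}$ nor give the root count $klq^{m-1}$, so your membership and disjointness checks are exactly the right details to supply.
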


\begin{proof}
By hypothesis we can define the following polynomial in $\mathbb F_q[A]$:
$$f(X) = (X_i^l - \beta) (X_i^{2l}-\beta^2) \cdots (X_i^{kl} - \beta^k) \hbox{ for certain }\beta \in \mathbb F_q.$$
Then, by Lemma \ref{FB-2}, $f(X)$ has $klq^{m-1}$ $\mathbb F_q$-roots. That is, we have found a codeword of $\mathcal C_A$ of weight $\mathrm{FB}(\mathcal C_A)$, hence $d(\mathcal C_A) = \mathrm{FB}(\mathcal C_A)$.
\end{proof}

\begin{lemma}
Suppose that $\mathrm{FB}(\mathcal C_A) = (q-l_1)\cdots (q-l_m)$ with $l_i$ a divisor of $q-1$. Then, $d(\mathcal C_A) = \mathrm{FB}(\mathcal C_A)$ if $\{1, X_1^{l_1}, \cdots, X_m^{l_m}\}\subseteq \mathbb F_q[A]$.
\end{lemma}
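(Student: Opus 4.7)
The plan is to exhibit a single polynomial $f\in\mathbb F_q[A]$ whose evaluation $\mathrm{ev}_{\mathcal P}(f)$ realizes a codeword of weight exactly $\mathrm{FB}(\mathcal C_A)$; combined with Theorem~\ref{FB}, this will force $d(\mathcal C_A)=\mathrm{FB}(\mathcal C_A)$. The natural candidate, generalizing the proofs of the corollary and lemma just above, is the separable product
\[ f(X_1,\ldots,X_m) \ = \ \prod_{i=1}^{m} \bigl(X_i^{l_i}-\beta_i\bigr), \]
for carefully chosen $\beta_i\in\mathbb F_q^{\ast}$.

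First I would fix a primitive element $\alpha$ of $\mathbb F_q^{\ast}$ and take $\beta_i := \alpha^{l_i}$ for each $i$. Since $l_i$ divides $q-1$, $\gcd(l_i,q-1)=l_i$ divides $l_i$, so by Lemma~\ref{FB-2} the factor $X_i^{l_i}-\beta_i$ has exactly $l_i$ roots in $\mathbb F_q$. The root count of $f$ follows by complementation on each coordinate independently: $(a_1,\ldots,a_m)\in\mathbb F_q^m$ is a non-root of $f$ if and only if $a_i^{l_i}\neq\beta_i$ for every $i$, and the $i$-th condition excludes $l_i$ of the $q$ possible values. Hence the number of non-roots of $f$ is $\prod_{i=1}^{m}(q-l_i)=\mathrm{FB}(\mathcal C_A)$, so $\mathrm{ev}_{\mathcal P}(f)$ has Hamming weight equal to $\mathrm{FB}(\mathcal C_A)$, as desired.

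The step I expect to be the main obstacle is verifying that $f$ genuinely lies in $\mathbb F_q[A]$. Expanding the product yields
\[ f \ = \ \sum_{\varepsilon\in\{0,1\}^m} (-1)^{m-|\varepsilon|} \Bigl(\prod_{i\,:\,\varepsilon_i=0}\beta_i\Bigr)\, X_1^{\varepsilon_1 l_1}\cdots X_m^{\varepsilon_m l_m}, \]
and since each exponent satisfies $\varepsilon_i l_i \le l_i \le q-1$, no reduction modulo $I_q$ is required. Thus $f\in\mathbb F_q[A]$ amounts to the requirement that every exponent vector $(\varepsilon_1 l_1,\ldots,\varepsilon_m l_m)$ with $\varepsilon\in\{0,1\}^m$ belongs to $A$. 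A literal reading of the printed hypothesis $\{1,X_1^{l_1},\ldots,X_m^{l_m}\}\subseteq\mathbb F_q[A]$ only supplies the $m+1$ vectors with $|\varepsilon|\le 1$, so the argument forces one to interpret the hypothesis as including all $2^m$ mixed products $X_1^{\varepsilon_1 l_1}\cdots X_m^{\varepsilon_m l_m}$; this is automatic, for instance, whenever $A$ is an order ideal (downward closed under componentwise order) that contains $(l_1,\ldots,l_m)$, which is the standard setting for Reed--Muller, weighted Reed--Muller and hyperbolic codes. Under that reading, $f\in\mathbb F_q[A]$ and the proof is complete.
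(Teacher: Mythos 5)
Your proof is correct and takes essentially the same route as the paper: the paper also exhibits $f=\prod_{i=1}^m\bigl(X_i^{l_i}-\beta_i\bigr)$ and invokes Lemma~\ref{FB-2}, only counting the roots by inclusion--exclusion where you count the non-roots coordinatewise (which is cleaner). Your concern about membership in $\mathbb F_q[A]$ is legitimate but does not separate the two arguments: the paper simply asserts ``by hypothesis we can define $f\in\mathbb F_q[A]$,'' i.e.\ it implicitly reads the hypothesis exactly as you do, namely that all mixed monomials $X_1^{\varepsilon_1 l_1}\cdots X_m^{\varepsilon_m l_m}$, $\varepsilon\in\{0,1\}^m$, lie in $\mathbb F_q[A]$.
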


\begin{proof}
By hypothesis we can define the following polynomial in $\mathbb F_q[A]$:
$$f(X) = \prod_{i=1}^m( X_i^{l_i} - \beta_i) \hbox{ for certain }\beta_1, \ldots, \beta_m \in \mathbb F_q.$$
Then, by Lemma \ref{FB-2}, $f(X)$ has 
$$(l_1 + \ldots l_m) q^{m-1} - \sum_{1\leq i < j\leq m} l_i l_j q^{m-2} - \sum_{1\leq i < j<k\leq m} l_i l_jl_k q^{m-3}
- \ldots - 
l_1 \cdots l_m
$$ 
$\mathbb F_q$-roots. That is, we have found a codeword of $\mathcal C_A$ of weight $\mathrm{FB}(\mathcal C_A)$, hence $d(\mathcal C_A) = \mathrm{FB}(\mathcal C_A)$.
\end{proof}

\begin{lemma}
Suppose that $\mathrm{FB}(\mathcal C_A) = (q-k_1l_1)\cdots (q-k_ml_m)$ with $l_i$ a divisor of $q-1$. Then, $d(\mathcal C_A) = \mathrm{FB}(\mathcal C_A)$ if $\{1, X_1^{l_1}, \ldots, X_1^{ml_1}, \cdots, X_m^{l_m}, \ldots, X_m^{k_ml_m}\}\subseteq \mathbb F_q[A]$.
\end{lemma}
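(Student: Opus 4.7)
The plan is to mimic the two preceding lemmas: exhibit an explicit codeword of $\mathcal C_A$ whose Hamming weight equals $\mathrm{FB}(\mathcal C_A)$, and then combine this upper bound on $d(\mathcal C_A)$ with the footprint lower bound from Theorem~\ref{FB}. The codeword will be $\mathrm{ev}_{\mathcal P}(f)$, where $f = f_1(X_1)\cdots f_m(X_m)$ is a product of univariate polynomials, each $f_i$ involving only the monomials $1, X_i^{l_i}, X_i^{2l_i}, \ldots, X_i^{k_i l_i}$ listed in the hypothesis (so that $f$ belongs to $\mathbb F_q[A]$, read in the same implicit way as in the two preceding lemmas) and with the largest possible number of roots in $\mathbb F_q$, namely $k_i l_i$.

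To construct each $f_i$, I would fix a primitive element $\alpha \in \mathbb F_q^{*}$ and set
$$f_i(X_i) \;=\; \prod_{j=1}^{k_i}\bigl(X_i^{l_i} - \alpha^{j l_i}\bigr).$$
Since $l_i$ divides $q-1$, Lemma~\ref{FB-2} applied to each factor $X_i^{l_i}-\alpha^{j l_i}$ (whose exponent $j l_i$ is a multiple of $\gcd(l_i,q-1)=l_i$) shows that the factor has exactly $l_i$ roots in $\mathbb F_q$, namely the elements $\alpha^{j}\zeta$ with $\zeta\in\mathbb F_q$ and $\zeta^{l_i}=1$. These root sets are cosets of the group of $l_i$-th roots of unity inside $\mathbb F_q^{*}$, and the positivity of $\mathrm{FB}(\mathcal C_A)$ forces $k_i l_i \le q-1$, so the $k_i$ cosets corresponding to $j=1,\ldots,k_i$ are pairwise disjoint. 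Hence $f_i$ has exactly $k_i l_i$ roots in $\mathbb F_q$, and upon expansion the only monomials appearing are $X_i^{0}, X_i^{l_i}, \ldots, X_i^{k_i l_i}$.

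The final step is a counting argument: a point $(x_1,\ldots,x_m)\in\mathbb F_q^m$ is a root of $f=f_1\cdots f_m$ if and only if $f_i(x_i)=0$ for some $i$, so by complementary counting the number of zeros of $f$ in $\mathbb F_q^m$ equals
$$q^m - \prod_{i=1}^m \bigl(q - k_i l_i\bigr) \;=\; q^m - \mathrm{FB}(\mathcal C_A).$$
Therefore $\mathrm{ev}_{\mathcal P}(f)$ is a codeword of $\mathcal C_A$ of Hamming weight exactly $\mathrm{FB}(\mathcal C_A)$, which gives $d(\mathcal C_A)\le\mathrm{FB}(\mathcal C_A)$; combined with Theorem~\ref{FB} we obtain the desired equality.

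The only delicate point is the verification that $f \in \mathbb F_q[A]$: the monomials of the product are of the form $X_1^{a_1 l_1}\cdots X_m^{a_m l_m}$ with $0 \le a_i \le k_i$, whereas the hypothesis explicitly lists only the pure powers $X_i^{j l_i}$. This is precisely the same implicit strengthening that appears in the two preceding lemmas (whose candidate polynomials likewise produce mixed monomials beyond the pure powers in the stated hypothesis), and the assumption is to be read the same way here; this point of interpretation is, in my view, the main technical friction in the proof.
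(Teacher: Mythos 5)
Your proof is correct and follows the same overall strategy as the paper --- exhibit an explicit codeword of weight $\mathrm{FB}(\mathcal C_A)$ as the evaluation of a product of univariate polynomials supported on the allowed powers, and combine this with Theorem \ref{FB} --- but your univariate construction is genuinely different, and in fact tighter. The paper takes $f=\prod_{i=1}^m (X_i^{l_i}-\beta_i)(X_i^{2l_i}-\beta_i^2)\cdots(X_i^{k_il_i}-\beta_i^{k_i})$ for suitable $\beta_i$ and appeals to Lemma \ref{FB-2}; there the factors have increasing degrees (so the $X_i$-degree is $l_ik_i(k_i+1)/2$, which exceeds the listed power $X_i^{k_il_i}$ as soon as $k_i\ge 2$) and their root sets overlap (every root of $X_i^{l_i}-\beta_i$ is also a root of $X_i^{jl_i}-\beta_i^j$), so the claimed count of $k_il_i$ roots per variable needs more care than the paper's one-line appeal to Lemma \ref{FB-2} provides. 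Your choice $f_i=\prod_{j=1}^{k_i}(X_i^{l_i}-\alpha^{jl_i})$ keeps every factor of degree $l_i$, stays within the pure powers $1,X_i^{l_i},\ldots,X_i^{k_il_i}$, and makes the count transparent: the root sets are the cosets $\alpha^{j}\mu_{l_i}$ of the group of $l_i$-th roots of unity, pairwise disjoint because $k_il_i\le q-1$, giving exactly $k_il_i$ roots of $f_i$ and hence weight $\prod_{i=1}^m(q-k_il_i)=\mathrm{FB}(\mathcal C_A)$ by complementary counting. Two small remarks: the inequality $k_il_i\le q-1$ is better justified by noting that $X_i^{k_il_i}\in\mathbb F_q[A]$ with $A\subseteq\lii 0,q-1\rii^m$ (positivity of the product alone would permit an even number of negative factors); and the caveat you raise about the mixed monomials $X_1^{a_1l_1}\cdots X_m^{a_ml_m}$ not being literally listed in the hypothesis is real, but it affects the paper's own polynomial (and the preceding appendix lemmas) in exactly the same way, so reading the hypothesis as you do matches the paper's intent.
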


\begin{proof}
By hypothesis we can define the following polynomial in $\mathbb F_q[A]$:
$$f(X) = \prod_{i=1}^m( X_i^{l_i} - \beta_i) (X_i^{2l_i}-\beta_i^2) \cdots (X_i^{k_il_i} - \beta_i^{k_i}) \hbox{ for certain }\beta_1, \ldots, \beta_m \in \mathbb F_q.$$
Then, by Lemma \ref{FB-2}, we have found a codeword of $\mathcal C_A$ of weight $\mathrm{FB}(\mathcal C_A)$, hence $d(\mathcal C_A) = \mathrm{FB}(\mathcal C_A)$.
\end{proof}

\begin{lemma}
Suppose that $\mathrm{FB}(\mathcal C_A) = (q-l)q^{m-1}$ with $l-1$ a divisor of $q-1$. Then, $d(\mathcal C_A) = \mathrm{FB}(\mathcal C_A)$ if $\{X_i, X_i^l\}\subseteq \mathbb F_q[A]$ for some $i\in \{1, \ldots, m\}$.
\end{lemma}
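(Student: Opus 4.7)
The plan is to exhibit an explicit codeword of $\mathcal C_A$ whose weight equals $\mathrm{FB}(\mathcal C_A) = (q-l)q^{m-1}$; as the codewords are the evaluations at $\mathbb F_q^m$ of polynomials in $\mathbb F_q[A]$, this amounts to finding some $f \in \mathbb F_q[A]$ with exactly $l\, q^{m-1}$ zeros in $\mathbb F_q^m$.

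The natural candidate, modelled on the previous lemmas of this appendix, is
\[
f(X_1,\dots,X_m) \;=\; X_i^{\,l} - X_i \;=\; X_i\bigl(X_i^{\,l-1}-1\bigr).
\]
Since by hypothesis both $X_i$ and $X_i^{l}$ lie in $\mathbb F_q[A]$, we have $f \in \mathbb F_q[A]$. The key observation is that $f$ involves only the variable $X_i$, so its $\mathbb F_q^m$-zero locus is the product of $\{a \in \mathbb F_q \mid a^l = a\}$ in the $i$-th coordinate with $\mathbb F_q^{m-1}$ in the remaining coordinates. Thus it suffices to count the roots in $\mathbb F_q$ of the univariate polynomial $X^{l-1}-1$ and add the extra root $X=0$.

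For the univariate count I would invoke Lemma~\ref{FB-2} with the choice $j=0$ and exponent $l-1$: since $l-1$ divides $q-1$, we have $\gcd(l-1,q-1)=l-1$, which trivially divides $0$, so $X^{l-1}-1$ admits exactly $l-1$ roots in $\mathbb F_q^*$. Including the root $X=0$ of $X_i$ (which is distinct from those of $X_i^{l-1}-1$ because $-1\neq 0$), the polynomial $f$ vanishes on precisely $l$ values of $X_i$. Consequently
\[
\#\{\,P \in \mathbb F_q^m \mid f(P)=0\,\} \;=\; l\cdot q^{m-1},
\]
and the weight of the corresponding codeword is $q^m - l\, q^{m-1} = (q-l)q^{m-1} = \mathrm{FB}(\mathcal C_A)$. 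Combined with the footprint bound $d(\mathcal C_A)\geq \mathrm{FB}(\mathcal C_A)$ from Theorem~\ref{FB}, this forces equality.

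I do not expect any genuine obstacle here: the argument is a direct descendant of the preceding lemmas, the only subtlety being the handling of the extra factor $X_i$, which is precisely why the hypothesis $X_i \in \mathbb F_q[A]$ is added (as opposed to only requiring the monomial $X_i^l$ as in the earlier lemma treating the case where $l$ itself divides $q-1$). One should also remark that $l \geq 2$ in this setting, since $l=1$ would make $l-1=0$ and trivialise the construction; and $l < q$ is automatic from $\mathrm{FB}(\mathcal C_A) > 0$, guaranteeing $f$ evaluates to a nonzero codeword.
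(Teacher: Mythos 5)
Your proof is correct and follows essentially the same route as the paper, which also exhibits the codeword $\mathrm{ev}(f)$ with $f = X_i^l-\beta X_i = X_i(X_i^{l-1}-\beta)$ and counts its zeros via Lemma~\ref{FB-2}; you simply fix the admissible choice $\beta=1$ and spell out the extra zero at $X_i=0$, which the paper leaves implicit.
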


\begin{proof}
By hypothesis we can define the following polynomial in $\mathbb F_q[A]$:
$$f(X) = (X_i^l - \beta X_i) = X_i (X_i^{l-1}-\beta) \hbox{ for certain }\beta \in \mathbb F_q.$$
Then, by Lemma \ref{FB-2}, $f(X)$ has 
$lq^{m-1}$ $\mathbb F_q$-roots. That is, we have found a codeword of $\mathcal C_A$ of weight $\mathrm{FB}(\mathcal C_A)$, hence $d(\mathcal C_A) = \mathrm{FB}(\mathcal C_A)$.
\end{proof}

\begin{lemma}
Let $A\subseteq \lii 0,q-1\rii^m$ and $s\in \lii 0,q-1\rii $. If for all $f\in \mathbb F_q[A]$ we have that $X_1^s$ is a divisor of $f(X)$, then $d(\mathcal C_A) = d(\mathcal C_B)$ with $$B=\{ (i_1-s-1, i_2,\ldots, i_m)\mid (i_1, \ldots, i_m) \in A\}.$$
The result can be generalized to any other coordinate $X_i$ with $i=2, \ldots, m$.
\end{lemma}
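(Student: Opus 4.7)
The plan is to transport weights through the factorization forced by the hypothesis. The natural candidate is the division-by-$X_1$-power map, and the proof should decompose into (i) setting up a linear isomorphism between the two polynomial spaces, (ii) a pointwise comparison of evaluations, and (iii) the upgrade from a weight inequality to an equality of minimum distances.

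First, I would upgrade the hypothesis from a property of the span to a property of the indices: since $\mathbb F_q[A]$ has as $\mathbb F_q$-basis the monomials $X_1^{i_1}\cdots X_m^{i_m}$ indexed by $A$, the condition that $X_1^s$ divide every $f \in \mathbb F_q[A]$ forces each basis monomial to be divisible by $X_1^s$. Equivalently, every $(i_1,\ldots,i_m)\in A$ satisfies $i_1 \ge s$ (with the shift of $s+1$ absorbed into the definition of $B$ so that $B \subseteq \N^m$). This makes the division map
\[
\psi:\mathbb F_q[A] \longrightarrow \mathbb F_q[B],\qquad \psi(f) := f/X_1^{s+1},
\]
an $\mathbb F_q$-linear isomorphism between the two spaces of polynomials attached to $A$ and $B$.

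Second, I would analyse weights pointwise. For any $P=(x_1,\ldots,x_m)\in \mathbb F_q^m$ and any $f\in \mathbb F_q[A]$, writing $g=\psi(f)$ gives $f(P)=x_1^{s+1}\,g(P)$. Hence $f(P)=0$ automatically whenever $x_1=0$, so the $q^{m-1}$ coordinates of $\mathrm{ev}_{\mathcal P}(f)$ lying over the hyperplane $\{x_1=0\}$ all vanish; at the remaining $(q-1)q^{m-1}$ points $x_1^{s+1}$ is a unit of $\mathbb F_q$, so $f(P)\neq 0\iff g(P)\neq 0$. The upshot is the identity
\[
\mathrm{wt}(\mathrm{ev}_{\mathcal P}(f))\;=\;\#\{P\in\mathbb F_q^m\,:\,x_1\neq 0 \text{ and } g(P)\neq 0\},
\]
from which the bijection $\psi$ yields $d(\mathcal C_A)\le d(\mathcal C_B)$ immediately, since the right-hand side is always bounded above by $\mathrm{wt}(\mathrm{ev}_{\mathcal P}(g))$.

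The main obstacle is the reverse inequality $d(\mathcal C_A)\ge d(\mathcal C_B)$. A priori, a minimum-weight $g\in\mathbb F_q[B]$ could take nonzero values on the hyperplane $\{x_1=0\}$, and those positions would be invisible after pulling back to $\mathcal C_A$ through $X_1^{s+1}g$. To close the gap, I would argue that one can always replace such a $g$ by a polynomial $\widetilde g\in \mathbb F_q[B]$ of no greater weight whose $x_1=0$ slice vanishes; concretely, by using the particular shape of $B$ (and the off-by-one appearing in $i_1\mapsto i_1-s-1$) to subtract off the constant-in-$X_1$ part of $g$ while staying inside $\mathbb F_q[B]$. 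Verifying that this cancellation is possible within $\mathbb F_q[B]$, and that it genuinely forces the slice at $\{x_1=0\}$ to vanish without increasing the weight elsewhere, is the delicate step where the precise form of the shift defining $B$ is used, and it is the step I expect to demand the most care.
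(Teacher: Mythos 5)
Your steps (i)--(ii) follow the same mechanism as the paper (divide by a power of $X_1$ and compare weights through $\mathrm{ev}_{\mathcal P}$), and the inequality $d(\mC_A)\le d(\mC_B)$ you obtain is correct. But the step you postpone in (iii) is the entire content of the lemma, and with your normalization it cannot be completed: the replacement $\widetilde g$ need not exist in $\F_q[B]$. Concretely, take $q=3$, $m=1$, $A=\{2\}$, $s=1$: the hypothesis holds, your map divides by $X_1^{s+1}=X_1^{2}$, so $B=\{0\}$ and $\F_q[B]$ consists of the constants; no nonzero constant vanishes at $x_1=0$, and in fact $d(\mC_A)=2$ while $d(\mC_B)=3$. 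Thus, once you divide out the full power of $X_1$, the asserted equality of minimum distances fails in this instance, so no a posteriori surgery on minimum-weight codewords of $\mC_B$ can close the gap. (Also, your parenthetical claim that the shift by $s+1$ is ``absorbed'' into $B$ is unjustified: $i_1\ge s$ does not give $i_1-s-1\ge 0$.)

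The paper's own proof is a one-liner of a different nature: it factors $f=X_1^{s}g$ with $g$ supported on the shifted exponent set and asserts that $f$ and $g$ have exactly the same $\F_q$-roots, so the division map preserves the whole weight distribution and there is nothing special to argue about minimum-weight codewords. The hidden point --- the idea missing from your proposal --- is that this root-set identity holds precisely when the quotient $g$ is still divisible by $X_1$, because then $g$ already vanishes on the hyperplane $x_1=0$ and the zero sets of $f$ and $g$ coincide exactly. This also dictates how the off-by-one in the displayed definition of $B$ must be read: the shift has to leave at least one power of $X_1$ in every monomial (for instance $i_1\mapsto i_1-s+1$ under divisibility by $X_1^{s}$, or divisibility by $X_1^{s+1}$ combined with the shift $i_1\mapsto i_1-s$); dividing out everything, as you do, is exactly the situation in which the slice $\{x_1=0\}$ becomes visible and the equality can fail, as the counterexample shows. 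Your diagnosis of where the difficulty sits is accurate, but the correct resolution is to divide by one power less so that the problem never arises, not to modify $g$ afterwards.
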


\begin{proof}
By hypothesis every polynomial $f\in \mathbb F_q[A]$ can be written as $f=X_1^s g$ with $g\in \mathbb F_q[B]$. And both polynomials $f$ and $g$ have exactly the same number of $\mathbb F_q$-roots.
\end{proof}

\end{document}